\numberwithin{equation}{section}
\newcommand{\blind}{0}
\newcolumntype{R}{>{\raggedleft\arraybackslash}X}
\newcolumntype{C}{>{\centering\arraybackslash}X}
\DeclareMathOperator{\Var}{Var}
\DeclareMathOperator*{\argmin}{arg\,min}
\newcommand{\bfc}{\hat{\boldsymbol{f}_t}}
\newcommand{\bq}{\hat{\boldsymbol{q}}_t}
\newcommand{\be}{\hat{\boldsymbol{e}}_t}
\newcommand{\bW}{\boldsymbol{W}_t}
\newcommand\ubar[1]{\stackunder[1.2pt]{$#1$}{\rule{.8ex}{.075ex}}}
\newtheoremstyle{boldtitle}
{}
{}
{}
{}
{\bfseries}
{.}
{.5em}
{{\thmname{#1 }}{\thmnumber{#2}}{\thmnote{ (#3)}}}
\theoremstyle{boldtitle}
\newtheorem{theorem}{Theorem}[section]
\newtheorem{lemma}[theorem]{Lemma}
\newtheorem{definition}[theorem]{Definition}
\newtheorem{proposition}[theorem]{Proposition}
\newtheorem{example}[theorem]{Example}
\newtheorem{assumption}[theorem]{Assumption}
\begin{document}
	
	\def\spacingset#1{\renewcommand{\baselinestretch}%
		{#1}\small\normalsize} \spacingset{1}

	
	\if0\blind
	{
		\title{\bf Forecast Encompassing Tests for the Expected Shortfall}
		\author{Timo Dimitriadis\footnote{Corresponding Author. Schloß-Wolfsbrunnenweg 35, 69118 Heidelberg, Germany; Tel.  +49 176 47796302. 
}\vspace{.2cm}\\
			Heidelberg Institute for Theoretical Studies \\
			Institute of Economics, Universität Hohenheim \\
			\texttt{\href{mailto:timo.dimitriadis@h-its.org}{timo.dimitriadis@h-its.org}}
			\vspace{.5cm} \\
			and 
			\vspace{.5cm} \\
			Julie Schnaitmann\vspace{.2cm}\\
			Department of Economics, University of Konstanz \\
			\texttt{\href{mailto:julie.schnaitmann@uni-konstanz.de}{julie.schnaitmann@uni-konstanz.de}}
		}
		\maketitle
	} \fi
	
	\if1\blind
	{
		\bigskip
		\bigskip
		\bigskip
		\begin{center}
			{\LARGE\bf Forecast Encompassing Tests for the Expected Shortfall} \\
			\vspace{1cm}
			\today 
		\end{center}
		\medskip
	} \fi
	
	\bigskip
	\begin{abstract}
		We introduce new forecast encompassing tests for the risk measure Expected Shortfall (ES). 
		The ES currently receives much attention through its introduction into the Basel III Accords, which stipulate its use as the primary market risk measure for the international banking regulation.
		We utilize joint loss functions for the pair ES and Value at Risk to set up three ES encompassing test variants.
		The tests are built on misspecification robust asymptotic theory and we investigate the finite sample properties of the tests in an extensive simulation study.
		We use the encompassing tests to illustrate the potential of forecast combination methods for different financial assets.
	\end{abstract}
	
	\noindent%
	{\it Keywords:} evaluating forecasts, combining forecasts, loss function, model selection, statistical tests

	\spacingset{1.45} 
	
	\section{Introduction}

	Through the recent introduction of Expected Shortfall (ES) as the primary market risk measure for the international banking regulation in the Basel III Accords \citep{Basel2016, Basel2017}, there is a great demand for reliable methods for evaluating and comparing the predictive ability of competing ES forecasts.
	The ES at probability level $\alpha \in (0,1)$ is defined as the expectation of the returns smaller than the respective $\alpha$-quantile (the Value at Risk, VaR), where $\alpha$ is usually chosen to be 2.5\% as proposed by the Basel Accords.
	The ES is replacing the VaR in the banking regulation as it overcomes several shortcomings of the latter such as being not coherent and its inability to capture tail risks beyond the $\alpha$-quantile \citep{Artzner1999,Danielsson2001,Basel2013}. 
	While the empirical properties favor the ES over the VaR as a risk measure, the ES lacks elicitability, which implies that no strictly consistent loss functions exist.
	The non-elicitability of the ES is overcome by considering the pair VaR and ES which are \textit{jointly elicitable}, i.e.\ there exist joint loss functions for the VaR and the ES \citep{Fissler2016}.
	This discovery triggered a rapidly growing branch of literature in developing forecasting methods and forecast evaluation techniques for the ES, see \cite{Patton2019}, \cite{DimiBayer2019}, \cite{BayerDimi2019}, \cite{Taylor2019}, \cite{Barendse2020},  \cite{FisslerZiegelGneiting2016} and \cite{Nolde2017} among others.

	A desirable tool for the comparison of ES forecasts are encompassing tests, which however build upon the existence of strictly consistent loss functions.
	Given two competing forecasts A and B, forecast encompassing tests the null hypothesis that forecast A \textit{performs not worse} than any (linear) combination of these forecasts. 
	This is carried out by testing whether the optimal combination weight of forecast B deviates significantly from zero.\footnote{For the classical theory on forecast encompassing see \cite{HendryRichard1982}, \cite{MizonRichard1986}, \cite{Diebold1989}, \cite{Ericsson1993}, \cite{harvey1998tests}, \cite{clark2001tests}, \cite{GiacominiKomunjer2005}, \cite{NewboldHarvey2007} and \cite{ClementsHarvey2009} among others.}
	This null hypothesis allows for the convenient interpretation that forecast B does not add any information to forecast A and thus, forecast A is superior to forecast B. 
	The existence of appropriate loss functions is inevitable for encompassing tests for two reasons.
	First, the \textit{superior performance} of competing forecasts is defined in the statistical sense by using strictly consistent loss functions.
	Second, loss and identification functions are crucial for M- or GMM-estimation of the optimal forecast combination weights through an appropriate regression framework for the risk measure under consideration.
	
	In this paper, we introduce novel encompassing tests for the ES based on the joint loss functions for the VaR and ES developed in \cite{Fissler2016}.
	We introduce the following three test variants for the ES.
	First, we propose to jointly test forecast encompassing for the VaR and ES, henceforth denoted the \textit{joint VaR and ES encompassing test}.
	We introduce a second test variant, denoted the  \textit{auxiliary ES encompassing test}, which estimates the optimal combination weights for the vector of the VaR and ES, however, only tests the parameters associated with the ES.
	While incorporating both, VaR and ES forecasts, this variant only tests encompassing of the ES forecasts.
	The third variant overcomes the tests' dependence on VaR forecasts and tests encompassing of competing ES forecasts stand-alone, which comes at the cost of a potential model misspecification.
	We henceforth call this test the \textit{strict ES encompassing test}.
	This variant is particularly relevant due to the current set of rules established by the Basel Committee of Banking Supervision, which only imposes the financial institutions to report ES forecasts \citep{Basel2016, Basel2017}.
	Only this test variant can be applied in situations where the person evaluating the forecasts merely has forecasts for the ES at hand. 	
	However, in situations where both, the VaR and ES forecasts (stemming from the same model or forecasting procedure) are available, application of the joint or auxiliary tests is generally recommended.
	
	We implement the encompassing tests through M-estimation of the optimal combination weights \citep{Patton2019, DimiBayer2019} and in an environment with asymptotically non-vanishing estimation uncertainty of the forecasting procedures \citep{GiacominiKomunjer2005, GiacominiWhite2006}.
	As the strict ES encompassing test is potentially subject to model misspecification, 
	we derive the asymptotic distribution of the test statistics in a general setting which allows for misspecified models.
	This generalizes the asymptotic theory of \cite{Patton2019}, \cite{DimiBayer2019} and \cite{BayerDimi2019} to potentially misspecified (and nonlinear) models.
	We base the Wald test statistics of the encompassing tests on a misspecification-robust covariance estimator.
	Our implementation further introduces a \textit{link} or \textit{combination function} which captures the different linear and nonlinear forecast combination methods in the existing encompassing testing literature, see \cite{ClementsHarvey2009} and \cite{ClementsHarvey2010} among others.
	
	We analyze the finite sample behavior of our encompassing tests and the effect of the potential model misspecification in an extensive simulation study using models from various model classes associated with the ES.
	For this, we consider classical GARCH models, the GAS (generalized auto-regressive score) models with time-varying higher moments of \cite{Creal2013}, the GAS models for the VaR and ES of \cite{Patton2019} and the ES-CAViaR models of \cite{Taylor2019}.
	Data stemming from the latter three model classes induces some model misspecification for the strict ES encompassing test, which allows us to evaluate the effect the misspecification has on our tests.
	We find that all tests exhibit approximately correct size and good power properties for all considered simulations.
	This also holds for the strict ES encompassing test which demonstrates that this test is robust to the degree of model misspecification we usually encounter in financial applications.
	
	Tests for forecast encompassing are commonly used to establish a theoretical basis for forecast combinations in cases when encompassing is rejected for both forecasts \citep{ClementsHarvey2009, NewboldHarvey2007, GiacominiKomunjer2005}.
	This implies that neither of the forecasts stand-alone performs as good as an optimal forecast combination, which indicates that a forecast combination incorporates more information than the individual forecasts.
	\cite{GiacominiKomunjer2005}, \cite{Timmermann2006}, \cite{Halbleib2012} and \cite{Taylor2019FCcomb} advocate general forecast combination methods for multiple reasons and particularly for risk measures with small probability levels, as it is customary for the VaR and the ES.
	
	We apply our encompassing tests to ES forecasts from classical GARCH and GAS models, but also from the recently developed dynamic ES models of \cite{Taylor2019} and \cite{Patton2019} for daily returns of the IBM stock, the S$\&$P 500 and the DAX 30 indices.
	The test results imply that for the IBM stock, forecast combination methods outperform the stand-alone forecasting models in many instances.
	In comparison, this pattern seems to be less pronounced for the S$\&$P 500 and the DAX 30 indices, which are already well diversified through their versatile composition.
	Thus, classical diversification gains \citep{Timmermann2006} of forecast combination methods might be less pronounced for stock indices.
	The two ES based test variants exhibits very similar results, which further indicates that the strict ES test is robust against potential misspecifications in financial settings.

	The classical idea of forecast encompassing goes back to \cite{HendryRichard1982}, \cite{ChongHendry1986} and \cite{MizonRichard1986} and is developed for mean forecasts under the squared loss function.
	Broad reviews on encompassing testing are provided e.g.\ by \cite{NewboldHarvey2007} and \cite{ClementsHarvey2009}.
	\cite{HarveyNewbold2000} extend the encompassing technique which classically focuses on two competing forecasts to encompassing of multiple forecasts.
	\cite{GiacominiKomunjer2005} develop (conditional) encompassing of quantile forecasts and focus on encompassing tests for \textit{methods} instead of \textit{models}.
	\cite{ClementsHarvey2010} generalize encompassing tests to probabilistic forecasts by relying on strictly consistent scoring rules.
	\cite{GiacominiKomunjer2005} and \cite{ClementsHarvey2010} investigate extensions of encompassing to more complicated functionals of the conditional distribution.
	Our work pursues this path by developing encompassing tests for the ES as a prominent example of higher-order elicitable functionals where only joint loss functions for vector-valued functionals are available.
	Our testing approach can be adapted to further higher-order elicitable functionals such as the pair mean, variance and the Range Value at Risk \citep{Cont2010, Embrechts2018, FisslerZiegel2019}.
	
	The rest of the paper is organized as follows. 
	In Section \ref{sec:Theory}, we introduce encompassing tests for the ES and derive the asymptotic distribution of the associated test statistics under model misspecification.
	Section \ref{sec:SimulationStudy} presents an extensive simulation study analyzing the size and power properties of our tests.
	In Section \ref{sec:EmpiricalApplication}, we apply the testing procedure to daily financial returns of the IBM stock and the S\&P 500 and DAX 30 indices and Section \ref{sec:Conclusion} concludes.
	All proofs are deferred to Appendix \ref{sec:Proofs}.
	Technical details of the proofs and additional results are provided in the supplementary material.

	\section{Theory}
	\label{sec:Theory}
	
	We consider a stochastic process $Z = \left\{ Z_t: \Omega \to \mathbb{R}^{l+1}, l \in \mathbb{N}, t= 1,\dots,T \right\}$, which is defined on some common and complete probability space $(\Omega, \mathcal{F}, \mathbb{P})$, where $\mathcal{F} = \left\{ \mathcal{F}_t, t = 1, \dots, T \right\}$ and $\mathcal{F}_t = \sigma \left\{ Z_s, s\le t \right\}$.
	We partition the stochastic process as $Z_t = (Y_t,X_t)$, where $Y_t: \Omega \to \mathbb{R}$ is an absolutely continuous random variable of interest and $X_t: \Omega \to \mathbb{R}^{l}$ is a vector of explanatory variables.
	We denote the conditional distribution of $Y_{t+1}$ given the information set $\mathcal{F}_t$ by $F_t$.
	Accordingly, $\mathbb{E}_t$, $\Var_t$ and $h_t$ denote the expectation, variance and density corresponding to $F_t$.
	Following \cite{GiacominiKomunjer2005}, we consider ($\mathcal{F}_t$-measurable) one-step ahead forecasts, henceforth denoted by $\hat f_{t}, \hat q_{t}$ and $\hat e_{t}$, which are generated by a function $f \big( \gamma_{t,m}, Z_t,Z_{t-1},\dots \big)$, which is fixed over time.
	For this, $\gamma_{t,m}$ denotes the (estimated) model parameters at time $t$ or alternatively the semi- or non-parametric estimator used in the construction of the forecasts.
	This construction allows for both, fixed forecasting schemes, where the model parameters $\gamma_{t,m}$ are only estimated once, and rolling window forecasting schemes, where the parameters $\gamma_{t,m}$ are re-estimated in each step.	
	We denote general competing forecasts by $\bfc = (\hat f_{1,t}, \hat f_{2,t})$, specific VaR (quantile) forecasts by $\bq = (\hat q_{1,t}, \hat q_{2,t})$ and ES forecasts by $\be = (\hat e_{1,t}, \hat e_{2,t})$.
	
	In the context of evaluating point forecasts, an important property of risk measures (or more general statistical functionals) is \textit{elicitability} \citep{Gneiting2011}.
	Elicitability means that there exist strictly consistent loss functions, i.e.\ loss functions $\rho(Y,f)$ depending on the random variable $Y \sim F$ and the issued forecast $f$, whose expectation $\mathbb{E} \left[ \rho(Y, \cdot) \right]$ is uniquely minimized by the true risk measure
	$\Gamma(F)$.
	Using such a loss function, one can assess the quality of issued forecasts by comparing their average losses induced by the realizations of the predicted variable.
	Evaluating forecasts through strictly consistent loss functions has the desired impact that it incentivizes financial institutions to truthfully report their correct forecasts \citep{Gneiting2011, FisslerZiegelGneiting2016}.
	As a direct consequence, the literature on tests for forecast comparison and forecast rationality evolves around the associated loss functions, see \cite{MizonRichard1986}, \cite{DieboldMariano1995}, \cite{ElliottKomunjerTimmermann2005}, \cite{GiacominiKomunjer2005}, \cite{GiacominiWhite2006}, \cite{PattonTimmermann2007}, \cite{ClementsHarvey2010}, \cite{Gneiting2011} and \cite{Patton2011} among many others.

	Many important statistical functionals such as the variance, the ES, the minimum, the maximum and the mode are not elicitable, i.e.\ no strictly consistent loss functions exist \citep{Gneiting2011, Heinrich2014, Fissler2016}.
	This deficiency calls for generalized approaches in many academic disciplines.
	We built our test procedure for the ES on such an approach, which considers multiple functionals stacked as vectors and considers joint elicitability.
	\cite{Fissler2016} show that the ES is jointly elicitable with the VaR by constructing strictly consistent joint loss functions for this pair, which we utilize in our encompassing approach.
	
	In the following section, we formally introduce the concept of forecast encompassing in the classical case of one-dimensional, real-valued and elicitable functionals.
	Subsequently, we make use of the higher-order elicitability of the ES and generalize the encompassing approach to ES forecasts in Section \ref{sec:EncompassingTestsES}.

	\subsection{The Encompassing Principle}
	\label{sec:EncompassingPrinciple}
	
	Following e.g. \cite{HendryRichard1982}, \cite{MizonRichard1986}, \cite{Diebold1989} and \cite{GiacominiKomunjer2005}, we formally introduce the classical concept of linear forecast encompassing for one-dimensional, real-valued and elicitable functionals.
	We assume that two competing forecasters predict the variable of interest $Y_{t+1}$ and issue one-step ahead point forecasts  $\bfc = \big( \hat f_{1,t}, \hat f_{2,t} \big)$ for a given functional $\Gamma(F_t)$.\footnote{While we focus our approach on one-step ahead forecasts, extensions to multi-step ahead forecasts are straight-forward by employing a HAC-type estimator for the asymptotic covariance.} 
	In order to conduct the forecast evaluation in an out-of-sample fashion, we divide the sample size $T$ in an in-sample part of size $m$ and an out-of-sample part of size $n$ such that $T = m+n$.
	The in-sample period is used to generate the forecasts $\hat f_{1,t}$ and $\hat f_{2,t}$ as described in the beginning of Section \ref{sec:Theory}, while the out-of-sample period is used for the evaluation of the forecasts.
	This procedure poses little restrictions on how to generate the forecasts and allows for parametric, semiparametric or nonparametric techniques and for nested and non-nested forecasting procedures \citep{GiacominiKomunjer2005}.

	Let $\rho \big(Y_{t+1}, \hat f_t \big)$ be a strictly consistent loss function for $\Gamma(\cdot)$.
	Then, we say that forecast $\hat f_{1,t}$ encompasses $\hat f_{2,t}$ at time $t$, if
	\begin{align}
	\label{eqn:GeneralFcEncpConditionH0}
	\mathbb{E} \left[ \rho \big(Y_{t+1}, \hat f_{1,t} \big) \right]
	\le 
	\mathbb{E} \left[ \rho \big(Y_{t+1}, \theta_{1} \hat f_{1,t} + \theta_{2} \hat f_{2,t} \big) \right],
	\end{align}
	for all $\big(\theta_{1},\theta_{2} \big) \in \Theta \subseteq{\mathbb{R}^2}$.
	Equation (\ref{eqn:GeneralFcEncpConditionH0}) implies that, in terms of the loss induced by $\rho$, the forecast $\hat f_{1,t}$ is at least as good as any (linear) combination of $\hat f_{1,t}$ and $\hat f_{2,t}$.
	Hence, forecast $\hat f_{2,t}$ does not add any information on $Y_{t+1}$ which is not already incorporated in $\hat f_{1,t}$.
	We define $\big( \theta^\ast_{1}, \theta^\ast_{2} \big)$ as the optimal combination parameters which minimize the expected loss,
	\begin{align}
	\label{eqn:ThetaStarDef}
	\big( \theta^\ast_{1}, \theta^\ast_{2} \big) = \underset{(\theta_{1},\theta_{2})  \in \Theta}{\operatorname{arg \, min}} \; \mathbb{E} \left[ \rho \big(Y_{t+1}, \theta_{1} \hat f_{1,t} + \theta_{2} \hat f_{2,t} \big) \right].
	\end{align}
	By definition, it holds that $\mathbb{E} \left[ \rho \big(Y_{t+1}, \theta_{1} \hat f_{1,t} + \theta_{2} \hat f_{2,t} \big) \right]
	\ge
	\mathbb{E} \left[ \rho \big(Y_{t+1}, \theta^\ast_{1} \hat f_{1,t} + \theta^\ast_{2} \hat f_{2,t} \big) \right]$
	for all $\big(\theta_{1},\theta_{2} \big) \in \Theta$.
	In particular, this implies that
	\begin{align}
	\label{eqn:GeneralFcEncpConditionMEstimation}
	\mathbb{E} \left[ \rho \big(Y_{t+1}, \hat f_{1,t} \big) \right]
	\ge
	\mathbb{E} \left[ \rho \big(Y_{t+1}, \theta^\ast_{1} \hat f_{1,t} + \theta^\ast_{2} \hat f_{2,t} \big) \right].
	\end{align}
	Combining (\ref{eqn:GeneralFcEncpConditionH0}) and (\ref{eqn:GeneralFcEncpConditionMEstimation}) yields the following definition of forecast encompassing.
	\begin{definition}[Linear Forecast Encompassing for Elicitable Functionals]
		\label{def:FcEnc1dim}
		We say that forecast $\hat f_{1,t}$ encompasses $\hat f_{2,t}$ at time $t$  with respect to the loss function $\rho$
		if and only if
		\begin{align}
		\label{eqn:UncondEncomp}
		\mathbb{E} \left[ \rho \big(Y_{t+1}, \hat f_{1,t} \big) \right]
		=
		\mathbb{E} \left[ \rho \big(Y_{t+1}, \theta^\ast_{1} \hat f_{1,t} + \theta^\ast_{2} \hat f_{2,t} \big) \right],
		\end{align}
		which is equivalent to $\big( \theta^\ast_{1}, \theta^\ast_{2} \big) = \big(1, 0\big)$.
	\end{definition}
	
	Tests for forecast encompassing are carried out through the following steps.
	First, we regress the realizations $Y_{t+1}$ onto the forecasts $\hat f_{1,t}$ and $\hat f_{2,t}$ using an appropriate regression technique for the functional under consideration in order to obtain the estimated combination (or encompassing) parameters $\hat \theta_n$ and their asymptotic distribution.
	Then, we test whether these parameters equal one and zero respectively.

	As discussed e.g.\ in \cite{ClementsHarvey2009} and  \cite{ClementsHarvey2010}, there exist several different testing specifications available for the encompassing principle, which differ in terms of the admissible specifications of the linear (or nonlinear) forecast combination formula.
	We generalize  and unify these approaches by introducing a general \textit{link} or \textit{combination function},
	\begin{align}
	\label{eqn:DefLinkFct}
	g: \mathfrak{F} \times \Theta \to \mathbb{R}, \qquad (\bfc, \theta) \mapsto g(\bfc,\theta),
	\end{align}
	which maps the forecasts and the respective parameters onto a linear or nonlinear forecast combination and where $\mathfrak{F}$ denotes the random space of the issued forecasts.
	For this, the function $g$ and the parameter space $\Theta$ have to be chosen such that there exists a $\theta_0 \in \Theta$, such that $g(\bfc,\theta_0) = \hat f_{1,t}$ almost surely, which enables testing whether $\hat f_{1,t}$ alone captures the full information provided by any forecast combination through testing the parametric restriction $\theta^\ast = \theta_0$.
	\begin{definition}[General Forecast Encompassing for Elicitable Functionals]
		\label{def:GeneralFcEnc1dim}
		We say that forecast $\hat f_{1,t}$ encompasses $\hat f_{2,t}$ at time $t$ with respect to the loss function $\rho$ and with respect to the link function $g$ if and only if 
		\begin{align}
		\label{eqn:GeneralFcEncpConditionH0LinkFct}
		\mathbb{E} \left[ \rho \big(Y_{t+1}, \hat f_{1,t} \big) \right]
		=
		\mathbb{E} \left[ \rho \big(Y_{t+1}, g( \bfc, \theta^\ast) \big) \right],
		\end{align}
		which is equivalent to $\theta^\ast = \theta_0$.
	\end{definition}
	This general definition unifies the following existing specifications of forecast encompassing, but also allows for more general linear and nonlinear specifications, see e.g.\ \cite{Ericsson1993}, \cite{ClementsHarvey2009} and  \cite{ClementsHarvey2010}.
	\begin{example}
		\label{exmpl:LinkFunctions}
		Prominent examples for linear and nonlinear forecast encompassing are the following link functions and associated null hypotheses,
		\begin{enumerate}[label={(\arabic*)}]
			\item 
			$g( \bfc, \theta) = \theta_1 + \theta_2 \hat f_{1,t} + \theta_3 \hat f_{2,t}$ and  $\mathbb{H}_0: (\theta_2^\ast, \theta_3^\ast) = (1,0)$ or $\mathbb{H}_0:  (\theta_1^\ast, \theta_2^\ast, \theta_3^\ast) = (0,1,0)$,
			
			\item 
			$g( \bfc, \theta) = \theta_1 + \theta_2 \hat f_{1,t} + (1-\theta_2) \hat f_{2,t}$ and  $\mathbb{H}_0:  \theta_2^\ast = 1$ or $\mathbb{H}_0:  (\theta_1^\ast, \theta_2^\ast) = (0,1)$,
			
			\item 
			$g( \bfc, \theta) = \theta_1 + \hat f_{1,t} + \theta_2 \hat f_{2,t}$ and  $\mathbb{H}_0: \theta_2^\ast = 0$ or  $\mathbb{H}_0: (\theta_1^\ast, \theta_2^\ast) = (0,0)$,
			
			\item 
			$g( \bfc, \theta) = \theta_1 \hat f_{1,t} + \theta_2 \hat f_{2,t}$ and $\mathbb{H}_0: (\theta_1^\ast, \theta_2^\ast) = (1,0)$,
			
			\item 
			$g( \bfc, \theta) = \theta_1 \hat f_{1,t} + (1-\theta_1) \hat f_{2,t}$ and  $\mathbb{H}_0: \theta_1^\ast = 1$,
			
			\item 
			$g( \bfc, \theta) = \hat f_{1,t} + \theta_1 \hat f_{2,t}$ and $\mathbb{H}_0: \theta_1^\ast = 0$,
			
			\item 
			$g( \bfc, \theta) = \theta_1 \pm \exp \big( \theta_2 \log(\pm \hat f_{1,t}) + \theta_3 \log(\pm \hat f_{2,t}) \big)$ and $\mathbb{H}_0: (\theta_2^\ast, \theta_3^\ast) = (1,0)$.
		\end{enumerate}
	\end{example}

	\subsection{Forecast Encompassing for the Expected Shortfall}
	\label{sec:EncompassingTestsES}
	
	In this section, we consider encompassing tests for the ES. 
	For absolutely continuous distributions $F_t$, the ES is formally defined as
	\begin{align}
	ES_{t,\alpha} ( Y_{t+1} ) = \mathbb{E}_t \left[ Y_{t+1} | Y_{t+1} \le Q_{t,\alpha}(Y_{t+1}) \right],
	\end{align}
	where $Q_{t,\alpha}(Y_{t+1})$ denotes the conditional $\alpha$-quantile of $Y_{t+1}$ given $\mathcal{F}_t$.
	As discussed in the previous section, the main ingredient of forecast encompassing tests is the specification of the underlying loss function, which has to be associated with the risk measures we consider forecasts for.
	As such loss functions do not exist for the ES stand-alone, we utilize a strictly consistent joint loss function for the pair consisting of the ES and the VaR, given by \cite{Fissler2016} as
	\begin{align} 
	\begin{aligned}
	\label{eqn:JointLossESReg0Hom}
	\rho(Y,q_\alpha,e_\alpha) = - \frac{1}{e_\alpha} \left( e_\alpha - q_\alpha + \frac{(q_\alpha - Y) \mathds{1}_{\{Y \le q_\alpha\}}}{\alpha}  \right) + \log(-e_\alpha),
	\end{aligned}
	\end{align}	
	where the arguments $Y$, $q_\alpha$ and $e_\alpha$ denote the return realization, the quantile and the ES respectively.
	As this loss function exhibits the desirable property of having loss differences which are homogeneous of order zero, it is often denoted as the FZ0-loss function, see e.g. \cite{Patton2019}.
	While there exist infinitely many strictly consistent loss functions for the pair VaR and ES, the recent literature seems to agree upon this choice: \cite{DimiBayer2019} find that it exhibits a stable numerical performance in M-estimation and empirically yields relatively efficient parameter estimates.
	\cite{Nolde2017} discuss the desirable property of homogeneity of these loss functions and \cite{Patton2019}, \cite{BayerDimi2019} and \cite{Taylor2019} use this loss function to estimate dynamic ES models.

	Following the specification of a link function in (\ref{eqn:DefLinkFct}), we introduce the quantile- and ES-specific link functions
	\begin{alignat}{3}
	\label{eqn:DefLinkFctQES}
	&g^q: \mathfrak{Q} \times \Theta^\beta \to \mathbb{R}&, \qquad &(\bq, \beta) \mapsto g^q(\bq, \beta), \\
	&g^e: \mathfrak{E} \times \Theta^\eta \to \mathbb{R}&, \qquad &(\be, \eta) \mapsto g^e(\be, \eta),
	\end{alignat}
	where $ \mathfrak{Q}$ and $\mathfrak{E}$ denote the random spaces of the VaR and ES forecasts, $\Theta^\beta \subseteq \mathbb{R}^{k_\beta}$ and  $\Theta^\eta \subseteq \mathbb{R}^{k_\eta}$ such that $\Theta = \Theta^\beta \times \Theta^\eta$ and $k_\beta + k_\eta  = k \in \mathbb{N}$.
	We assume that the functions $g^q$, $g^e$ and the parameter space $\Theta$ are chosen such that there exist values $\beta_0 \in \Theta^\beta$ and $\eta_0 \in \Theta^\eta$, such that $g^q(\bq, \beta_0) = \hat q_{1,t}$ and $g^e(\be, \eta_0) = \hat e_{1,t}$ almost surely.
	
	In the following, we introduce the concept of joint forecast encompassing for the pair consisting of the VaR and the ES.
	Analogously to (\ref{eqn:ThetaStarDef}), we define the optimal combination parameters for the VaR and ES as
	\begin{align}
		\label{eqn:ThetaStarQESDef}
		\theta^\ast = (\beta^\ast, \eta^\ast)
		= \underset{(\beta, \eta)  \in \Theta}{\operatorname{arg \, min}} \; \mathbb{E} \left[ \rho \big(Y_{t+1}, g^q(\bq, \beta), g^e(\be, \eta) \big) \right].
	\end{align}
	
	\begin{definition}[Joint VaR and ES Forecast Encompassing]
		\label{def:QuantileESEncompassing}
		Let $\big( \hat q_{1,t}, \hat e_{1,t}\big)$ and $\big( \hat q_{2,t}, \hat e_{2,t}\big)$ denote pair-wise competing forecasts for the pair consisting of the conditional quantile and ES of $F_t$.
		We say that $\big( \hat q_{1,t}, \hat e_{1,t}\big)$  encompasses $\big( \hat q_{2,t}, \hat e_{2,t}\big)$  at time $t$ with respect to the link functions $g^q$ and $g^e$ if and only if
		\begin{align}
		\label{eqn:UncondEncompVaRES}
		\mathbb{E} \left[ \rho \big( Y_{t+1}, \hat q_{1,t}, \hat e_{1,t} \big) \right]
		= \mathbb{E} \left[ \rho  \big( Y_{t+1}, g^q(\bq, \beta^\ast), g^e(\be, \eta^\ast) \big) \right],
		\end{align}
		where the loss function $\rho$ is given in (\ref{eqn:JointLossESReg0Hom}).
		This holds if and only if $\big( \beta^\ast, \eta^\ast \big) = \big( \beta_0, \eta_0 \big)$.
	\end{definition}
	
	We test whether the sequence of joint quantile and ES forecasts $\big( \hat q_{1,t}, \hat e_{1,t}\big)$ encompasses the sequence  $\big( \hat q_{2,t}, \hat e_{2,t}\big)$ for all $t=m,\dots,T-1$ by estimating the parameters of the following semiparametric regression,
	\begin{align}
	\begin{aligned}
	\label{eqn:JointRegressionJointESEncTest}
	Y_{t+1} = g^q(\bq, \beta) + u_{t+1}^q, \qquad \text{ and } \qquad
	Y_{t+1} = g^e(\be, \eta) + u_{t+1}^e,
	\end{aligned}
	\end{align}
	where $Q_\alpha(u_{t+1}^q|\mathcal{F}_t) = 0$ and $ES_\alpha(u_{t+1}^e|\mathcal{F}_t) = 0$ almost surely  for all $t=m,\dots,T-1$ by using the M-estimation technique introduced in \cite{Patton2019} and \cite{DimiBayer2019}.
	We then test for $\big( \beta^\ast, \eta^\ast \big) = \big( \beta_0, \eta_0 \big)$ using a Wald type test statistic.
	
	Definition \ref{def:QuantileESEncompassing} develops a \textit{joint} encompassing test for the VaR and ES, which is reasonable given the joint elicitability property of the VaR and ES.
	However, a further objective of this paper is to construct encompassing tests for the ES stand-alone, which we do in the following.
	\begin{definition}[Auxiliary ES Forecast Encompassing]
		\label{def:AuxESEncompassingTest}
		Let $\big( \hat q_{1,t}, \hat e_{1,t}\big)$ and $\big( \hat q_{2,t}, \hat e_{2,t}\big)$ denote competing forecasts for the pair consisting of the conditional quantile and ES of $F_t$.
		We say that $\hat e_{1,t}$ \textit{auxiliarily encompasses} $\hat e_{2,t}$ at time $t$ with respect to the link functions $g^q$ and $g^e$ if and only if
		\begin{align}
			\label{eqn:AuxEncompassing}
		\mathbb{E} \left[ \rho \big( Y_{t+1}, g^q(\bq, \beta^\ast), \hat e_{1,t} \big) \right]
		= \mathbb{E} \left[ \rho \big( Y_{t+1}, g^q(\bq, \beta^\ast), g^e(\be, \eta^\ast) \big) \right],
		\end{align}
		that is, if and only if $\eta^\ast = \eta_0$.\footnote{It is important to notice that the optimal combination parameter $\beta^\ast$ on the left-hand side of (\ref{eqn:AuxEncompassing}) is given by (\ref{eqn:ThetaStarQESDef}) and not in the sense of an optimal combination parameter of a \textit{restricted} model.}
	\end{definition}
	This parameter restriction is tested using a Wald type test statistic based on the estimates of the regression setup given in (\ref{eqn:JointRegressionJointESEncTest}).
	As we do not test the quantile specific parameters $\beta^\ast$, we do not impose that the underlying quantile forecast also encompasses its competitor under this null hypothesis.
	Hence, even though this test is based on the joint regression, it only tests encompassing of the ES forecasts.
	We call this test \textit{auxiliary ES encompassing test} as it still depends on the auxiliary quantile forecasts which are used for the estimation of the optimal combination parameters.
	
	Given that both, the VaR and ES forecasts are available, application of either the joint or auxiliary test is the most plausible approach given their joint elicitability.
	However, even though the emphasis of the auxiliary encompassing test is on the ES, it still requires quantile forecasts for the implementation of the parameter estimation.
	This can be problematic for two reasons.
	First, the quantile forecasts are still used in the estimation procedure and thus have an indirect effect on the parameter estimates of the ES specific parameters.
	E.g., the previous tests are not applicable for ES forecasts which are based on the same VaR forecasts, as this implies perfect collinearity of the quantile regressors.
	Second, the auxiliary test is only applicable in the setup where the person applying the test has access to the quantile forecasts.
	In the current implementation of the regulatory framework of the Basel Committee \citep{Basel2016, Basel2017}, the banks are only obligated to report their ES forecasts (at probability level $2.5\%$), but not the corresponding VaR forecasts.
	Thus, the accompanying VaR forecasts, which the ES forecasts are internally based on, are in general not available to the regulator who has to decide on an adequate risk management of the financial institution at hand. 
	
	In order to account for these scenarios, we further introduce the \textit{strict ES encompassing test}, which only requires ES forecasts in the following.
	For this, we slightly modify the definition of (\ref{eqn:ThetaStarQESDef}) by replacing $g^q(\bq, \beta)$ through $g^q(\be, \beta)$,\footnote{Note that the parameters denoted $\theta^\ast$ in (\ref{eqn:ThetaStarQESDef}) and in (\ref{eqn:ThetaStarStrictESDef}) can generally differ.}
	\begin{align}
		\label{eqn:ThetaStarStrictESDef}
		\theta^\ast = (\beta^\ast, \eta^\ast)
		= \underset{(\beta, \eta)  \in \Theta}{\operatorname{arg \, min}} \; \mathbb{E} \left[ \rho \big(Y_{t+1}, g^q(\be, \beta), g^e(\be, \eta) \big) \right].
	\end{align}
	\begin{definition}[Strict ES Forecast Encompassing]
		\label{def:StrictESEncompassing}
		Let $\hat e_{1,t}$ and $\hat e_{2,t}$ denote competing ES forecasts of the underlying predictive distribution $F_t$.
		We say that $\hat e_{1,t}$ strictly encompasses $ \hat e_{2,t}$ at time $t$ with respect to the link functions $g^q$ and $g^e$ if and only if
		\begin{align}
		\label{eqn:StrictEncompassing}
		\mathbb{E} \left[ \rho \big( Y_{t+1}, g^q(\be, \beta^\ast), \hat e_{1,t} \big) \right]
		= \mathbb{E} \left[ \rho \big( Y_{t+1}, g^q(\be, \beta^\ast), g^e(\be, \eta^\ast) \big) \right],
		\end{align}
		that is, if and only if $\eta^\ast = \eta_0$.
	\end{definition}
	
	We test whether $\hat e_{1,t}$ strictly encompasses $\hat e_{2,t}$ for all $t=m,\dots,T-1$ by setting up the slightly transformed regression
	\begin{align}
	\begin{aligned}
	\label{eqn:JointRegressionMisspecifiedESEncTest}
	Y_{t+1} = g^q(\be, \beta) + u_{t+1}^q, \qquad \text{ and } \qquad
	Y_{t+1} = g^e(\be, \eta) + u_{t+1}^e,
	\end{aligned}
	\end{align}
	where $Q_\alpha(u_{t+1}^q|\mathcal{F}_t) = 0$ and $ES_\alpha(u_{t+1}^e|\mathcal{F}_t) = 0$ almost surely  for all $t=m,\dots,T-1$.
	The crucial difference between this test and the joint and auxiliary encompassing tests is that instead of using the quantile forecasts $\bq$ in the quantile link function $g^q$, we use the ES forecasts $\be$ for both, the quantile and ES link functions $g^q$ and $g^e$.
	We argue that this can be seen as a best feasible solution due to the lack of loss functions for the ES stand-alone together with the necessity of developing forecast evaluation methods for the ES stand-alone due to the current setup of the Basel III regulatory framework \citep{Basel2016, Basel2017}.
	
	The underlying idea of this test is mainly motivated by pure scale models, i.e.\ $Y_{t+1} = \sigma_{t} u_{t+1}$, $u_{t+1} \sim F(0,1)$, which is still the most frequently used class of models for risk management with the GARCH and stochastic volatility models as prime examples.
	For this model class, the VaR and ES forecasts are perfectly colinear, $\hat e_t = \frac{\xi_\alpha}{z_\alpha} \hat q_t$, where $z_\alpha$ and $\xi_\alpha$ are the $\alpha$-quantile and $\alpha$-ES of the distribution $F(0,1)$.
	Hence, the quantile model $g^q(\be, \beta) = g^q(\bq \xi_\alpha / z_\alpha, \beta) = g^q(\bq, \tilde \beta)$ is correctly specified, but with transformed quantile parameters $\tilde \beta$.\footnote{For the prominent case of linear encompassing link formulas $g^q(\cdot)$, it holds that $\tilde \beta = \beta z_\alpha / \xi_\alpha $.}
	As we only test on the ES-specific parameters $\eta$ as described in Definition \ref{def:StrictESEncompassing}, our test is invariant to this (often linear) transformation of the parameter $\beta$ and thus, it is correctly specified for pure scale models.

	In the general case, the quantile equation can possibly be misspecified.
	Thus, we provide asymptotic theory under general model misspecification for the M-estimator in the following section.
	The potential model misspecification might bias the pseudo-true parameters and challenge the interpretability of the test decision, but we argue that this effect is negligible for this setup.
	First, the misspecification is only \textit{slight} in the sense that daily financial return data is approximated well by pure scale processes.
	Second, the misspecification is \textit{indirect} in the sense that while the quantile parameters are potentially misspecified, we only test the ES parameters, which are influenced by the misspecification only indirectly through the joint estimation.
	Furthermore, we illustrate that the performance of our strict ES encompassing test is not negatively influenced by more general data generating processes in the simulation study in Section \ref{sec:SimulationStudy} by considering GAS models with time-varying higher moments of \cite{Creal2013} and the dynamic ES models of \cite{Patton2019} and \cite{Taylor2019}.

	Tests for equal (superior) predictive ability in the sense of \cite{DieboldMariano1995}, \cite{clark2001tests}, \cite{GiacominiWhite2006}, \cite{West2006} and the model confidence set approach of \cite{Hansen2011} can be seen as a general alternative to encompassing tests.
	As these tests are directly based on the average loss difference, they can only test the predictive ability of the VaR and ES jointly.
	In contrast, encompassing tests are based on the regression coefficients of the semiparametric quantile and ES models and hence, only indirectly on the respective loss function. 
	This fundamental difference allows for stand-alone encompassing tests for ES forecasts, which constitutes a great advantage for ES encompassing tests.
	
	Strictly speaking, strict consistency of loss functions only implies that the \textit{optimal} forecast exhibits the \textit{smallest possible} loss in expectation.
	In reality however, competing forecasts are often misspecified due to estimation error or misspecified forecasting models.
	\cite{Patton2019MisspecFC} shows that then, the ranking induced by the loss functions can be sensitive towards the choice of (strictly consistent) loss functions or even misleading.
	\cite{Holzmann2014} show that for competing forecasts which are based on nested information sets and which are correctly specified given their underlying (but usually incomplete) information set (auto-calibrated), applying any strictly consistent loss function results in a correct ranking of the forecasts.
	In our case of testing forecast encompassing, we indeed build on nested information sets as it obviously holds that $\sigma \big\{ \hat f_{1,t},  \hat f_{2,t} \big\} \supseteq \sigma \big\{ \hat f_{1,t} \big\}$.
	Thus, by further assuming that the issued forecasts are auto-calibrated given the forecaster's information set, we can conclude that the ranking implied by (\ref{eqn:GeneralFcEncpConditionH0}) is indeed the correct one and invariant towards the choice of strictly consistent loss functions.

	\subsection{Asymptotic Theory under Model Misspecification}
	\label{sec:AsymptoticTheoryMisspec}
	
	In the following, we use the short notation $g^e_t(\eta) = g^e(\be, \eta)$  and $g^q_t(\beta) = g^q(\bq, \beta)$ (or $g^q_t(\beta) = g^q(\be, \beta)$ in the case of the strict test).
	We define the M-estimator as
	\begin{align}
	\label{eqn:DefQn}
	\hat \theta_n = \underset{\theta \in \Theta}{\argmin} \, Q_n(\theta), \quad \text{ where }  \quad
	Q_n(\theta) = \frac{1}{n} \sum_{t=m}^{T-1} \rho \big( Y_{t+1}, g^q_t(\beta) , g^e_t(\eta)  \big),
	\end{align}
	and the pseudo-true parameter as\footnote{ The pseudo-true parameter can generally depend on the issued loss function, i.e. in this case on the zero homogeneous choice in (\ref{eqn:JointLossESReg0Hom}).}
	\begin{align}
	\label{eqn:DefQn0}
	\theta^\ast_n =  \underset{\theta \in \Theta}{\argmin} \, Q_n^0(\theta), \quad \text{ where }  \quad
	Q_n^0(\theta) = \frac{1}{n} \sum_{t=m}^{T-1} \mathbb{E} \left[ \rho \big( Y_{t+1}, g^q_t(\beta) , g^e_t(\eta)  \big) \right].
	\end{align} 
	When the link (regression) functions $g^q(\cdot)$ and $g^e(\cdot)$ are correctly specified, we get that the pseudo-true parameter $\theta^\ast_n$ equals the classical true regression parameter and it is independent of the sample size $n$.
	We further define the corresponding identification functions, which are almost surely the derivative of the loss function $\rho$ with respect to $\theta$,
	\begin{align}
	\label{eqn:DefPsiEncmp}
	\psi \big( Y_{t+1}, g^q_t(\beta) , g^e_t(\eta)  \big)
	= 
	\begin{pmatrix}
	- \frac{\nabla g^q_t(\beta)}{\alpha g^e_t(\eta)} \left( \mathds{1}_{\{Y_{t+1} \le g^q_t(\beta) \}} - \alpha \right) \\
	\frac{\nabla g^e_t(\eta) }{g^e_t(\eta)^2}  \left( g^e_t(\eta) - g^q_t(\beta)+ \frac{1}{\alpha} (g^q_t(\beta) - Y_{t+1}) \mathds{1}_{\{Y_{t+1} \le g^q_t(\beta) \}} \right)
	\end{pmatrix}.
	\end{align}
	We restrict our attention to processes which satisfy the following conditions.
	\begin{assumption}
		\label{assu:AsymptoticTheory}
		We assume that 
		\begin{enumerate}[label=(\alph*)]
			\item
			\label{cond:AlphaMixing}
			the process $Z_t$ is strong mixing of size $- r/(  r-2)$ for some $ r>2$,
			
			\item 
			\label{cond:ParameterSpace}
			the parameter space $\Theta = \Theta^\beta \times \Theta^\eta \subseteq \mathbb{R}^k$ is compact and non-empty,		
			
			\item 
			\label{cond:UniqueMinimum}
			the pseudo-true parameter $\theta_n^\ast$ defined in (\ref{eqn:DefQn0}) is in the interior of $\Theta$ and is the unique minimizer of the objective function $Q_n^0(\theta)$ and the sequence $\mathbb{E}_t \big[ \psi \big( Y_{t+1}, g^q_t(\beta) , g^e_t(\eta) \big) \big]$, defined in (\ref{eqn:DefPsiEncmp}) is uncorrelated,
			
			\item 
			\label{cond:AbsContDistribution}
			the distribution of $Y_{t+1}$ given $\mathcal{F}_t$, denoted by $F_t$ is absolutely continuous with continuous and strictly positive density $h_t$, which is bounded from above almost surely on the whole support of $F_t$ and Lipschitz continuous,
			
			\item 
			\label{cond:ESModelBounded}
			for all $\theta$ in a neighborhood of $\theta_n^\ast$, it holds that $\left| \frac{1}{g^e_t(\eta) } \right| \le K < \infty$ for some constant $K > 0$, 
			
			\item 
			\label{cond:FullRankConditionNormality}
			the link functions $g_t^q(\beta)$ and $g_t^e(\eta)$ are $\mathcal{F}_t$-measurable, twice continuously differentiable in $\theta = (\beta,\eta)$ on $\operatorname{int}(\Theta)$ almost surely and if $\mathbb{P} \big( g_t^q(\beta_1) = g_t^q(\beta_2) \cap g_t^e(\eta_1) = g_t^e(\eta_2) \big) = 1$, then $\theta_1 = \theta_2$,
			
			\item 
			\label{cond:AsyCovPositiveDefinite}
			the matrices $\Lambda_n$ and $\Sigma_n$, defined in Proposition \ref{prop:AsymptoticNormality} are positive definite with a determinant bounded away from zero for all $n$ sufficiently large,

			\item 
			\label{cond:MomentCondition}
			it holds that
			$g_t^q(\beta) \le Q$, $\nabla g_t^q(\beta) \le Q_1$, $H_t^q(\beta) \le Q_2$, $\nabla H_t^q(\beta) \le Q_3$, and 
			$g_t^e(\eta) \le E$, $\nabla g_t^e(\eta) \le E_1$, $H_t^e(\eta) \le E_2$, $\nabla H_t^e(\eta) \le E_3$, for all $\theta$ in a neighborhood of $\theta_n^\ast$, where the random variables $Q, E, Q_1, E_1, Q_2, E_2, Q_3, E_3$ are all $\mathcal{F}_t$-measurable and for some $r>2$ (from condition \ref{cond:AlphaMixing}), the following moments are bounded
			(i) $\mathbb{E} [ Q_1^{r+1} ]$, 
			(ii) $\mathbb{E} [ E_1^{r+1} ]$, 
			(iii) $\mathbb{E} [ Q_2^{(r+1)/2} ]$, 
			(iv) $\mathbb{E} [ E_2^{(r+1)/2} ]$, 
			(v) $\mathbb{E} [ E_1 Q_2 ]$, 
			(vi) $\mathbb{E} [ Q_1 Q_2 ]$, 
			(vii) $\mathbb{E} [ Q_1 E_2 ]$, 
			(viii) $\mathbb{E} [ Q_1^2 E_1 ]$, 
			(ix) $\mathbb{E} [ E E_1^3 ]$, 
			(x) $\mathbb{E} [ E E_3 ]$, 
			(xi) $\mathbb{E} [ E E_1 E_2 ]$, 
			(xii) $\mathbb{E} [ Q E_1 E_2 ]$, 
			(xiii) $\mathbb{E} [ Q E_1^3 ]$, 
			(xiv) $\mathbb{E} [  Q_1 Q^{r} E_1^{r}]$,  
			(xv) $\mathbb{E} [  E_1^{r-1} E_2 |Y_t]^{r} ]$, 
			(xvi) $\mathbb{E} [  E_1^{r+1} |Y_t]^{r} ]$,
			(xvii) $\mathbb{E} [ Y_t^{2r}]$,

			\item 
			\label{cond:ExactQuantileMatches}
			for any $n$, the term $\sup_{\beta \in \Theta^\beta} \sum_{t=m}^{T-1} \mathds{1}_{\{ Y_{t+1} = g^q_t(\beta)  \}}$ is almost surely bounded from above.
			
		\end{enumerate}
	\end{assumption}
	
	The following propositions show consistency and asymptotic normality of the M-estimator under potential model misspecification.
	\begin{proposition}
		\label{prop:Consistency}
		Given the conditions in Assumption \ref{assu:AsymptoticTheory}, it holds that $\hat \theta_n - \theta_n^\ast \stackrel{\mathbb{P}}{\longrightarrow} 0$.
	\end{proposition}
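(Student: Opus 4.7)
The plan is to invoke a standard M-estimator consistency argument in the triangular-array setting: combine (i) compactness of $\Theta$ from Assumption~\ref{assu:AsymptoticTheory}\ref{cond:ParameterSpace}, (ii) well-separated minimization of $Q_n^0$ at $\theta_n^\ast$ from \ref{cond:UniqueMinimum}, (iii) continuity of $Q_n^0$ in $\theta$, and (iv) uniform convergence $\sup_{\theta \in \Theta} |Q_n(\theta) - Q_n^0(\theta)| \stackrel{\mathbb{P}}{\to} 0$. Given (i)--(iv), the usual argmin argument delivers $\hat \theta_n - \theta_n^\ast \stackrel{\mathbb{P}}{\to} 0$, with the $n$-dependence of $\theta_n^\ast$ accommodated exactly as in \cite{GiacominiWhite2006}.

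For (iii), the only non-smooth ingredient in $\rho$ is the indicator $\mathds{1}_{\{Y_{t+1} \le g_t^q(\beta)\}}$, but its conditional expectation $F_t\big(g_t^q(\beta)\big)$ is differentiable in $\beta$ by absolute continuity of $F_t$ with Lipschitz density $f_t$ (Assumption~\ref{assu:AsymptoticTheory}\ref{cond:AbsContDistribution}). The remaining components of $\rho$ are continuous in $\theta$ by smoothness of the link functions (\ref{cond:FullRankConditionNormality}) together with the lower bound on $|g_t^e|$ implicit in \ref{cond:ESModelBounded}; the moment bounds in \ref{cond:MomentCondition} supply integrable envelopes, so that dominated convergence transfers continuity into $Q_n^0$.

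For (iv), pointwise convergence $Q_n(\theta) - Q_n^0(\theta) \stackrel{\mathbb{P}}{\to} 0$ at each $\theta$ follows from an $L^1$ law of large numbers for strong-mixing triangular arrays, using the mixing size in \ref{cond:AlphaMixing} and the moments in \ref{cond:MomentCondition}. I would upgrade to uniform convergence by a covering argument on the compact set $\Theta$: splitting $\rho$ into a Lipschitz component (with Lipschitz constant bounded in $L^1$ by the envelopes $E_1$, $Q_1$ and $1/|g_t^e|$ from \ref{cond:ESModelBounded}--\ref{cond:MomentCondition}) and the indicator-driven component, the Lipschitz piece is handled by standard bracketing, while for the indicator piece the fluctuation over a $\delta$-ball around $\beta$ is dominated by $\mathbb{P}_t\big(|Y_{t+1} - g_t^q(\beta)| \le \delta \|\nabla g_t^q(\beta)\|\big) = O(\delta)$ thanks to the bounded density from \ref{cond:AbsContDistribution}. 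Condition~\ref{cond:ExactQuantileMatches} additionally rules out pathological accumulations of exact crossings that could break the bracketing.

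The hardest step will be establishing stochastic equicontinuity of the indicator-driven term uniformly in both $\theta$ and the triangular-array index $n$: the process is only mixing rather than stationary, and the envelope functions must remain uniformly integrable as $n$ grows. The density bound from \ref{cond:AbsContDistribution}, the moment conditions in \ref{cond:MomentCondition} (which include products involving $Q$, $E$, and the gradient envelopes), and the bound on exact matches in \ref{cond:ExactQuantileMatches} are the levers that push this through; assembling them into one clean uniform bound is the delicate computation, and I would relegate the details to the appendix.
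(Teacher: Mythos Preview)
Your overall strategy---compactness, identification, continuity of $Q_n^0$, and a uniform LLN---is exactly the paper's, and both invoke the standard argmin consistency theorem (the paper cites Theorem~2.1 of \cite{NeweyMcFadden1994}). The difference lies in how the uniform convergence is obtained.

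You split $\rho$ into a Lipschitz piece and an ``indicator-driven'' piece, and flag stochastic equicontinuity of the latter as the hard step, to be controlled via the density bound $\mathbb{P}_t\big(|Y_{t+1}-g_t^q(\beta)|\le \delta\|\nabla g_t^q\|\big)=O(\delta)$. This would work, but it is more than is needed here. The paper's key observation (its Lemma~\ref{lemma:LipschitzL1}) is that the \emph{entire} loss $\rho$ is Lipschitz-$L_1$ in $\theta$: the indicator $\mathds{1}_{\{Y_{t+1}\le g_t^q(\beta)\}}$ enters only through the product $\mathds{1}_{\{Y_{t+1}\le g_t^q(\beta)\}}\,(g_t^q(\beta)-Y_{t+1})$, and the second factor vanishes precisely where the indicator jumps, so the product is pathwise Lipschitz in $\beta$ with constant $\sup_\theta\|\nabla_\beta g_t^q(\beta)/(\alpha g_t^e(\eta))\|$. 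With this Lipschitz-$L_1$ property in hand, the paper feeds it directly into the generic uniform weak LLN for mixing sequences (Theorem~A.2.5 of \cite{White1994}), which requires only that the local sup/inf envelopes obey a pointwise LLN---a step handled by the mixing assumption and the moment bounds via Corollary~3.48 of \cite{White2001}. No bracketing, no density argument, and no appeal to \ref{cond:ExactQuantileMatches} are needed for consistency; those ingredients are reserved for the asymptotic normality proof. Your route is valid but imports heavier machinery than the structure of the check-type loss actually demands.
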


	\begin{proposition}
		\label{prop:AsymptoticNormality}
		Given the conditions in Assumption \ref{assu:AsymptoticTheory}, it holds that
		\begin{align}
		\Omega_n^{-1/2} (\theta^\ast_n) \sqrt{n} \big( \hat \theta_n - \theta^\ast_n \big) \stackrel{d}{\longrightarrow} \mathcal{N}(0,I_k),
		\end{align}
		with $\Omega_n(\theta^\ast_n) =  \Lambda_n^{-1}(\theta^\ast_n) \, \Sigma_n(\theta^\ast_n) \, \Lambda_n^{-1}(\theta^\ast_n)$,
		where
		$\Lambda_n(\theta^\ast_n) = \begin{pmatrix} \Lambda_{n,qq}(\theta^\ast_n) & \Lambda_{n,qe}(\theta^\ast_n) \\ \Lambda_{n,eq}(\theta^\ast_n) & \Lambda_{n,ee}(\theta^\ast_n) \end{pmatrix}$,
		and $\Sigma_n(\theta^\ast_n) =  \frac{1}{n} \sum_{t=m}^{T-1} \mathbb{E} \left[ \psi \big( Y_{t+1}, g^q_t( \beta^\ast_n) , g^e_t( \eta^\ast_n)  \big) \cdot \psi \big( Y_{t+1}, g^q_t( \beta^\ast_n) , g^e_t( \eta^\ast_n)  \big)^\top \right]$.
		Furthermore, the components of $\Lambda_n(\theta^\ast_n)$ are given by
		\begin{align}
		\Lambda_{n,qq}(\theta^\ast_n) &= - \frac{1}{n} \sum_{t=m}^{T-1} \mathbb{E}  \left[ 
		\frac{H^q_{t} ( \beta^\ast_n)}{\alpha g_t^e(\eta^\ast_n)} \big(F_t(g_t^q(\beta^\ast_n)) - \alpha \big)
		+ \frac{ \nabla g_t^q(\beta^\ast_n) \nabla g_t^q(\beta^\ast_n)^\top  }{\alpha g_t^e(\eta^\ast_n)} {h_t}(g_t^q( \beta^\ast_n)) \right], \\
		\Lambda_{n,qe}(\theta^\ast_n) &= \Lambda_{n,eq}(\theta^\ast_n)^\top = \frac{1}{n} \sum_{t=m}^{T-1} \mathbb{E} \left[ 
		\frac{  \nabla g_t^q(\beta^\ast_n) \nabla g_t^e(\eta^\ast_n)^\top }{\alpha g_t^e(\eta^\ast_n)^2} \big(F_t(g_t^q(\beta^\ast_n)) - \alpha \big) \right], \\
		\Lambda_{n,ee}(\theta^\ast_n) &= \frac{1}{n} \sum_{t=m}^{T-1} \mathbb{E}   \left[ 
		\frac{\nabla g_t^e(\eta^\ast_n) \nabla g_t^e(\eta^\ast_n)^\top}{g^e_t( \eta^\ast_n)^2}  
		+ \left( \frac{H^e_{t}( \eta^\ast_n)}{g^e_t( \eta^\ast_n)^2}  -2 \frac{\nabla g_t^e(\eta^\ast_n) \nabla g_t^e(\eta^\ast_n)^\top}{g^e_t( \eta^\ast_n)^3} \right) \right. \times \\
		&\qquad \qquad \quad \left. \left( g^e_t( \eta^\ast_n)  - g^q_t( \beta^\ast_n)+ \frac{1}{\alpha}  (g^q_t( \beta^\ast_n)- Y_{t+1}) \mathds{1}_{\{Y_{t+1} \le g^q_t( \beta^\ast_n) \}} \right) \right],
		\end{align}
		where $H^q_{t} (\beta)$  and $H^e_{t} ( \eta)$ are the Hessian matrices of $g_t^q(\beta)$ and $g_t^e(\eta)$ respectively.
	\end{proposition}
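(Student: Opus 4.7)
The strategy is the classical M-estimation argument adapted to handle the discontinuity of the loss in the quantile parameter and the potentially misspecified, $n$-dependent pseudo-true parameter $\theta_n^\ast$. By Proposition \ref{prop:Consistency} together with Assumption \ref{assu:AsymptoticTheory}\ref{cond:ParameterSpace}-\ref{cond:UniqueMinimum}, $\hat\theta_n$ lies in the interior of $\Theta$ with probability tending to one. Since $\rho$ is almost surely differentiable with gradient $\psi$, the first-order condition $\tfrac{1}{n}\sum_{t=m}^{T-1}\psi\big(Y_{t+1}, g^q_t(\hat\beta_n), g^e_t(\hat\eta_n)\big) = o_p(n^{-1/2})$ holds, where the error generated by realizations on the set $\{Y_{t+1}= g_t^q(\hat\beta_n)\}$ is controlled by Assumption \ref{assu:AsymptoticTheory}\ref{cond:ExactQuantileMatches}.

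The central step is the decomposition
$$0 = \sqrt{n}\,\bar\Psi_n(\hat\theta_n) + \sqrt{n}\big[(\mathbb{P}_n - \bar{\mathbb{P}}_n)\psi_t(\hat\theta_n)\big] + o_p(1),$$
where $\bar\Psi_n(\theta) = \tfrac{1}{n}\sum_t \mathbb{E}[\psi_t(\theta)]$. Although $\psi_t$ is discontinuous in $\beta$, its expectation is smooth in $\theta$ because the indicator $\mathds{1}_{\{Y_{t+1}\le g_t^q(\beta)\}}$ integrates against $F_t$, which possesses a continuous Lipschitz density by \ref{cond:AbsContDistribution}. A Taylor expansion around $\theta_n^\ast$, using $\bar\Psi_n(\theta_n^\ast) = 0$ from the definition of the pseudo-true parameter, gives
$$\sqrt{n}\,\bar\Psi_n(\hat\theta_n) = \Lambda_n(\theta_n^\ast)\,\sqrt{n}(\hat\theta_n - \theta_n^\ast) + o_p(1).$$
The explicit form of $\Lambda_n$ is obtained by differentiating the two expected score components under the integral via Leibniz's rule: differentiating the quantile indicator produces the density factor $f_t(g^q_t(\beta_n^\ast))$ in $\Lambda_{n,qq}$, while the chain rule on $g^q$ and $g^e$ yields the Hessian contributions $H^q_t$ and $H^e_t$. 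Under misspecification these Hessian blocks carry the residual factors $\big(F_t(g_t^q(\beta_n^\ast)) - \alpha\big)$ and the bracketed ES residual appearing in $\Lambda_{n,ee}$, respectively, and they cause $\Lambda_{n,qe}$ to be nonzero in general rather than vanishing as in the correctly specified case.

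For the fluctuation term I would establish stochastic equicontinuity, $\sqrt{n}(\mathbb{P}_n - \bar{\mathbb{P}}_n)[\psi_t(\hat\theta_n) - \psi_t(\theta_n^\ast)] = o_p(1)$, by combining the $L^1$ Lipschitz bound $\mathbb{E}\big|\mathds{1}_{\{Y_{t+1}\le g^q_t(\beta_1)\}} - \mathds{1}_{\{Y_{t+1}\le g^q_t(\beta_2)\}}\big| \le C\|\beta_1 - \beta_2\|$, which follows from the bounded, Lipschitz density together with the gradient moments in \ref{cond:MomentCondition}, with a bracketing-entropy maximal inequality for strong mixing sequences under \ref{cond:AlphaMixing}. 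A CLT for strong mixing triangular arrays, e.g.\ Herrndorf's, applied to $\sqrt{n}(\mathbb{P}_n - \bar{\mathbb{P}}_n)\psi_t(\theta_n^\ast)$ delivers a normal limit with variance $\Sigma_n(\theta_n^\ast)$; the uncorrelatedness condition in \ref{cond:UniqueMinimum} ensures the variance indeed collapses to the time-averaged form stated in the proposition. Rearranging gives $\sqrt{n}(\hat\theta_n - \theta_n^\ast) = -\Lambda_n^{-1}(\theta_n^\ast)\sqrt{n}(\mathbb{P}_n - \bar{\mathbb{P}}_n)\psi_t(\theta_n^\ast) + o_p(1)$, and the invertibility and uniform non-degeneracy from \ref{cond:AsyCovPositiveDefinite} complete the argument.

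The principal obstacle is the stochastic equicontinuity step: the classical smooth M-estimation machinery fails because $\psi$ is discontinuous in $\beta$, so uniform control has to be obtained via entropy bounds for a VC-type class of score functions combined with a mixing maximal inequality, and this must be done along a triangular array indexed by the $n$-dependent pseudo-true parameter $\theta_n^\ast$. The comparatively strong list of moment conditions in \ref{cond:MomentCondition} exists precisely to make the entropy and maximal-inequality estimates go through. A secondary subtlety is the bookkeeping in the Leibniz differentiation that produces the explicit closed form of $\Lambda_n$: the Hessian-weighted residual factors and the non-vanishing $\Lambda_{n,qe}$ block must be tracked carefully because all of them would collapse to zero under correct specification, obscuring where the misspecification-robust structure enters.
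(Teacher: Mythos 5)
Your proposal follows essentially the same route as the paper: consistency plus the interiority condition give the asymptotic first-order condition, the population score $\Psi_n^0$ is smooth and is expanded (componentwise) around $\theta_n^\ast$ to produce $\Lambda_n$ via differentiation under the integral, stochastic equicontinuity transfers the fluctuation term from $\hat\theta_n$ to $\theta_n^\ast$, and a mixing CLT with the Cram\'er--Wold device plus the uncorrelatedness in condition \ref{cond:UniqueMinimum} yields the normal limit with variance $\Sigma_n$. The only substantive difference is the device for the equicontinuity step -- you propose a bracketing-entropy maximal inequality, whereas the paper verifies the Huber-type conditions of Lemma A.1 in \cite{Weiss1991} (with the key lower bound $\|\Psi_n^0(\theta)\|\ge a\|\theta-\theta_n^\ast\|$ established in Lemma \ref{lemma:WeissConditionN3iESReg}) -- which is an interchangeable piece of machinery serving the same purpose.
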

	The two preceding propositions extend the asymptotic theory of \cite{Patton2019} to the case of possibly misspecified models, and the misspecification theory for linear models of \cite{BayerDimi2019} to nonlinear models.
	The proofs in Appendix \ref{sec:Proofs} combine, extend and go along the lines of the ideas of \cite{Engle2004} and \cite{Patton2019}.
	The conditions closely resemble the regularity conditions of \cite{Patton2019}.
	As we further allow for model misspecification, we impose the unique minimization condition \ref{cond:UniqueMinimum} and slightly strengthen the moment conditions \ref{cond:MomentCondition}.
	In the baseline case of linear encompassing link functions $g^q$ and $g^e$, the required moment conditions simplify to those given in \cite{BayerDimi2019}. 

	For the estimation of the asymptotic covariance matrix $\widehat{\Omega}_{n}$ under possible model misspecification, we follow the approach of \cite{DimiBayer2019} and \cite{BayerDimi2019}.
	We deal with the three nuisance quantities in $\widehat{\Omega}_{n}$ as follows.
	In order to estimate the density quantile function $h_t(g_t^q(\beta^\ast_n))$, we follow the \textit{nid}-estimator of \cite{Hendricks1992}. 
	As the degree of misspecification in the investigated financial time series is small \citep{BayerDimi2019}, we  approximate $F_t(g_t^q(\beta^\ast_n)) \approx \alpha$.
	For the conditional truncated variance, $\Var_t \big( g^q_t( \beta^\ast_n)- Y_{t+1} \big| Y_{t+1} \le g^q_t( \beta^\ast_n) \big)$, we employ the \textit{scl-sp} estimator of \cite{DimiBayer2019}.
	
	We now consider the asymptotic distributions of our three ES encompassing tests proposed in Section \ref{sec:EncompassingTestsES}  under the null hypotheses and for general link functions, where we test certain $s$-dimensional ($s \in \mathbb{N}, s \le k$) sub-vectors of $\theta$.
	For this, let $R \in \mathbb{R}^{k \times s}$ be a selection matrix whose columns consist of $k$-dimensional Cartesian unit (column) vectors $e_j \in \mathbb{R}^k$, which are zero apart from a one in dimension $j$.
	E.g., when $g_t^q(\beta)$ and $g_t^e(\eta)$ equal the linear link functions with intercept, given in the first point of Example \ref{exmpl:LinkFunctions}, $\theta = (\beta_1, \beta_2, \beta_3, \eta_1, \eta_2, \eta_3)$.
	Then, for the strict and auxiliary ES encompassing tests, $R = (e_5, e_6)$ and for the joint test $R = (e_2,e_3,e_5, e_6)$.
	These choices pick the respective parameters from $\theta$.
	Then, we define the respective test statistics by
	\begin{align}
		Z_{R,n} &= n \big( \hat \theta_n R - \theta_n^\ast R \big)  \, R^\top \widehat{\Omega}_{n}^{-1} R \,  \big( \hat \theta_n R - \theta_n^\ast R \big)^\top.
	\end{align}
	
	\begin{theorem}[ES Encompassing Tests]
		\label{thm:DistributionEncmpTestStatistics}
		Given the conditions of Assumption \ref{assu:AsymptoticTheory} and given that $\widehat{\Omega}_{n} - \Omega_{n} \stackrel{\mathbb{P}}{\to} 0$, under the respective null hypotheses given in Definition \ref{def:QuantileESEncompassing} - \ref{def:StrictESEncompassing}, it holds that
		\begin{align}
			Z_{R,n} \stackrel{d}{\longrightarrow} \chi^2_s.
		\end{align}
	\end{theorem}
	For linear link functions, this theorem implies that the limiting $\chi^2$ distribution of the joint test has four degrees of freedom, while the one of the strict and auxiliary tests has two degrees of freedom.

	An important application of these ES encompassing tests is in the context of selecting the best-performing forecast,  i.e.\ selecting at time $T$ a superior forecasting method for the future.
	This is particularly relevant as the ES is recently introduced into the Basel regulations without having proper forecast selection procedures at hand.
	Following \cite{GiacominiKomunjer2005}, we propose the following decision rule.
	We test the two encompassing hypotheses $\mathbb{H}_{0}^{(1)}$: $\hat e_{1,t}$ encompasses $\hat e_{2,t}$ and  $\mathbb{H}_{0}^{(2)}$: $\hat e_{2,t}$ encompasses $\hat e_{1,t}$ for $t = m,\dots,T-1$.
	Then, there are four possible scenarios: 
	(1) if neither $\mathbb{H}_{0}^{(1)}$ nor $\mathbb{H}_{0}^{(2)}$ are rejected, the test is not helpful for forecast selection.
	(2) If $\mathbb{H}_{0}^{(1)}$ is rejected while $\mathbb{H}_{0}^{(2)}$ is not rejected, we can conclude that forecast $\hat e_{2,t}$ does add information to forecast $\hat e_{1,t}$, while we cannot conclude the reverse. 
	Thus, we decide to use the forecasting method of $\hat e_{2,t}$.
	(3) If $\mathbb{H}_{0}^{(2)}$ is rejected while $\mathbb{H}_{0}^{(1)}$ is not rejected, the same logic applies inversely and we use the forecasting method of $\hat e_{1,t}$.
	(4) If both, $\mathbb{H}_{0}^{(1)}$and $\mathbb{H}_{0}^{(2)}$are rejected, the test delivers statistical evidence that both forecasts contain exclusive information and that a forecast combination outperforms the stand-alone forecasts.
	Consequently, we use a combined forecast $\hat e_{c,t} = g^e(\be, \hat \eta_n)$
	where the estimated combination parameters $\hat \eta_{n}$ are obtained from the M-estimator proposed here.

	Testing forecast encompassing \textit{conditional}  on some information set $\tilde{\mathcal{G}}_t = \sigma\{ \bW \}$ based on some $\mathcal{F}_t$-measurable vector of instruments $\bW$ in the sense of \cite{GiacominiKomunjer2005} can be facilitated through estimating the regression parameters through (overidentified) GMM-estimation instead of M-estimation.
	However, for the strict ES test, this approach requires asymptotic theory under model misspecification for the overidentified GMM estimator based on \textit{nonsmooth} objective functions.
	While such theory is available for smooth moment conditions (see e.g. \cite{HallInoue2003} and \cite{HansenLee2019}), its generalization to nonsmooth objective functions is not straight-forward and thus, we leave \textit{conditional} ES encompassing tests based on misspecified GMM-estimation for future research.
	The moment conditions of our unconditional approach can be interpreted as \textit{conditional} encompassing with respect to the instruments $\nabla g^q_t(\beta)$ and $\nabla g^e_t(\eta)$.
	In the classical baseline case of linear forecast encompassing, these instruments simplify to $\bq$ and $\be$ and thus, our approach tests \textit{conditional} encompassing with respect to the information set  $\mathcal{G}_t = \sigma\{ 1, \bq, \be \} \subseteq \mathcal{F}_t$, which in most cases already captures the most relevant information which is available.

	\section{Simulation Study}
	\label{sec:SimulationStudy}
	
	In this section, we evaluate the size and power properties of our three proposed ES encompassing tests and compare them to the VaR encompassing test of \cite{GiacominiKomunjer2005}.
	For this, we describe the simulation setup in Section \ref{sec:DGPs} and we report and discuss the simulation results in Section \ref{sec:SimulationResults}.
	Section \ref{sec:SimulationExtensions} considers three extensions of the simulation setup with respect to additional data generating processes (DGPs), loss and link functions.

	\subsection{The Simulation Setup}
	\label{sec:DGPs}

	We employ the three encompassing tests based on the linear link functions
	$g^q( \bfc, \beta) = \beta_1 + \beta_2 \hat f_{1,t} + \beta_3 \hat f_{2,t}$ and $g^e( \be, \eta) = \eta_1 + \eta_2 \hat e_{1,t} + \eta_3 \hat e_{2,t}$, where $\bfc = \bq$ for the joint and auxiliary tests and $\bfc = \be$ for the strict test, together with the parameter space $\Theta = \{ \theta = (\beta, \eta) \in \mathbb{R}^6: ||\theta|| \le K \}$.\footnote{ We choose the constant $K$ large enough such that the parameter estimation is not restricted in realistic settings but the parameter space $\Theta$ is indeed convex.}
	For the respective encompassing tests, in each case we test the following two opposing hypotheses:
	\begin{alignat}{5}
		\begin{aligned}
		\label{eqn:SimStudyHypotheses}
		&\text{Joint}:& \; &\mathbb{H}_0^{(1)}: (\beta_2^\ast, \beta_3^\ast, \eta_2^\ast, \eta_3^\ast) = (1,0,1,0),& \;
		&\mathbb{H}_0^{(2)}: (\beta_2^\ast, \beta_3^\ast, \eta_2^\ast, \eta_3^\ast) = (0,1,0,1), \\
		&\text{Str \& Aux}:& \;  &\mathbb{H}_0^{(1)}: (\eta_2^\ast, \eta_3^\ast) = (1,0),& \; 
		&\mathbb{H}_0^{(2)}: (\eta_2^\ast, \eta_3^\ast) = (0,1), \\
		&\text{VaR}:& \; &\mathbb{H}_0^{(1)}: (\beta_2^\ast, \beta_3^\ast) = (1,0),& \; 
		&\mathbb{H}_0^{(2)}: (\beta_2^\ast, \beta_3^\ast) = (0,1).
		\end{aligned}
	\end{alignat}

	In the following, we describe two DGPs where for the first, both forecasting models stem from classical GARCH models  while the second considers two joint GAS models for the VaR and ES of \cite{Patton2019}.
	For both model classes, we simulate data as a convex combination of two distinct models with a flexible convex combination weight $\pi \in [0,1]$.
	This implies that for $\pi = 0$, the first model encompasses the second, while for $\pi=1$, the inverse holds.
	For all intermediate parameters $\pi \in (0,1)$, the data stems from a linear combination and both forecast encompassing null hypotheses should be rejected which indicates that a forecast combination method is preferred.

\subsection*{The GARCH DGP}

	The two GARCH models, which are calibrated to daily IBM returns, are given by $\tilde Y_{j,t+1} = \hat \sigma_{j,t} u_{t+1}$, for $j=1,2$, where $u_{t+1} \stackrel{iid}{\sim} \mathcal{N}(0,1)$ and the two distinct volatility specifications are given by
	\begin{align}
	\hat \sigma_{1,t}^2 &= 0.042 + 0.053 \tilde Y_{1,t}^2 + 0.925 \hat \sigma_{1,t-1}^2, \qquad \text{and} \label{eqn:GARCHModel} \\
	\hat \sigma_{2,t}^2 &= 0.044 + \big(0.024 + 0.058 \cdot \mathds{1}_{\{ \tilde Y_{2,t} \le 0 \}} \big) \tilde Y_{2,t}^2 + 0.923 \hat \sigma_{2,t-1}^2.
	\label{eqn:GJRGARCHModel}
	\end{align}
	For both models, we obtain VaR and ES forecasts by $\hat q_{j,t} = z_\alpha \hat \sigma_{j,t}$ and $\hat e_{j,t} = \xi_\alpha \hat \sigma_{j,t}$, for $j=1,2$, where $z_\alpha$ and $\xi_\alpha$ are the $\alpha$-quantile and $\alpha$-ES of the standard normal distribution.
	Notice that the time index $t$ on $\hat \sigma_{j,t}$ indicates that it is a $\mathcal{F}_t$-measurable forecast for 
	time $t+1$.
	While the first specification in (\ref{eqn:GARCHModel}) is a classical GARCH(1,1) model \citep{Bollerslev1986}, the second specification in (\ref{eqn:GJRGARCHModel}) follows the GJR-GARCH model of \cite{Glosten1993}, which allows for a leverage effect.
	We simulate data from the convex combination of these processes, $Y_{t+1} = \big( (1-\pi) \hat \sigma_{1,t} + \pi \hat \sigma_{2,t} \big) u_{t+1}$ for 21 equally spaced values of $\pi \in [0,1]$, where $u_{t+1} \stackrel{iid}{\sim} \mathcal{N}(0,1)$. 

 \subsection*{The VaR/ES GAS DGP}

	In the second simulation setup, we implement the one-factor (1F) and two-factor (2F) GAS models for the VaR and ES of \cite{Patton2019}.
	The 1F-GAS model evolves as
	\begin{align}
	\hat q_{1,t} &= -1.164 \exp(\hat \kappa_{t}) \qquad \text{ and } \qquad 
	\hat e_{1,t} = -1.757 \exp(\hat \kappa_{t}), \quad \text{ where }\\
	\hat \kappa_{t} &=  0.995 \hat \kappa_{t-1} + \frac{0.007}{\hat e_{1,t-1}} \left( \frac{\tilde Y_{1,t}}{\alpha}  \mathds{1}_{\{\tilde Y_{1,t} \le \hat q_{1,t-1} \}}  -  \hat e_{1,t-1} \right).
	\end{align}
	The 2F-GAS model follows the specification
	\begin{align}
	\begin{pmatrix} \hat q_{2,t} \\ \hat e_{2,t}  \end{pmatrix}
	= \begin{pmatrix} -0.009 \\ -0.010  \end{pmatrix}
	+ \begin{pmatrix} 0.993 & 0 \\ 0 & 0.994  \end{pmatrix}
	\begin{pmatrix} \hat q_{2,t-1} \\ \hat e_{2,t-1}  \end{pmatrix}
	+ \begin{pmatrix} -0.358 & -0.351 \\ -0.003 & -0.003  \end{pmatrix} \lambda_t,
	\end{align}
	where the forcing variable is given by $\lambda_t = \big( \hat q_{2,t-1} ( \alpha - \mathds{1}_{\{\tilde Y_{2,t} \le \hat q_{2,t-1} \}} ) , \, \mathds{1}_{\{\tilde Y_{2,t} \le \hat q_{2,t-1} \}}  \tilde Y_{2,t}  / \alpha - \hat e_{2,t-1} \big)^\top$.
	For both models, $j=1,2$, we simulate $\tilde Y_{j,t+1}  \sim \mathcal{N} \big( \hat \mu_{j,t}, \hat \sigma_{j,t}^2 \big)$, where the conditional mean and standard deviations are given by
	$\hat \mu_{j,t} = \hat q_{j,t} - z_\alpha \frac{\hat e_{j,t} -  \hat q_{j,t}}{\xi_\alpha - z_\alpha}$ and $\hat \sigma_{j,t} = \frac{\hat e_{j,t} -  \hat q_{j,t}}{\xi_\alpha - z_\alpha}$,
	such that $Q_\alpha(\tilde Y_{j,t+1}|\mathcal{F}_t) = \hat q_{j,t}$ and  $ES_\alpha(\tilde Y_{j,t+1}|\mathcal{F}_t) = \hat e_{j,t}$ almost surely.
	The parameter values for this model are obtained from Table 8 of \cite{Patton2019} and correspond to calibrated parameters to daily S\&P 500 returns.
	
	In order to simulate returns which follow a convex combination of these two conditional distributions, we simulate Bernoulli draws $\pi_{t+1} \sim \operatorname{Bern}(\pi)$ for 21 equally spaced values of $\pi \in [0,1]$, and let $Y_{t+1}  = (1-\pi_{t+1}) \tilde Y_{1,t+1}  + \pi_{t+1} \tilde Y_{2,t+1}$.
	Thus, for $\pi=0$, $Y_{t+1}$ follows the 1F-GAS model, for $\pi = 1$, $Y_{t+1}$ follows the 2F-GAS model and for $\pi \in (0,1)$, $Y_{t+1}$ follows some convex combination of these two models.\footnote{
	While generating returns stemming from convex combinations of GARCH-type volatility models is straight-forward by using convex combinations of the conditional volatilities, this is not as simple for the more general GAS models considered in this section. 
	Consequently, we use this more involved approach based on Bernoulli draws in order to generate these convex model combinations.}
	
	Both models in the GARCH DGP generate data from a pure scale (volatility) process resulting in perfectly colinear VaR and ES forecasts.
	In contrast, the more general VaR/ES GAS models in the second DGP generate VaR and ES forecasts which are not colinear and consequently introduce misspecification in the quantile model of the strict ES encompassing test.
	As the utilized parameters are calibrated to daily financial returns, these models reflect a realistic degree of misspecification encountered in practical risk management.

	\subsection{Simulation Results}
	\label{sec:SimulationResults}
	
	Table \ref{tab:Size} reports the empirical sizes of the three different ES encompassing tests introduced in Section \ref{sec:Theory} together with the VaR encompassing test of \cite{GiacominiKomunjer2005} at a 10$\%$ nominal significance level based on $2000$ Monte Carlo replications.
	Table S.1 and S.2 in the supplementary material present equivalent results for nominal sizes of 1$\%$ and 5$\%$.
	The column panel $\mathbb{H}_0^{(1)}$ indicates that we test whether model 1 encompasses model 2, while the panel $\mathbb{H}_0^{(2)}$ indicates the reverse.	

	\begin{table}[tbh]
		\footnotesize
		\centering
		\caption{Empirical Sizes of the Forecast Encompassing Tests.}
		\label{tab:Size}
		\begin{tabularx}{0.95\linewidth}{l l @{\hspace{0.7cm}} RRRR l @{\hspace{0.5cm}} RRRR }
			\toprule
			 & & \multicolumn{4}{c}{$\mathbb{H}_0^{(1)}$} & & \multicolumn{4}{c}{$\mathbb{H}_0^{(2)}$} \\
			\cmidrule(lr){3-6} \cmidrule(lr){8-11}
			& & Str ES & Aux ES & VaR ES & VaR & & Str ES & Aux ES & VaR ES & VaR \\ 
			\midrule
			\midrule
			$n$ & &\multicolumn{9}{c}{GARCH} \\
			\cmidrule(lr){3-11}
			$500$  &        & 15.25  & 15.20  & 18.35  & 22.75  &        & 14.40  & 14.65  & 18.80  & 22.50 \\
$1000$ &        & 11.55  & 11.10  & 15.60  & 20.10  &        & 12.30  & 12.70  & 17.80  & 22.85 \\
$2500$ &        & 11.45  & 11.55  & 16.35  & 18.80  &        & 11.00  & 11.25  & 14.60  & 17.55 \\
$5000$ &        & 10.05  & 10.25  & 13.10  & 15.35  &        &  9.75  & 10.15  & 13.90  & 15.75 \\

			\midrule
			\midrule
			$n$ & & \multicolumn{9}{c}{VaR/ES GAS} \\
			\cmidrule(lr){3-11}
			$500$  &        & 29.35  & 29.75  & 24.15  & 27.85  &        & 21.70  & 21.15  & 23.10  & 27.80 \\
$1000$ &        & 22.75  & 21.85  & 19.55  & 23.95  &        & 18.15  & 18.60  & 18.15  & 22.80 \\
$2500$ &        & 16.05  & 15.80  & 16.20  & 18.35  &        & 12.65  & 13.50  & 15.65  & 19.65 \\
$5000$ &        & 13.50  & 13.60  & 14.05  & 16.60  &        & 10.60  & 11.35  & 14.10  & 17.95 \\

			\bottomrule 
			\addlinespace
			\multicolumn{11}{p{.93\linewidth}}{\textit{Notes:} This table presents the empirical sizes (in $\%$) of our three forecast encompassing tests for the ES together with a VaR encompassing test of \cite{GiacominiKomunjer2005} for a nominal size of $10\%$.
			The results are shown for the two DGPs described in Section \ref{sec:DGPs} in the horizontal panels, for both tested hypotheses in the vertical panels and for different sample sizes.}
		\end{tabularx}
	\end{table}

	We find that the two ES encompassing tests (the strict and auxiliary test) are well-sized, especially in large samples for both DGPs and for both null hypotheses.
	While the joint VaR and ES test is slightly oversized, the VaR test exhibits even larger sizes.
	This behavior is especially remarkable as the ES is considerably further in the tail than the VaR at the same probability level and hence, harder to estimate and test.
	This pattern can be explained by the fact that the asymptotic covariance of the two tests involving the VaR is subject to estimation of the density quantile function $h_t(g_t^q( \beta^\ast_n))$, which is hard to estimate for small probability levels \citep{Koenker1978, GiacominiKomunjer2005, DimiBayer2019}.

	We further find that the strict and the auxiliary tests behave almost identically. 
	This also holds for the VaR/ES GAS DGP for which the regression model of the strict ES encompassing test is potentially misspecified.\footnote{
			Figure S.1 in the supplementary material plots the ratio of the VaR and ES forecasts for a simulated return series of the 2F-GAS model (and further misspecified models described in Section S.1 in the supplementary material).
			Following the discussion after Definition \ref{def:StrictESEncompassing}, this ratio mainly governs the degree of misspecification the regression model of the strict ES encompassing test is subject to.
			For the 2F-GAS model, the ratio of the VaR and ES forecasts fluctuates approximately between 0.7 and 0.85, while it equals 0.84 for the location-scale approaches under normality.
			This demonstrates that the VaR/ES GAS simulation designs, which are calibrated to real financial data, generate some moderate degree of model misspecification in the regression model.}
	This suggests that the approximation error induced by the misspecification in the strict ES test is negligible for realistic financial settings.
	Remarkably, in the vast majority of cases, the strict ES test exhibits better size properties than the correctly specified joint VaR and ES and the VaR encompassing tests.
	
	\begin{figure}[htb]
		\includegraphics[width=\linewidth]{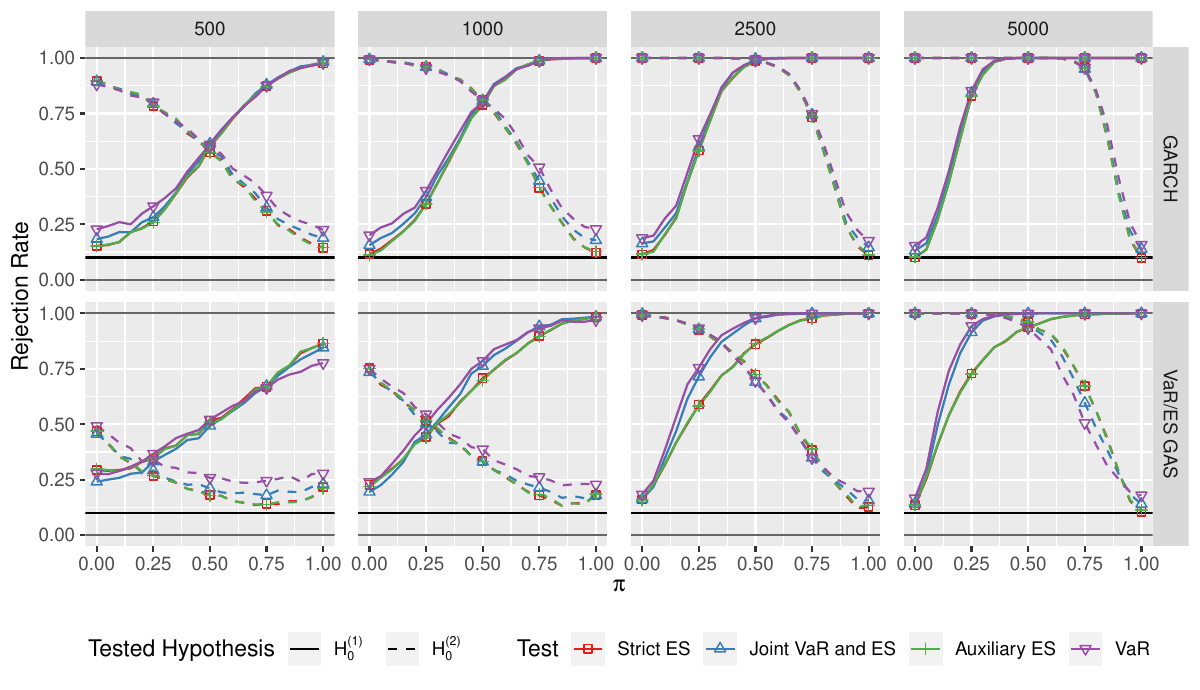}
		\caption{This figure shows power curves (empirical rejection frequencies) for the encompassing tests with a nominal size of $10\%$ and for the two DGPs described in Section \ref{sec:DGPs} in the plot rows.
		The plot columns depict different sample sizes, while the colors indicate the four different encompassing tests and the line types refer to the two tested null hypotheses.
		An ideal test exhibits a rejection frequency of $10\%$ for $\pi=0$ and for $\mathbb{H}_0^{(1)}$ (and inversely for $\pi=1$ and $\mathbb{H}_0^{(2)}$) and as sharply increasing rejection rates as possible for increasing (decreasing) values of $\pi$. }
		\label{fig:sim_power1}
	\end{figure}
	
	We present power curves (empirical rejection rates) for both DGPs and different sample sizes in the individual plot panels in Figure \ref{fig:sim_power1}.
	In each plot, we depict the respective power curves for our three ES encompassing tests and the VaR encompassing test of \cite{GiacominiKomunjer2005} for both null hypotheses and for a nominal significance level of $10\%$ based on $2000$ Monte Carlo replications. 
	We observe increasing power for both DGPs, both tested null hypotheses and for all four encompassing tests for increasing (decreasing) values of the combination parameter $\pi$.
	We find that while the VaR and joint VaR and ES tests are considerably oversized, they produce a similar test power compared to the strict and auxiliary ES encompassing tests, especially for larger (smaller) values of $\pi$.
	Again, the strict and auxiliary ES encompassing tests are almost indistinguishable, which implies that the strict test is robust against the misspecification induced by the VaR/ES GAS models.
	Interestingly, we find that the power curves for the VaR/ES GAS specifications are slightly asymmetric implying that the tests react differently to the specifications of different numbers of driving factors in the GAS models.

	\subsection{Extensions of the Simulation Setup}
	\label{sec:SimulationExtensions}

	In this section, we consider three extensions of our simulation setup. 
	First, in Section \ref{sec:SimulationDGP}, we present two additional DGPs.
	Second, in Section \ref{sec:SimulationLosses}, we analyze the behavior of our tests under different strictly consistent loss functions for the pair VaR and ES.
	Third, in Section \ref{sec:SimulationLinkFunctions}, we employ two additional link functions by testing for forecast encompassing for affine and nonlinear forecast combinations.

	\subsubsection{Forecast Encompassing under Different Data Generating Processes}
	\label{sec:SimulationDGP}
	
	In this subsection, we consider two additional DGPs, namely the GAS-$t$ model of \cite{Creal2013} and the ES-CAViaR models of \cite{Taylor2019}, which are described in detail in Section S.1 in the supplementary material.
	Both models go beyond the class of pure scale models and consequently generate model misspecification in the quantile regression equation of the strict ES encompassing test.
	Figure \ref{fig:sim_power_AddDGPs} presents power curves for these two DGPs and Table S.3 in the supplementary material reports the corresponding test sizes.
	
	\begin{figure}[tbh]
		\includegraphics[width=\linewidth]{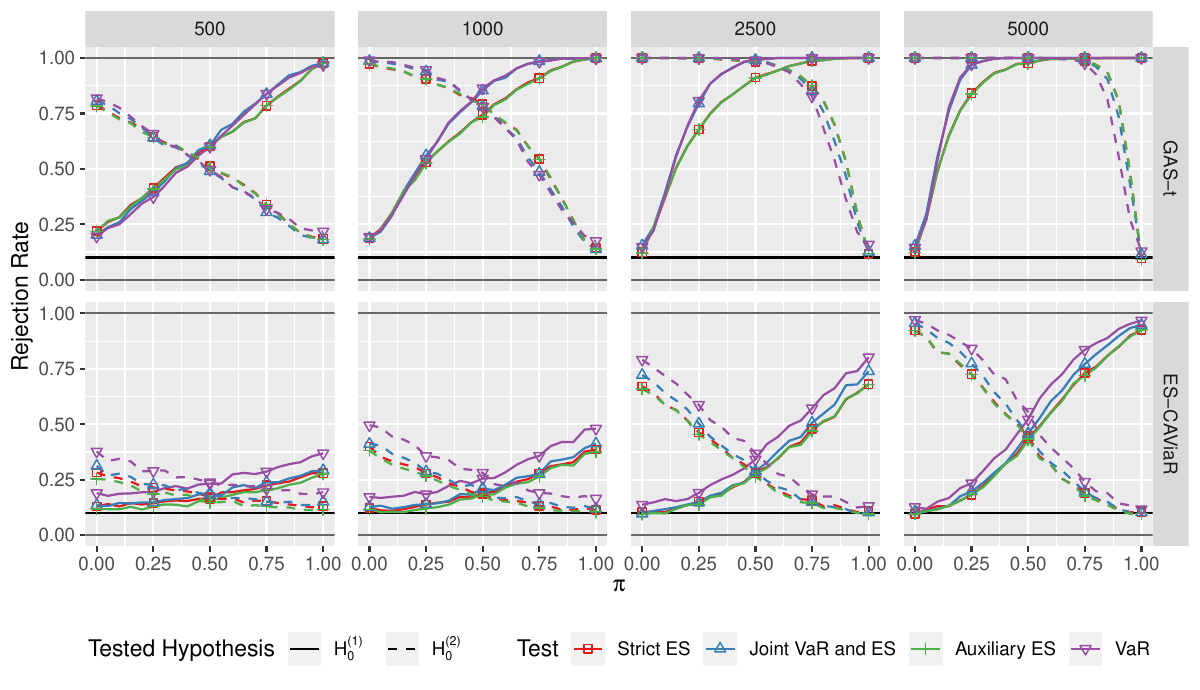}
		\caption{This figure shows power curves (empirical rejection frequencies) for the encompassing tests with a nominal size of $10\%$. The {plot rows} correspond to the GAS-$t$ and the ES-CAViaR DGPs described in Section S.1 in the supplementary material.
		The {plot columns} depict different sample sizes, while the colors indicate the four different encompassing tests and the line types refer to the two tested null hypotheses.}
		\label{fig:sim_power_AddDGPs}
	\end{figure}
	
	The results of these two DGPs qualitatively confirm the simulation results of Section \ref{sec:SimulationResults}.
	The three ES specific tests exhibit accurate empirical test sizes, especially in large samples and the strict and the auxiliary tests generally exhibit better size properties than the joint VaR and ES and the stand-alone VaR encompassing tests.
	Increasing the sample size results in increasing power for both DGPs and for all considered encompassing tests.
	Noteworthy, all tests show considerably lower power for the ES-CAViaR DGP compared to the other DGPs.
	This result is comparable with the power results of \cite{GiacominiKomunjer2005} as this DGP is a slightly modified version of their DGP.
	In summary, these two additional DGPs demonstrate that the new ES encompassing tests perform well for a variety of realistic data generating processes.

	\subsubsection{Forecast Encompassing under Different Loss Functions}
	\label{sec:SimulationLosses}
	
	The three encompassing test specifications for the ES presented in Section \ref{sec:EncompassingTestsES} are built on the zero-homogeneous joint loss function for the VaR and ES given in (\ref{eqn:JointLossESReg0Hom}).
	While this loss function is the most popular choice in the recent literature on semiparametric ES models \citep{Patton2019, BayerDimi2019, Taylor2019}, there exists an entire class of joint loss functions for the VaR and ES, proposed by \cite{Fissler2016} as
	\begin{align} 
		\begin{aligned}
			\label{eqn:JointLossESRegGeneral}
			\rho(Y,q_\alpha,e_\alpha) &= \big( \mathds{1}_{\{Y \le q_\alpha\}} - \alpha \big) \mathfrak{g}(q_\alpha) - \mathds{1}_{\{Y \le q_\alpha\}}  \mathfrak{g}(Y) \\
			&+ \phi'(e_\alpha) \left( e_\alpha - q_\alpha + \frac{(q_\alpha - Y) \mathds{1}_{\{Y \le q_\alpha\}}}{\alpha}  \right) - \phi(e_\alpha) + a(Y),
		\end{aligned}
	\end{align}	
	where the function $\mathfrak{g}$ is twice continuously differentiable and increasing, $\phi$ is three times continuously differentiable, strictly increasing and strictly convex, and $a$ and $\mathfrak{g}$ are integrable functions \citep{Fissler2016}.
	The loss function in (\ref{eqn:JointLossESReg0Hom}) is a special case of (\ref{eqn:JointLossESRegGeneral}) for the choices $\mathfrak{g}(z) = 0$,  $\phi(z) = -\log(-z)$ and $a(z) = 0$.
	
	The ES encompassing tests can generally be set up by using any choice of (\ref{eqn:JointLossESRegGeneral}) (fulfilling certain further weak regularity conditions). We consider two additional specifications in the following.
	Following the theory of homogeneous loss functions \citep{Nolde2017} and the numerical performance in linear regression settings \citep{DimiBayer2019}, we fix $\mathfrak{g}(z) = 0$ and in addition to $\phi(z) = -\log(-z)$, we employ the choices $\phi(z) = 1/\sqrt{-z}$ and $\phi(z) = - 1/z$.\footnote{Strictly speaking, the asymptotic theory in Theorem \ref{thm:DistributionEncmpTestStatistics} only covers the M-estimator based on the loss function in (\ref{eqn:JointLossESReg0Hom}).
		However, the proofs and the resulting asymptotic covariance matrices are easily extended to the general case by combining the methods of this paper with the extension of \cite{DimiBayer2019} to the case of general loss functions as given in (\ref{eqn:JointLossESRegGeneral}).}

	\begin{figure}[tbhp]
		\includegraphics[width=\linewidth]{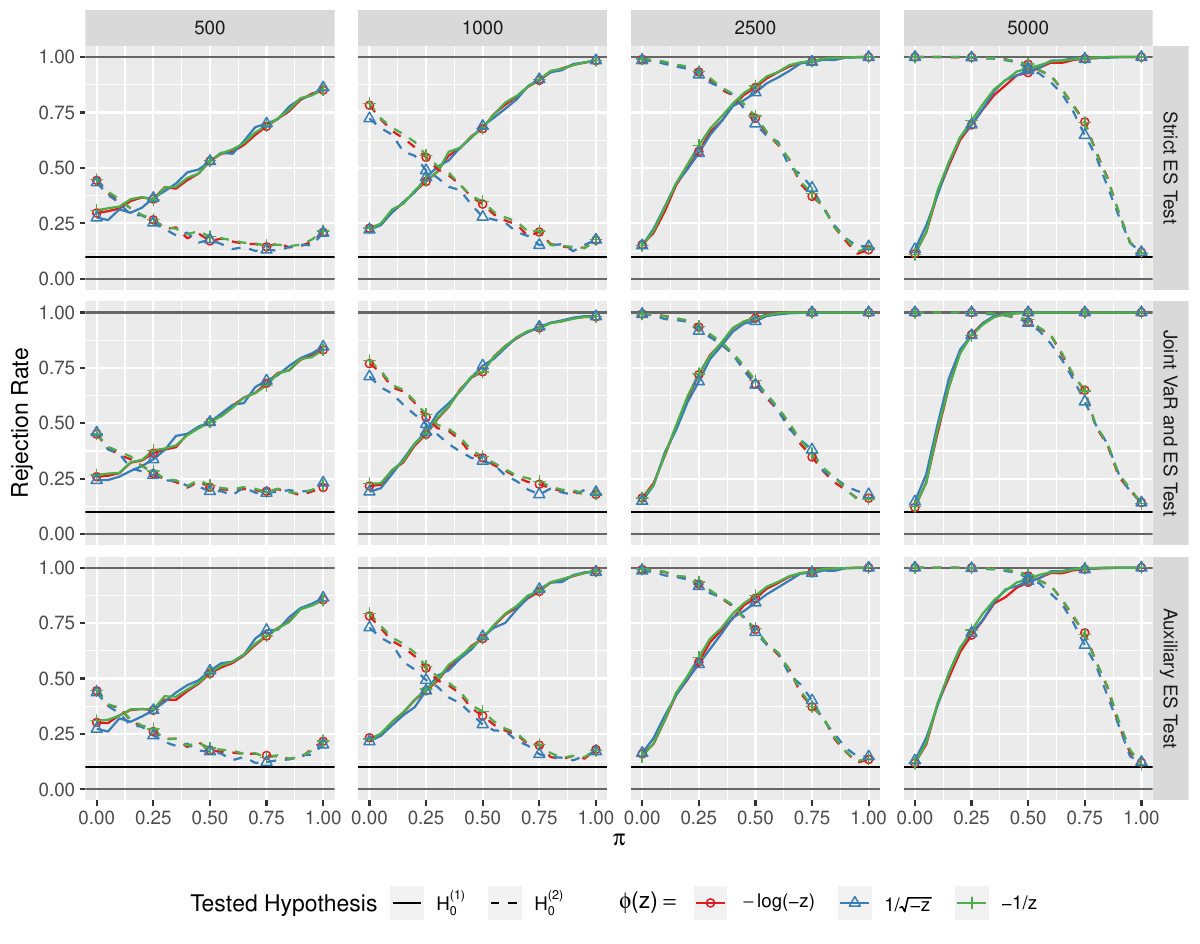}		
		\caption{This figure shows power curves (empirical rejection frequencies) for the three ES specific encompassing tests with a nominal size of $10\%$ for the VaR/ES GAS DGP described in Section S.1 in the supplementary material, where the different colors represent the different loss functions employed in the tests, given in and below (\ref{eqn:JointLossESRegGeneral}).
		The {plot rows}  depict different sample sizes, the {plot columns} show the three different ES encompassing tests and the line types refer to the tested null hypotheses.}		
		\label{fig:sim_power_AddLosses}
	\end{figure}

	Figure \ref{fig:sim_power_AddLosses} shows rejection rates for the VaR/ES GAS DGP for different sample sizes and the three encompassing test specifications for the ES, where the different line colors represent the three different loss specifications.
	Table S.4 in the supplementary material reports the corresponding test sizes.
	We find that the tests based on the three different loss functions perform almost indistinguishably, especially in large samples.
	This result  is not unexpected (especially in large samples where the expectation is well approximated by the sample mean) as  \cite{Fissler2016} show that all loss functions in the class in (\ref{eqn:JointLossESRegGeneral}) are uniquely minimized by the true VaR and ES.
	
	For the potentially misspecified strict ES encompassing test, the pseudo-true parameter defined in (\ref{eqn:DefQn0}) may theoretically depend on the underlying loss function and notice that the VaR/ES GAS DGP used in Figure \ref{fig:sim_power_AddLosses} allows for such a model misspecification.
	However, we find that neither the rejection rates of the strict ES encompassing test are affected by employing different loss functions, nor are the rejection rates of the two correctly specified encompassing tests in Figure \ref{fig:sim_power_AddLosses}.
	This result implies that the potentially different pseudo-true parameters are almost entirely unaffected by the misspecification.

	\subsubsection{Forecast Encompassing under Different Link Functions}
	\label{sec:SimulationLinkFunctions}
	
	In this section, we employ two additional link function specifications.
	First, we consider an affine combination including an intercept\footnote{We include an intercept as this stabilizes the performance of the associated quantile regression. 
	We do not include classical convex combinations (where $0 \le \beta_2 \le 1$) as our theoretical framework does not allow for testing on the boundary (see e.g. \cite{Andrews1999} for details).}
	\begin{align}
		\label{eqn:AffineLinkFunctions}
		g^q( \bfc, \beta) = \beta_1 + \beta_2 \hat f_{1,t} + (1- \beta_2) \hat f_{2,t}
		\quad \text{ and } \quad
		g^e( \be, \eta) = \eta_1 + \eta_2 \hat e_{1,t} + (1- \eta_2) \hat e_{2,t},
	\end{align}
	where $\bfc = \bq$ for the joint and auxiliary tests and $\bfc = \be$ for the strict test. 
	For the joint test, the first null hypothesis is given by $\mathbb{H}_0^{(1)}: (\beta_2^\ast, \eta_2^\ast) = (1,1)$ while for the strict and auxiliary tests, it is given by $\mathbb{H}_0^{(1)}: \eta_2^\ast = 1$.
	The second, opposing null hypotheses, $\mathbb{H}_0^{(2)}$ are obtained by replacing the ones by zeros.
	For the affine link functions, we employ the same DGPs as for the encompassing tests based on linear link functions.

	Furthermore we employ the nonlinear link functions, where $\bfc$ is given as in (\ref{eqn:AffineLinkFunctions}) and
	\begin{align}
		\begin{aligned}
		\label{eqn:NonlinearLinkFunctions}
		g^q( \bfc, \beta) &= \beta_1 - \exp \big( \beta_2 \log(- \hat f_{1,t}) + \beta_3 \log(- \hat f_{2,t}) \big),
		\quad \text{ and }  \\
		g^e( \be, \eta) &= \eta_1 - \exp \big( \eta_2 \log(- \hat e_{1,t}) + \eta_3 \log(- \hat e_{2,t}) \big),
		\end{aligned}
	\end{align}
	and we test the same null hypotheses as for the linear link functions described in (\ref{eqn:SimStudyHypotheses}).
	For the nonlinear link function, we employ a slightly modified GARCH DGP.
	As in Section \ref{sec:DGPs}, let $\hat \sigma_{1,t}$ and $\hat \sigma_{2,t}$ denote the conditional volatilities of the GARCH and GJR-GARCH models.
	Then, we simulate data according to
	\begin{align}
		\label{eqn:LogGARCHDGP}
		Y_{t+1} = \exp \big(  (1-\pi)  \log (\hat \sigma_{1,t} ) +  \pi \log (\hat \sigma_{2,t}) \big) \cdot u_{t+1}
	\end{align}
	for an equally spaced grid of 21 values for $\pi \in [0,1]$ and where  $u_{t+1} \stackrel{iid}{\sim}  \mathcal{N}(0,1)$.
	This ensures that for $\pi=0$, $\hat q_{1,t}$ and $\hat e_{1,t}$ are the correct VaR and ES forecasts, and vice versa for $\pi=1$.
	For any $\pi \in (0,1)$, the true VaR and ES are given by a combination according to the nonlinear link functions in (\ref{eqn:NonlinearLinkFunctions}).
	
	\begin{figure}[htb]
		\includegraphics[width=\linewidth]{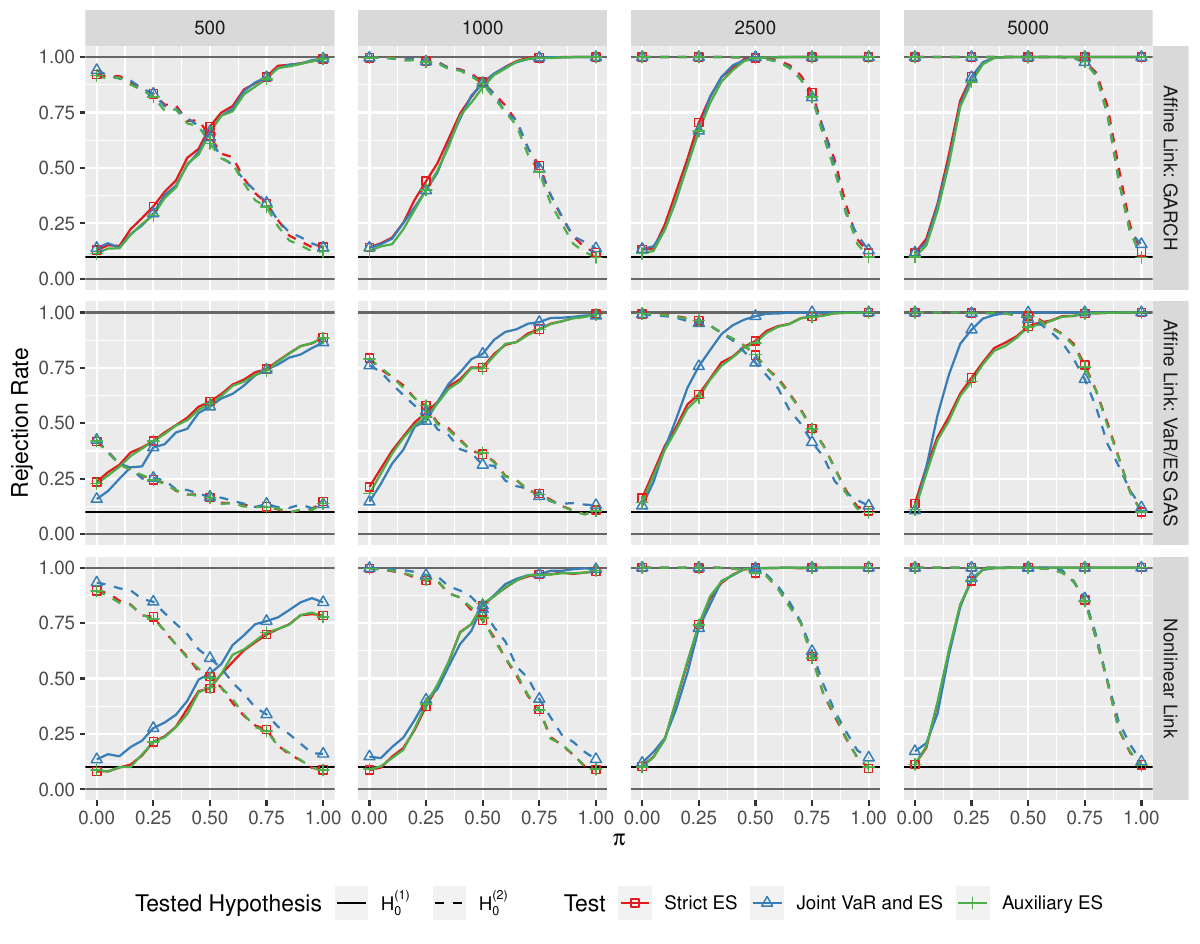}
		\caption{This figure shows power curves (empirical rejection frequencies) for the encompassing tests with a nominal size of $10\%$ for two additional link functions.
		The first two rows of plots present results for the affine link functions given in (\ref{eqn:AffineLinkFunctions}) and for the GARCH and VaR/ES GAS DGPs.
		The third plot row presents results for the nonlinear link functions given in  (\ref{eqn:NonlinearLinkFunctions}) and the respective nonlinear GARCH DGP.
		The {plot columns} depict different sample sizes, while the colors indicate the three ES specific encompassing tests and the line types refer to the two {tested null hypotheses}.}
		\label{fig:sim_power_AddLinks}
	\end{figure}

	Figure \ref{fig:sim_power_AddLinks} presents power curves for these two additional link functions, where for the affine combination, we consider both, the GARCH and the VaR/ES GAS DGPs.
	Table S.5 in the supplementary material presents the associated test sizes. 
	We find that the power curves for the affine combinations in Figure \ref{fig:sim_power_AddLinks} are comparable to the linear specifications, as given in Figure \ref{fig:sim_power1} for both considered DGPs.
	Moreover, the encompassing tests based on the nonlinear link function in the third row of plots perform similar to the linear encompassing tests in the GARCH setting.
	This extension of the simulation setup shows that our ES encompassing tests can be applied based on a variety of different link functions.

	\section{Empirical Application}
	\label{sec:EmpiricalApplication}
	
	We use close-to-close returns from the IBM stock, the S$\&$P 500 index and the DAX 30 index from June 1st, 2000 until May 31st, 2019, which amounts to a total of $T = 4779$ daily observations.
	We use a fixed forecasting scheme, i.e.\ 
	the model parameters are estimated once on the first $m=2000$ in-sample observations.
	These parameter estimates are used to generate the VaR and ES  forecasts in a rolling-window fashion for the remaining out-of-sample period of $n = 2779$ days.
	Following the suggestion of the Basel III Accords, we use the probability level $\alpha = 2.5\%$ for the VaR and the ES.

	For the analysis, we consider the following competing forecasting models.
	First, we employ the Historical Simulation (HS) model which generates VaR and ES forecasts by computing the empirical quantile and ES at level $\alpha$ of the past 250 trading days.
	The second model is the RiskMetrics (RM) model, which models the conditional volatility as an IGARCH equation with fixed parameter values, $\hat \sigma_{t}^2 = 0.94 \hat \sigma_{t-1}^2 + 0.06 Y_{t}^2$ and Gaussian residuals.
	Third, we use the GJR-GARCH(1,1)-$t$ model of \cite{Glosten1993} with Student-$t$ residuals.
	The fourth model is given by the Student-$t$-GAS model with time-varying variance and degrees of freedom introduced in Section S.1 in the supplementary material.
	The fifth and sixth model are the one and two factor GAS models for the VaR and ES of \cite{Patton2019} set out in Section \ref{sec:DGPs} and estimated by minimizing the strictly consistent loss function for the VaR and ES given in (\ref{eqn:JointLossESReg0Hom}).
	The last two models are the two dynamic ES-CAViaR models of \cite{Taylor2019} described in Section S.1 in the supplementary material.
	Table S.6 in the supplementary material shows the correlations of the respective VaR and ES forecasts of these models.
	We find that no pair of forecasts is perfectly correlated, which is crucial for the applicability of the encompassing tests as implied by condition \ref{cond:FullRankConditionNormality} of Assumption \ref{assu:AsymptoticTheory}.

	We run pair-wise encompassing tests comparing all eight forecasting methods. 
	Hence, for each model pair, we run encompassing tests for both hypotheses, i.e.\ that the first forecast encompasses the second, denoted by $\mathbb{H}_0^{(1)}$ and the inverse, denoted by $\mathbb{H}_0^{(2)}$.
	This results in four possible outcomes of these two tests: 
	(1) \textit{non-rejection} (NR) indicates that none of the null hypotheses is rejected and the tests are not helpful.
	(2) \textit{encompassed} (E1) denotes the setting where the first model is encompassed by the competitor model but does not encompass it, i.e.\ $\mathbb{H}_0^{(1)}$ is rejected but $\mathbb{H}_0^{(2)}$ is not, which results in choosing the competitor model.
	(3) \textit{encompassing} (E2) indicates that the first model encompasses the other but is not encompassed by it, i.e.\ $\mathbb{H}_0^{(1)}$ is not rejected but $\mathbb{H}_0^{(2)}$ is, which implies that we choose the first model.
	Finally, (4) \textit{combination} (C) refers to a setting where both null hypotheses are rejected and we consequently opt for a forecast combination.

	\begin{table}[p]
		\centering
		\footnotesize
		\singlespacing
		\caption{Encompassing Test Results for the IBM Stock, the DAX 30 and the S\&P 500 Indices.}
		\label{tab:Freq_EmpAppl}
		\scalebox{0.95}{
		\begin{tabularx}{1.05\linewidth}{l rrrrr  rrrrr rrrrr rrrrr}
			\toprule
			\addlinespace
			& & \multicolumn{4}{c}{Strict ES} & & \multicolumn{4}{c}{Aux ES } & & \multicolumn{4}{c}{Joint VaR ES } & & \multicolumn{4}{c}{VaR }\\
			\cmidrule(lr){3-6} \cmidrule(lr){8-11} \cmidrule(lr){13-16} \cmidrule(lr){18-21}
			Model && NR & E1 & E2 & C & & NR & E1 & E2 & C & & NR & E1 & E2 & C & & NR & E1 & E2 & C  \\  
			\midrule
			\midrule
			\addlinespace
			& & \multicolumn{19}{c}{IBM} \\
			\cmidrule(rr){3-21} 
			\cmidrule(lr){3-6} \cmidrule(lr){8-11} \cmidrule(lr){13-16} \cmidrule(lr){18-21}
			HS &   &   & 57 &   & 43 &   &   & 57 &   & 43 &   &   & 43 &   & 57 &   &   & 43 &   & 57\\
RM &   &   & 57 &   & 43 &   &   & 57 &   & 43 &   &   & 43 &   & 57 &   & 14 & 43 &   & 43\\
GJR &   &   & 57 &   & 43 &   &   & 57 &   & 43 &   &   & 43 &   & 57 &   & 14 & 29 &   & 57\\
GAS &   &   & 43 &   & 57 &   &   & 43 &   & 57 &   &   & 29 &   & 71 &   &   & 29 &   & 71\\
G1F &   & 14 & 29 & 43 & 14 &   & 14 & 29 & 43 & 14 &   &   & 14 &   & 86 &   &   & 14 &   & 86\\
G2F &   & 14 & 29 & 57 &   &   & 14 & 29 & 57 &   &   &   & 29 & 57 & 14 &   &   & 29 & 43 & 29\\
ASES &   & 14 &   & 86 &   &   & 14 &   & 86 &   &   & 14 &   & 57 & 29 &   & 14 &   & 57 & 29\\
SAVES &   & 14 &   & 86 &   &   & 14 &   & 86 &   &   & 14 &   & 86 &   &   & 14 &   & 86 &  \\

			\midrule
			\midrule
			\addlinespace
			& & \multicolumn{19}{c}{S\&P 500} \\
			\cmidrule(rr){3-21} 
			HS &   &   & 86 &   & 14 &   &   & 86 &   & 14 &   &   & 71 &   & 29 &   &   & 86 &   & 14\\
RM &   &   & 71 & 14 & 14 &   &   & 71 & 14 & 14 &   &   & 71 & 14 & 14 &   & 14 & 43 & 14 & 29\\
GJR &   & 29 &   & 71 &   &   & 29 &   & 71 &   &   &   &   & 71 & 29 &   &   &   & 71 & 29\\
GAS &   & 29 & 29 & 29 & 14 &   & 14 & 43 & 29 & 14 &   &   & 57 & 29 & 14 &   & 14 & 43 & 14 & 29\\
G1F &   &   & 29 &   & 71 &   &   & 29 &   & 71 &   &   & 14 &   & 86 &   & 29 & 14 & 29 & 29\\
G2F &   & 29 & 14 & 57 &   &   & 14 & 14 & 71 &   &   &   &   & 57 & 43 &   & 14 &   & 29 & 57\\
ASES &   & 14 &   & 71 & 14 &   & 14 &   & 71 & 14 &   &   &   & 43 & 57 &   &   &   & 29 & 71\\
SAVES &   & 14 & 43 & 29 & 14 &   & 14 & 43 & 29 & 14 &   &   & 43 & 43 & 14 &   & 14 & 29 & 29 & 29\\

			\midrule
			\midrule
			\addlinespace
			& & \multicolumn{19}{c}{DAX 30} \\
			\cmidrule(rr){3-21} 
			HS &   &   & 86 &   & 14 &   &   & 100 &   &   &   &   & 86 &   & 14 &   &   & 86 &   & 14\\
RM &   &   & 43 & 57 &   &   &   & 43 & 57 &   &   &   & 43 & 43 & 14 &   &   & 29 & 14 & 57\\
GJR &   &   & 57 & 43 &   &   &   & 57 & 43 &   &   &   & 14 & 71 & 14 &   & 14 & 14 & 43 & 29\\
GAS &   & 14 & 14 & 71 &   &   & 29 &   & 71 &   &   & 14 & 29 & 43 & 14 &   &   & 14 & 57 & 29\\
G1F &   &   & 71 & 29 &   &   &   & 71 & 29 &   &   &   & 71 & 14 & 14 &   &   & 43 & 14 & 43\\
G2F &   &   & 86 &   & 14 &   &   & 86 & 14 &   &   &   & 57 &   & 43 &   &   & 43 &   & 57\\
ASES &   & 29 &   & 71 &   &   & 29 &   & 71 &   &   &   &   & 86 & 14 &   & 14 &   & 86 &  \\
SAVES &   & 14 &   & 86 &   &   & 29 &   & 71 &   &   & 14 &   & 43 & 43 &   & 29 & 14 & 29 & 29\\

			\bottomrule
			\addlinespace
			\multicolumn{21}{p{1.03\linewidth}} {\textit{Notes:} This table shows results of the pair-wise encompassing test outcomes for the  three ES encompassing tests and the VaR encompassing test of \cite{GiacominiKomunjer2005}, the eight considered models estimated on an in-sample period of $m=2000$ days, a nominal test significance level of 10$\%$ and for the IBM stock, the S\&P 500 and the DAX 30 indices.
			The individual columns report the relative frequencies (in \%) how often the respective test outcome occurs for a model (given in the rows) when this model generates the \textit{first} forecasts in the encompassing tests.
			The column ``NR'' (non-rejection) reports the frequency of cases where neither null hypothesis is rejected, while the column ``C'' (combination) indicates the frequency  that both null hypotheses are rejected.
			The column ``E1'' (encompassed) refers to the case that $\mathbb{H}_0^{(1)}$ is rejected while $\mathbb{H}_0^{(2)}$ is not, i.e. it gives the relative frequencies that this model is encompassed by its competitors and the column ``E2'' (encompassing) refers to the inverse case, i.e. it gives the relative frequencies that this model encompasses its competitors.}
		\end{tabularx}
		}	
	\end{table}

	In Table \ref{tab:Freq_EmpAppl}, we report relative frequencies of test outcomes at the $10\%$ significance level for the different encompassing tests for all three return time series.
	Tables S.7, S.8, S.9 in the supplementary material report the individual $p$-values of the encompassing tests.
	The results can be summarized as follows:
	first, for the IBM stock returns we find many cases of double rejections and hence empirical evidence for using forecast combinations. 
	This implies that the individual models provide additional and exclusive information and hence, a forecast combination is often superior to the stand-alone forecasting models. 
	This finding supports the theoretical advantages of forecast combinations, presented in a general setting e.g., by \cite{GiacominiKomunjer2005}, \cite{Timmermann2006} and \cite{Halbleib2012}, and specifically for the pair VaR and ES by \cite{Taylor2019FCcomb}.
	Second, for the S$\&$P 500 index, and especially for the DAX 30 index, we overall observe less instances of double rejections of the ES encompassing tests.
	While the decrease in cases where the VaR encompassing test opts for a forecast combination is smaller, these rejections have to be considered carefully given that the VaR encompassing test is oversized in all simulation setups in Section \ref{sec:SimulationStudy}, even in large samples.
	This result can be explained by the fact that the S$\&$P 500 and DAX indices are well diversified and the returns fluctuate to a lesser extent and exhibit less extreme outliers than single stock return series.
	Furthermore, the considered VaR and ES forecasts show larger correlations for the indices than for the single stock in Table S.6 in the supplementary material, which negatively influences the tests' power.
	Third, in terms of the frequencies of the cases E1 and E2, we observe recurring patterns over the different models for both time series.
	Especially the ES-specific GAS and CAViaR type models seem to exhibit a superior performance, while the HS, RM, GARCH and GAS-$t$ models tend to be encompassed more often.
	Lastly, the two tests which only focus on testing encompassing of ES forecasts perform almost identically, which supports the conclusion from the simulation study that the potential misspecification does not negatively influence the performance of the strict ES test in realistic financial settings.
	This is encouraging as the strict ES encompassing test can be applied in cases where one does not have VaR forecasts at hand, such as it is currently imposed by the Basel Committee of Banking Supervision \citep{Basel2016, Basel2017}.
	
	Tables S.10, S.11 and S.12 in the supplementary material report the joint VaR and ES losses for the zero-homogeneous loss function in (\ref{eqn:JointLossESReg0Hom}) for forecasts stemming from the stand-alone models and the respective forecast combinations with estimated combination weights from the underlying regressions.
	These results qualitatively confirm the results of the encompassing tests: e.g., for the IBM stock in the first panel of Table \ref{tab:Freq_EmpAppl}, we find that forecast combinations are particularly preferred for the first four models (in the model ordering of the table).
	The average losses in Table S.10 show a similar pattern throughout all three panels. 
	We observe that the optimal forecast combinations exhibit substantially smaller losses compared to the stand-alone models for the first four models while this decrease is of a considerably smaller magnitude for the last four models.

	\FloatBarrier

	\section{Conclusion}
	\label{sec:Conclusion}
	
	With the implementation of the third Basel Accords \citep{Basel2016,Basel2017}, risk managers and regulators currently shift attention towards the risk measure Expected Shortfall (ES), which demonstrates the necessity of forecast evaluation and comparison tools for the ES.
	In this paper, we introduce new forecast encompassing tests for the ES, which are based on a joint loss function and an associated joint regression framework for the ES together with the Value at Risk \citep{Fissler2016, Patton2019, DimiBayer2019}.
	We propose three variants of the ES encompassing test, which allow for testing forecast encompassing for flexible parametric forecast combination methods.
	As one test variant is potentially subject to model misspecification, we extend the existing asymptotic theory of \cite{Patton2019}, \cite{DimiBayer2019} and \cite{BayerDimi2019} to cases of potential model misspecification for flexible parametric models.
	Potential future research on extending the setting presented in this paper includes encompassing tests for convex forecast combinations and forecast combinations which theoretically prevent crossings of the VaR and the ES.
	Both approaches require non-standard asymptotic theory for tests on the boundary of the parameter space.
	
	Tests for forecast encompassing establish a theoretical foundation for forecast combinations of two competing forecasts when both opposing hypotheses of forecast encompassing are rejected.
	This situation corresponds to the case when neither forecast encompasses its competitor.
	Generally, applying forecast combinations can be highly beneficial through the diversification gains stemming from combining different model specifications and underlying information sets.
	This benefit can be particularly pronounced for extreme risk measures such as the ES \citep{Taylor2019FCcomb}, as the stand-alone models are very sensitive to the very little observations in the tails of the return distributions.
	Thus, combining forecasts can be seen as a robustification of the forecasts.
	Our application empirically validates this conjecture, especially pronounced for  daily returns from the IBM stock.
	
	While we propose encompassing tests suitable for the cases that the ES is reported jointly with and without its accompanying VaR forecast, testing encompassing for the VaR and ES jointly is most natural given their joint elicitability.
	Furthermore, this joint elicitability property clearly hints towards reporting ES forecasts jointly with their corresponding VaR forecasts by default.
	In contrast, the auxiliary test can be seen as the first forecast comparison procedure for the ES, which focuses almost entirely on the ES.
	Theoretically, its application is most reasonable in cases where the primary focus of the researcher is on the ES, even though VaR forecasts are available.
	Eventually, the strict ES test is specific to scenarios where only competing ES forecasts are available.
	Furthermore, in cases where competing ES forecasts are built on the same VaR model (forecasts), the strict ES test is applicable while the application of the joint and auxiliary tests is infeasible due to the collinearity of the (identical) VaR forecasts.

	\if0\blind
	{

\section*{Acknowlegdements}
For helpful comments, we thank the editor (Michael McCracken), the associate editor and two referees, as well as Sander Barendse, Sebastian Bayer, Ralf Brüggemann, Joachim Grammig, Alastair Hall, Andrew Patton, James Taylor and the seminar participants at Universität Konstanz, Duke University, Universität Hohenheim, the 2019 QFFE conference in Marseille, the 2019 IAAE conference in Nicosia, the 2019 ESEM conference in Manchester, the 2019 Statistische Woche in Trier and the 2019 Financial Econometrics Workshop in Örebro.
Financial support by the Klaus Tschira Foundation, the Universität Hohenheim, the Graduate School of Decision Sciences (GSDS) and a travel fund of the IAAE is gratefully acknowledged.	

} \fi

	\FloatBarrier
	\singlespacing
	\setlength{\bibsep}{6pt}
	\bibliographystyle{apalike}	
	\bibliography{bib_ESencmp}

	\appendix
		\doublespacing
		
		\renewcommand\thefigure{\thesection.\arabic{figure}}    
		\renewcommand\thetable{\thesection.\arabic{table}}

		\section{Proofs}
		\label{sec:Proofs}

		\begin{proof}[Proof of Proposition \ref{prop:Consistency}] 
			We check that the necessary conditions (i) - (iv) of the basic consistency theorem, given in Theorem 2.1 in \cite{NeweyMcFadden1994}, p.\ 2121 hold, where we consider the objective functions $Q_{n}(\theta)$ and $Q_{n}^0(\theta)$ as defined in (\ref{eqn:DefQn}) and (\ref{eqn:DefQn0}).
			First, notice that condition (ii) holds by imposing condition \ref{cond:ParameterSpace}.
			The unique identification condition (i) holds by assumption \ref{cond:UniqueMinimum}.
			Next, we verify the uniform convergence condition (iv) by applying the uniform weak law of large numbers given in Theorem A.2.5. in \cite{White1994}. For that, we have to show that
			\begin{enumerate} 
				\item 
				\label{cond:LipschitzL1ESReg}
				the map $\theta \mapsto \rho \big( Y_{t+1}, g^q_t(\beta) , g^e_t(\eta)  \big)$ is Lipschitz-$L_1$ on $\Theta$,\footnote{
					See Definition A.2.3 in \cite{White1994} for a definition of Lipschitz-$L_1$.
					Notice that we do not have a double index and thus we suppress the $n$ in the notation of \cite{White1994}. Furthermore, we apply the definition by using the identify function for $a_t^{o}$.
				}
				
				\item 
				\label{cond:LLNlocally}
				For all $\theta^o \in \Theta$, there exists $\delta^o > 0$, such that for all $\delta, 0 < \delta \le \delta^o$, the sequences 
				\begin{align}
				\bar \rho_t(\theta^o,\delta) &:=  \sup_{\theta \in \Theta} \left\{ \left. \rho \big( Y_{t+1}, g^q_t(\beta) , g^e_t(\eta)  \big) \right| || \theta - \theta^o || < \delta \right\} \qquad \text{and} \\
				\ubar{\rho}_t(\theta^o,\delta) &:=  \inf_{\theta \in \Theta} \left\{ \left. \rho \big( Y_{t+1}, g^q_t(\beta) , g^e_t(\eta)  \big) \right| || \theta - \theta^o || < \delta \right\}
				\end{align}
				obey a weak law of large numbers.
			\end{enumerate}
			Condition \ref{cond:LipschitzL1ESReg} follows directly from Lemma S.1 in the supplementary material and we turn to condition \ref{cond:LLNlocally}.
			As the process $Z_t$ is strong mixing of size $-r/(r-2)$ for some $r>2$ by condition \ref{cond:AlphaMixing} and as the functions $\rho \big( Y_{t+1}, g^q_t(\beta) , g^e_t(\eta) \big)$ and the supremum/infimum functions are $\mathcal{F}_t$-measurable for all $t \in \mathbb{N}$, we can conclude that the sequences $\bar \rho_t(\theta^o,\delta)$ and $\ubar{\rho}_t(\theta^o,\delta)$ are also strong mixing of the same size by applying the same theorem.
			
			Furthermore, for $\tilde r > 1$ and for some $\delta > 0$ sufficiently small, $r \ge \tilde r+\delta$ and thus $\mathbb{E} \left[ |\bar{\rho}_t(\theta^o,\delta)|^{\tilde r + \delta} \right] \le \sup_{1\le t \le T}\mathbb{E} \left[ \sup_{\theta \in \Theta} \left| \rho \big( Y_{t+1}, g^q_t(\beta) , g^e_t(\eta)  \big) \right|^{r} \right]$ for all $t, 1\le t \le T, T\ge1$.
			As $\Theta$ is compact, there exists some $c > 0$ such that $\sup_{\theta \in \Theta} || \theta|| \le c$ and thus, for all $t = 1,\dots,T$, it holds that 
			\begin{align}
			&\mathbb{E} \left[ \sup_{\theta \in \Theta} \left| \rho \big( Y_{t+1}, g^q_t(\beta) , g^e_t(\eta)  \big)  \right|^{r} \right] \\ 
			\le \, &4^{r-1} \left\{ 
			1 
			+ \left( \frac{c}{K} \left( 1+ \frac{1}{\alpha} \right) \right) \mathbb{E} || g^q_t(\beta) ||^r
			+ \frac{1}{\alpha K} \mathbb{E} |Y_{t+1}|^r
			+ \sup_{\theta \in \Theta} \mathbb{E} || \log(g^e_t(\eta) )||^r
			\right\},
			\end{align}
			which is bounded by condition \ref{cond:MomentCondition} and as $\log(z) \le z$ for $z$ sufficiently large.
			The same inequality holds for $|\ubar{\rho}_t(\theta^o,\delta)|$.
			Thus, we can apply the weak law of large numbers for strong mixing sequences in Corollary 3.48 in \cite{White2001}, p.\ 49 in order to conclude that for all $\theta^o \in \Theta$ such that $||\theta^o - \theta || \le \delta$, it holds that $\frac{1}{n} \sum_{t=m}^{T-1} \big( \bar{\rho}_t(\theta^o,\delta) -  \mathbb{E} \left[ \bar{\rho}_t(\theta^o,\delta) \right] \big) \stackrel{\mathbb{P}}{\to} 0$ and $\frac{1}{n} \sum_{t=m}^{T-1} \big( \ubar{\rho}_t(\theta^o,\delta) - \mathbb{E} \left[ \ubar{\rho}_t(\theta^o,\delta) \right] \big) \stackrel{\mathbb{P}}{\to} 0$, which shows condition \ref{cond:LLNlocally}.
			Consequently, the uniform convergence condition (iv) holds by applying the uniform weak law of large numbers given in Theorem A.2.5. in \cite{White1994}.
			
			As we have shown that the map $\theta \mapsto \rho \big( Y_{t+1}, g^q_t(\beta) , g^e_t(\eta)  \big)$ is Lipschitz-$L_1$  in Lemma S.1 in the supplementary material, the map $\theta \mapsto Q_{n}^0 = \frac{1}{n} \sum_{t=m}^{T-1} \mathbb{E} \left[  \rho \big( Y_{t+1}, g^q_t(\beta) , g^e_t(\eta)  \big) \right]$ is also continuous which shows condition (iii).
			Thus, we can apply Theorem 2.1. of \cite{NeweyMcFadden1994} which concludes the proof of this proposition.
		\end{proof}

		\begin{proof}[Proof of Proposition \ref{prop:AsymptoticNormality}]
			We define $\Psi_{n}(\theta) = \frac{1}{n} \sum_{t=m}^{T-1} \psi \big( Y_{t+1}, g^q_t(\beta) , g^e_t(\eta)  \big)$ and $\Psi_{n}^0(\theta) = \mathbb{E} [ \Psi_{n}(\theta) ]$.
			From the proof of Lemma S.2 in the supplementary material, we get the mean value expansion (for $\hat \theta_n$ close to $\theta^\ast_n$),
			\begin{align}
			\label{eqn:ProofApplicationMeanValue}
			\Psi_{n}^0(\hat \theta_n) - \Psi_{n}^0(\theta_n^\ast) = \Delta_{n}(\tilde \theta_1, \dots, \tilde \theta_k)  \big( \hat \theta_n - \theta^\ast_n \big),
			\end{align} 
			for (possibly different) values $\tilde \theta_1, \dots , \tilde \theta_k$ somewhere on the line between $\hat \theta_n$ and $\theta_n^\ast$, where the components of $\Delta_{n}(\tilde \theta_1, \dots, \tilde \theta_k)$  are given in Lemma S.2 in the supplementary material, and where $\Psi_{n}^{0}(\theta_n^\ast) = 0$.\footnote{The mean-value theorem cannot be generalized in a straight-forward fashion to vector-valued functions. Thus, we have to consider the mean value expansion in each component separately which gives this more complicated expression.}
			
			Furthermore, it holds that $\Delta_{n}(\theta_n^\ast, \dots, \theta_n^\ast) = \Lambda_n(\theta^\ast_n) $ and $\Delta_{n}(\tilde \theta_1, \dots, \tilde \theta_k)$ is a continuous function in its arguments $\tilde \theta_1,\dots, \tilde \theta_k$.
			Using that $\Lambda_n(\theta^\ast_n) $ has Eigenvalues bounded away from zero (for $n$ large enough), we also get that $\Delta_{n}(\tilde \theta_1, \dots, \tilde \theta_k)$ is non-singular in a neighborhood around $\theta^\ast_n$ (for all arguments) for $n$ large enough as the map which maps the matrix onto its Eigenvalues is continuous.
			As we further know that $\hat \theta_n -  \theta^\ast_n \stackrel{\mathbb{P}}{\to} 0$ and $|| \tilde \theta_j - \theta_n^\ast || \le || \hat \theta_n - \theta^\ast_n ||$ for all $j = 1,\dots,k$, we get from the continuous mapping theorem that
			\begin{align}
			\label{eqn:ContinuousMappingDeltaT}
			\Delta_{n}^{-1}(\tilde \theta_1, \dots, \tilde \theta_k) - \Lambda_n^{-1}(\theta^\ast_n)  \stackrel{\mathbb{P}}{\to} 0.
			\end{align}
			
			In the following, we apply Lemma A.1 in \cite{Weiss1991} (by verifying its assumptions), which extends the iid results of \cite{Huber1967} to strong mixing sequences. 
			Assumption (N1) of Lemma A.1 in \cite{Weiss1991} is satisfied as every almost surely continuous stochastic process is separable in the sense of Doob \citep{GikhmanSkorokhod2004} and the functions $\psi \big( Y_{t+1}, g^q_t(\beta) , g^e_t(\eta)  \big)$ are almost surely continuous for all $t \in \mathbb{N}$.
			Assumption (N2) is satisfied as shown in the proof of Proposition \ref{prop:Consistency}.
			Assumption (N3)(i) is shown in Lemma S.2 in the supplementary material.
			The technical Assumptions (N3)(ii) and (N3)(iii) follow from Lemma 4 and Lemma 5 in the supplemental appendix of \cite{Patton2019}.
			For this, notice that the moment conditions in Assumption 2 (C) and (D) of \cite{Patton2019} are implied by the condition \ref{cond:MomentCondition} in Assumption \ref{assu:AsymptoticTheory}.
			Assumption (N4) follows from the moment conditions \ref{cond:MomentCondition} in Assumption \ref{assu:AsymptoticTheory}  and Assumption (N5) from the strong mixing condition \ref{cond:AlphaMixing}. 
			Furthermore,  Lemma 2 in the supplemental appendix of \cite{Patton2019} implies that $\sqrt{n} \Psi_{n}(\hat \theta_n) \stackrel{\mathbb{P}}{\to} 0$.
			Thus, we can apply Lemma A.1 in \cite{Weiss1991} and get that 
			\begin{align}
			\label{eqn:ResultWeiss1991ESReg}
			\sqrt{n} \Psi_{n}^0(\hat \theta_n) - \sqrt{n} \Psi_{n}(\theta^\ast_n) \stackrel{\mathbb{P}}{\to} 0.
			\end{align}
			Combining (\ref{eqn:ProofApplicationMeanValue}), (\ref{eqn:ContinuousMappingDeltaT}) and (\ref{eqn:ResultWeiss1991ESReg}), we get that
			\begin{align}
			&\sqrt{n} \big( \hat \theta_n - \theta^\ast_n \big) = - \Delta_{n}(\tilde \theta_1, \dots, \tilde \theta_k)^{-1} \sqrt{n} \, \Psi_n^0(\hat \theta_n) \\
			= \, &- \left( \Lambda_n^{-1}(\theta^\ast_n)  + o_p(1) \right) \cdot \left( \sqrt{n} \Psi_n(\theta^\ast_n) + o_p(1)\right)
			= - \Lambda_n^{-1}(\theta^\ast_n)  \cdot \sqrt{n} \Psi_n(\theta^\ast_n) + o_p(1).
			\end{align}
			Furthermore, $\Sigma_n^{-1/2}(\theta^\ast_n) \sqrt{n} \Psi_n(\theta^\ast_n) \stackrel{d}{\to} \mathcal{N} \big( 0, I_k \big)$ by Lemma S.3 in the supplementary material and thus, $\Sigma_n^{-1/2}(\theta^\ast_n) \Lambda_n(\theta^\ast_n) \, \sqrt{n} \big( \hat \theta_n - \theta^\ast_n \big) \stackrel{d}{\to} \mathcal{N} \big( 0, I_k \big)$,
			which concludes the proof of this proposition.
		\end{proof}

		\begin{proof}[Proof of Theorem \ref{thm:DistributionEncmpTestStatistics}]
			
			We first notice that
			\begin{align}
			\widehat \Omega_n^{-1/2}  \sqrt{n} \big( \hat \theta_n - \theta_n^\ast \big)
			=  \Omega_n^{-1/2}  \sqrt{n} \big( \hat \theta_n - \theta_n^\ast \big)
			+  \big( \widehat \Omega_n^{-1/2} - \Omega_n^{-1/2} \big)   \sqrt{n} \big( \hat \theta_n - \theta_n^\ast \big).
			\end{align}
			From Proposition \ref{prop:AsymptoticNormality}, we obtain that  $\Omega_n^{-1/2}  \sqrt{n} \big( \hat \theta_n - \theta_n^\ast \big) \stackrel{d}{\to} \mathcal{N} \big( 0, I_k \big)$.
			Furthermore, as $\big( \widehat \Omega_n^{-1/2} - \Omega_n^{-1/2} \big) = o_P(1)$ by assumption, we apply Slutzky's theorem in order to get that $\big( \widehat \Omega_n^{-1/2} - \Omega_n^{-1/2} \big) \sqrt{n} \big( \hat \theta_n - \theta_n^\ast \big) = o_P(1)$.
			Thus, $\widehat \Omega_n^{-1/2}  \sqrt{n} \big( \hat \theta_n - \theta_n^\ast \big) \stackrel{d}{\to} \mathcal{N} \left( 0, I_k \right)$
			and the result based on the selection matrices $R$ follows by applying the continuous mapping theorem, 	which concludes the proof of this theorem.
		\end{proof}

\def\spacingset#1{\renewcommand{\baselinestretch}%
	{#1}\small\normalsize} \spacingset{1}

\renewcommand{\thetable}{\arabic{table}}   
\renewcommand{\thefigure}{\arabic{figure}}



\newpage
\setcounter{page}{1}
\begin{center}
	SUPPLEMENTARY MATERIAL FOR   \vspace{10pt} \\
	{\Large\bf {Forecast Encompassing Tests for the \\ Expected Shortfall} \vspace{10pt} }\\
	\today \\ 
\end{center}

\setcounter{section}{0}	
\setcounter{page}{1}	
\setcounter{table}{0}
\setcounter{figure}{0}

\renewcommand{\thesection}{S.\arabic{section}}   
\renewcommand{\thepage}{S.\arabic{page}}  
\renewcommand{\thetable}{S.\arabic{table}}   
\renewcommand{\thetheorem}{S.\arabic{theorem}}
\renewcommand{\thefigure}{S.\arabic{figure}}  

\onehalfspacing

\section{Additional Data Generating Processes}	
\label{sec:AdditionalDGPs}

In the following, we describe two additional data generating processes (DGPs) used in the extensions of the simulation study in Section 3.3.1.

\subsection*{The GAS DGP}

We introduce two specifications of the GAS models \citep{Creal2013}, where the second candidate potentially generates model misspecification in the strict ES encompassing test.
For this, we generate $\tilde Y_{1,t+1}$, $\hat q_{1,t}$ and $\hat e_{1,t}$ from a GAS model with Gaussian innovations, which corresponds to the standard GARCH specification given in (3.2).
We obtain the second sequence of forecasts from a GAS model with Student-$t$ residuals with time-varying variance and degrees of freedom, given by
\begin{align}
(\hat \mu_2, \hat \sigma_{2,t}^2 , \hat \nu_{2,t})^\top = \kappa + B \cdot (\hat \mu_2, \hat \sigma_{2,t-1}^2 , \hat \nu_{2,t-1})^\top  + A H_t \nabla_t,
\end{align}
where $H_t \nabla_t$ is the forcing variable of the model, the scaling matrix $H_t$ is the Hessian and $\nabla_t$ the derivative of the log-likelihood function.
We calibrate both models to daily IBM returns resulting in the parameter values $\kappa = ( 0.0659,  0.00599, -1.737)$, $A =  \operatorname{diag}(0, 0.146, 7.563)$ and  $B =  \operatorname{diag}(0, 0.994, 7.381)$. 
This model implies that $\tilde Y_{2,t+1} \sim t_{\hat \nu_{2,t}} \big( \hat \mu_2, \hat \sigma_{2,t}^2 \big)$ and we obtain the VaR and ES forecasts from this $t$-distribution.
In order to simulate returns which follow a convex combination of these two distributions, we simulate Bernoulli draws $\pi_{t+1} \sim \operatorname{Bern}(\pi)$ and let $Y_{t+1}  = (1-\pi_{t+1}) \tilde Y_{1,t+1}  + \pi_{t+1} \tilde Y_{2,t+1}$, as for the VaR/ES-GAS models in Section 3.1.

\subsection*{The VaR/ES CAViaR DGP}

This simulation setup follows the dynamic ES models of \cite{Taylor2019}, which we denote by ES-CAViaR as they augment the CAViaR models of \cite{Engle2004} with a dynamic ES specification.
The asymmetric slope AS-ES-CAViaR model is given by
\begin{align}
\label{eqn:AS-ES-CAViaR}
\hat q_{1,t} &= -0.0003 - 0.05 |\tilde Y_{1,t}| \mathds{1}_{\{ \tilde Y_{1,t} \ge 0 \}} - 0.15 |\tilde Y_{1,t}| \mathds{1}_{\{ \tilde Y_{1,t} < 0 \}} + 0.8 \hat q_{1,t-1}, \qquad \text{ and } \\
\hat e_{1,t} &= \hat q_{1,t} - x_{t}, \qquad \text{ where } \label{eqn:ESCAViaRESSpecification} \\
x_{t} &= 
\begin{cases}
0.00017 + 0.125 (\hat q_{1,t-1} - \tilde Y_{1,t}) + 0.84 \hat q_{1,t-1}  \qquad &\text{ if } \hat q_{1,t-1} \le \tilde Y_{1,t}, \\
x_{t-1} \qquad &\text{ if } \hat q_{1,t-1} > \tilde Y_{1,t}.
\end{cases} \label{eqn:ESCAViaRXSpecification}
\end{align}
The second model variant we consider is the symmetric absolute value SAV-ES-CAViaR model, where the quantile equation is given by
\begin{align}
\label{eqn:CARE_SAV}
\hat q_{2,t} &= -0.0003 - 0.1 |\tilde Y_{2,t}| + 0.8 \hat q_{2,t-1},
\end{align}
and $\hat e_{2,t}$ and $x_{t}$ follow the dynamic specifications in  (\ref{eqn:ESCAViaRESSpecification}) and (\ref{eqn:ESCAViaRXSpecification}).
These parameter choices are slightly modified values of the ones obtained by \cite{Taylor2019}.
In this setup, we simulate data according to the additive model $Y_{t+1} = \big( (1-\pi) \hat e_{1,t} + \pi \hat e_{2,t}  \big) + \varepsilon_{t+1}$, where $\varepsilon_{t+1}\sim \mathcal{N} \big(-\sigma \xi_\alpha, \sigma^2 \big)$, for $\sigma = 0.1$.
This implies that for $\pi = 0$, $ES_\alpha(Y_{t+1} | \mathcal{F}_{t}) = \hat e_{1,t}$ almost surely, and the same holds inversely for $\pi = 1$.
This setup generalizes the CAViaR DGP used in the simulations for the VaR encompassing test of \cite{GiacominiKomunjer2005} to the ES.

\section{Technical Proofs}

\allowdisplaybreaks

\begin{lemma}
	\label{lemma:LipschitzL1}
	Given the conditions from Assumption 2.7, the function $\rho \big( Y_{t+1}, g^q_t(\beta) , g^e_t(\eta)  \big) $ is Lipschitz-$L_1$ on $\Theta$ with $\mathcal{F}_t$-measurable and integrable Lipschitz-constant.  
\end{lemma}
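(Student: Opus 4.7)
The strategy is to bound the increment
\[
\big| \rho \big( Y_{t+1}, g^q_t(\beta_1), g^e_t(\eta_1) \big) - \rho \big( Y_{t+1}, g^q_t(\beta_2), g^e_t(\eta_2) \big) \big|
\]
by $B_t \, \| \theta_1 - \theta_2 \|$ for a single random variable $B_t$ which is $\mathcal{F}_t$-measurable and integrable. I would first decompose the FZ0-loss in (\ref{eqn:JointLossESReg0Hom}) into three pieces: $T_2(q,e) = q/e$, $T_3(Y,q,e) = -(q-Y)^+ / (\alpha e)$ where $(q-Y)^+ = (q-Y) \mathds{1}_{\{Y \le q\}}$, and $T_4(e) = \log(-e)$, ignoring the additive constant. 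Each piece is treated separately and the results are combined by the triangle inequality.

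Next, I would show that each piece is Lipschitz in $(q,e)$ on any bounded subset on which $|1/e| \le K$. For $T_2$ and $T_4$, this is immediate from the mean value theorem together with the derivative bound $|1/e| \le K$ from condition \ref{cond:ESModelBounded}. The only subtle term is $T_3$, which is not differentiable in $q$; however, the map $q \mapsto (q-Y)^+$ is $1$-Lipschitz by the non-expansiveness of the positive-part function, so the standard product-rule-style splitting
\[
\big| T_3(Y,q_1,e_1) - T_3(Y,q_2,e_2) \big|
\le \tfrac{K}{\alpha} \, |q_1-q_2| \; + \; \tfrac{K^2}{\alpha} \, (|q_2|+|Y|) \, |e_1-e_2|
\]
goes through without differentiation through the indicator. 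This bypasses what would otherwise be the main technical headache.

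Finally, I would lift these bounds from $(q,e)$ to $(\beta,\eta)$ by applying the mean value theorem to $g^q_t$ and $g^e_t$, which are twice continuously differentiable by condition \ref{cond:FullRankConditionNormality}, and using the uniform gradient bounds $\| \nabla g^q_t(\beta) \| \le Q_1$ and $\| \nabla g^e_t(\eta) \| \le E_1$ from condition \ref{cond:MomentCondition}. Combining all pieces gives a Lipschitz constant of the schematic form
\[
B_t \; \lesssim \; K Q_1 \, + \, K^2 (Q+|Y_{t+1}|) E_1 \, + \, K E_1 \, + \, K Q_1/\alpha \, + \, K^2 (Q+|Y_{t+1}|) E_1/\alpha,
\]
which is $\mathcal{F}_t$-measurable by construction.

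The final and only bookkeeping obstacle will be integrability of $B_t$. Expanding the products, we need finiteness of expectations like $\mathbb{E}[Q_1]$, $\mathbb{E}[E_1]$, $\mathbb{E}[Q E_1]$ and $\mathbb{E}[|Y_{t+1}| E_1]$. The first two follow from items (i)--(ii) of condition \ref{cond:MomentCondition} by Jensen's inequality, and the mixed moments follow from the cross-moment items in \ref{cond:MomentCondition} together with Cauchy--Schwarz (or H\"older) applied to $\mathbb{E}[Y_{t+1}^{2r}]$ in item (xvii) combined with $\mathbb{E}[E_1^{r+1}]$. Once these are verified, $\mathbb{E}[B_t]<\infty$ and the Lipschitz-$L_1$ property on $\Theta$ is established, which is the form in which this lemma is invoked in the proof of Proposition \ref{prop:Consistency}.
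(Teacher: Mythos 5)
Your plan is correct and follows essentially the same route as the paper: decompose the FZ0-loss into a smooth part (handled by the mean value theorem with the bounds $|1/g^e_t(\eta)|\le K$, $Q_1$, $E_1$) and the kinked term containing the indicator, then verify integrability of the resulting Lipschitz constant from condition \ref{cond:MomentCondition}. Your observation that $q\mapsto (q-Y)\mathds{1}_{\{Y\le q\}}=(q-Y)^+$ is $1$-Lipschitz is exactly the content of the paper's four-case partition of $\Omega$ into the events $\Gamma_1,\dots,\Gamma_4$ (where the crossing cases use $|g^q_t(\beta^o)-Y_{t+1}|\le |g^q_t(\beta^o)-g^q_t(\beta)|$), so the two arguments coincide up to presentation.
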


\begin{proof}
	We split the $\rho$-function $\rho \big( Y_{t+1}, g^q_t(\beta) , g^e_t(\eta)  \big) = \rho_1 \big( Y_{t+1}, g^q_t(\beta) , g^e_t(\eta)  \big)  + \rho_2 \big( Y_{t+1}, g^q_t(\beta) , g^e_t(\eta)  \big) $, where
	\begin{align*} 
	\rho_1 \big( Y_{t+1}, g^q_t(\beta) , g^e_t(\eta)  \big)  &= - \mathds{1}_{\{Y_{t+1} \le g^q_t(\beta) \}}  \frac{1}{\alpha g^e_t(\eta) } ( g^q_t(\beta)  - Y_{t+1}) , \\
	\rho_2 \big( Y_{t+1}, g^q_t(\beta) , g^e_t(\eta)  \big) &=  \frac{g^q_t(\beta)  - g^e_t(\eta)  }{g^e_t(\eta)}  - \log(-g^e_t(\eta) ).
	\end{align*}	
	Local Lipschitz continuity of $\rho_2$ follows since it is a continuously differentiable function in $\theta$ (such that $g^e_t(\eta) \not= 0$) and thus (locally) Lipschitz-$L_1$.
	We consequently get that for all $\theta^o \in \Theta$, there exists a $\delta^o > 0$ such that for all $\theta \in U_{\delta^o}(\theta^o) := \big\{ \theta \in \Theta \big| || \theta - \theta^o|| \le \delta^o \big\}$, it holds that
	\begin{align}
	\begin{aligned}
	&\big| \rho_2 \big( Y_{t+1}, g^q_t(\beta^o) , g^e_t(\eta^o)  \big)  - \rho_2 \big( Y_{t+1}, g^q_t(\beta) , g^e_t(\eta)  \big)  \big| \\
	\le \, &\big|\big| \theta - \theta^o \big|\big|
	\cdot \sup_{\theta \in U_{\delta^o}(\theta^o)} \left( \left| \left| \frac{\nabla_\beta g^q_t(\beta)+ \nabla_\eta g^e_t(\eta)}{ g^e_t(\eta)} \right|\right| + \left| \left| \frac{ g^q_t(\beta) \nabla_\eta g^e_t(\eta) }{( g^e_t(\eta))^2} \right|\right| \right),
	\end{aligned}
	\end{align}
	where the sequences $\frac{1}{n} \sum_{t=m}^{T-1}  \mathbb{E} \left[  \left| \left| \frac{\nabla_\beta g^q_t(\beta)+ \nabla_\eta g^e_t(\eta)}{ g^e_t(\eta)} \right|\right| \right]$ and $\frac{1}{n} \sum_{t=m}^{T-1}  \mathbb{E} \left[  \left| \left| \frac{ g^q_t(\beta) \nabla_\eta g^e_t(\eta) }{( g^e_t(\eta))^2} \right|\right| \right]$ are bounded for all $\theta^o \in \Theta$ by the conditions (h) in Assumption 2.7.
	
	For the function $\rho_1$, we consider the following four cases.
	First, let $\Gamma_1 = \big\{ \omega \in \Omega, \theta \in U_{\delta^o}(\theta^o) \, \big| \, g^q_t(\beta^o)(\omega) < Y_{t+1}(\omega) \; \text{ and } \; g^q_t(\beta)(\omega) < Y_{t+1}(\omega) \big\}$.
	Then, on $\Gamma_1$, it holds that,
	\begin{align}
	\rho_1 \big( Y_{t+1}, g^q_t(\beta) , g^e_t(\eta)  \big)  = \rho_1 \big( Y_{t+1}, g^q_t(\beta^o) , g^e_t(\eta^o)  \big)  = 0,
	\end{align}
	which is Lipschitz-$L_1$. 
	
	Second, let $\Gamma_2 = \big\{ \omega \in \Omega, \theta \in U_{\delta^o}(\theta^o) \, \big| \, g^q_t(\beta^o)(\omega) \ge Y_{t+1}(\omega) \; \text{ and } \; g^q_t(\beta)(\omega) \ge Y_{t+1}(\omega) \big\}$.
	On $\Gamma_2$, for both $\tilde \theta \in \{ \theta, \theta^o\}$, it holds that
	\begin{align}
	\rho_1 \big( Y_{t+1}, g^q_t(\tilde \beta) , g^e_t(\tilde \eta)  \big)  = - \frac{1}{\alpha g^e_t(\tilde \eta)}  \big( g^q_t(\tilde \beta) - Y_{t+1} \big),
	\end{align}
	which is a continuously differentiable function. 
	Thus,
	\begin{align}
	\begin{aligned}
	&\big| \rho_1 \big( Y_{t+1}, g^q_t( \beta^o) , g^e_t( \eta^o)  \big) - \rho_1 \big( Y_{t+1}, g^q_t( \beta) , g^e_t( \eta)  \big) \big| \\
	\le \, &\big|\big| \theta^o - \theta \big|\big|
	\cdot  \left( \sup_{\theta \in U_{\delta^o}(\theta^o)} \left|\left|  \frac{ \nabla_\beta g^q_t(\beta) }{\alpha g^e_t(\eta)}   \right|\right|  +  \sup_{\theta \in U_{\delta^o}(\theta^o)}  \left|\left|  \frac{ \nabla_\eta g^e_t(\eta) }{\alpha (g^e_t(\eta) )^2} (g^q_t(\beta) - Y_{t+1})   \right|\right| \right),
	\end{aligned}
	\end{align}
	where the average of the expectations of the suprema sequences in the last two lines are bounded by the conditions  (h) in Assumption 2.7.
	
	Finally, let $\Gamma_3 = \big\{ \omega \in \Omega, \theta \in U_{\delta^o}(\theta^o)  \, \big| \,  g^q_t( \beta)(\omega) < Y_{t+1}(\omega) \le g^q_t( \beta^o)(\omega)  \big\}$.
	As on $\Gamma_3$, $|g^q_t( \beta^o)  - Y_{t+1} | \le | g^q_t( \beta^o)  - g^q_t( \beta)  | $ almost surely, it holds that
	\begin{align*}
	&\big| \rho_1 \big( Y_{t+1}, g^q_t( \beta^o) , g^e_t( \eta^o)  \big) - \rho_1 \big( Y_{t+1}, g^q_t( \beta) , g^e_t( \eta)  \big)  \big| 
	= \left| \frac{1}{\alpha g^e_t( \eta^o) } ( g^q_t( \beta^o) - Y_{t+1}) \right| \\
	\le \,&\left| \frac{1}{\alpha g^e_t( \eta^o) } ( g^q_t( \beta^o)  - g^q_t( \beta)  ) \right| 
	\le \, \big|\big| \theta - \theta^o \big|\big|
	\cdot \sup_{\theta \in U_{\delta^o}(\theta^o)} \left| \left| \frac{\nabla_\beta g^q_t( \beta) }{\alpha g^e_t( \eta) } \right|\right|.
	\end{align*}
	Equivalently as above, the average of the expectations of the suprema sequences in the last two lines are bounded by the conditions (h) in Assumption 2.7.
	An equivalent argument holds for $\Gamma_4 = \big\{ \omega \in \Omega, \theta \in U_{\delta^o}(\theta^o)  \, \big| \,   g^q_t( \beta^o)(\omega) < Y_{t+1}(\omega) \le   g^q_t( \beta)(\omega) \big\}$.
	As $\Omega = \bigcup_{i=1}^4 \Gamma_i$, we can conclude that the function $\rho_1 \big( Y_{t+1}, g^q_t( \beta^o) , g^e_t( \eta^o)  \big)$ is Lipschitz-$L_1$ on $\Theta$.
\end{proof}

\begin{lemma}
	\label{lemma:WeissConditionN3iESReg}
	Given the conditions from Assumption 2.7, there exist constants $a, d_0 > 0$ such that
	\begin{align}
	\big| \big| \Psi_{n}^0(\theta) \big| \big| \ge a || \theta - \theta_n^\ast || \qquad \text{ for any }  \theta \in \Theta \text{ such that }  ||\theta - \theta_n^\ast || \le d_0,
	\end{align}
	and for all $n \ge n_0$, where $n_0 \in \mathbb{N}$ is large enough.
\end{lemma}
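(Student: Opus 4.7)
The plan is to establish this local lower bound by a first-order Taylor expansion of $\Psi_n^0$ around $\theta_n^*$, exploiting that the Jacobian at the pseudo-true parameter is exactly the matrix $\Lambda_n(\theta_n^*)$, which is uniformly positive definite by Assumption \ref{assu:AsymptoticTheory}\ref{cond:AsyCovPositiveDefinite}.

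First I would verify that $\Psi_n^0(\theta_n^*) = 0$. Since $\theta_n^*$ lies in the interior of $\Theta$ by Assumption \ref{assu:AsymptoticTheory}\ref{cond:UniqueMinimum} and uniquely minimizes $Q_n^0$, the first-order condition $\nabla_\theta Q_n^0(\theta_n^*) = 0$ holds. Interchanging differentiation and expectation is justified by the moment bounds in \ref{cond:MomentCondition} together with the absolute continuity and bounded density in \ref{cond:AbsContDistribution} (so that the indicator term $\mathds{1}_{\{Y_{t+1} \le g^q_t(\beta)\}}$ integrates to $F_t(g^q_t(\beta))$, which is differentiable in $\beta$). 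This gives $\Psi_n^0(\theta_n^*) = \nabla_\theta Q_n^0(\theta_n^*) = 0$.

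Second, I would compute the Jacobian $\nabla_\theta \Psi_n^0(\theta)$ by differentiating the expectation of $\psi$ in \eqref{eqn:DefPsiEncmp}. The derivative of the indicator function again produces the density $f_t(g^q_t(\beta))$ multiplied by $\nabla g^q_t(\beta)$, while differentiating the smooth parts yields Hessian terms $H^q_t$, $H^e_t$ and cross-products of the gradients. A direct calculation (matching the blockwise formulas in Proposition \ref{prop:AsymptoticNormality}) shows $\nabla_\theta \Psi_n^0(\theta)\big|_{\theta = \theta_n^*} = \Lambda_n(\theta_n^*)$. Applying the mean-value theorem coordinatewise gives
\begin{equation*}
\Psi_n^0(\theta) = \Psi_n^0(\theta_n^*) + \Delta_n(\tilde\theta_1,\dots,\tilde\theta_k)\,(\theta - \theta_n^*),
\end{equation*}
with $\tilde\theta_j$ on the segment between $\theta$ and $\theta_n^*$, and $\Delta_n(\theta_n^*,\dots,\theta_n^*) = \Lambda_n(\theta_n^*)$.

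Third, I would conclude by combining the uniform lower eigenvalue bound on $\Lambda_n(\theta_n^*)$ from \ref{cond:AsyCovPositiveDefinite} with a uniform continuity estimate for $\theta \mapsto \Delta_n(\theta,\dots,\theta)$ near $\theta_n^*$. The components of $\Delta_n$ involve $\nabla g^q_t$, $\nabla g^e_t$, $H^q_t$, $H^e_t$, $f_t$ and $1/g^e_t$; Lipschitz/continuity of all these at $\theta_n^*$, together with the moment bounds \ref{cond:MomentCondition} and Lipschitz continuity of $f_t$ in \ref{cond:AbsContDistribution}, give a modulus of continuity for $\Delta_n$ around $\theta_n^*$ that is integrable and dominated uniformly in $n$. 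Hence one can pick $d_0 > 0$ small enough so that $\|\Delta_n(\tilde\theta_1,\dots,\tilde\theta_k) - \Lambda_n(\theta_n^*)\| \le \tfrac{1}{2}\,\lambda_{\min}(\Lambda_n(\theta_n^*))$ whenever $\|\theta - \theta_n^*\| \le d_0$, and for all $n \ge n_0$. Setting $a := \tfrac{1}{2} \inf_n \lambda_{\min}(\Lambda_n(\theta_n^*)) > 0$ then yields $\|\Psi_n^0(\theta)\| \ge a\,\|\theta - \theta_n^*\|$.

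The main obstacle I anticipate is the uniformity in $n$: both $\theta_n^*$ and $\Delta_n$ change with $n$, so the neighborhood radius $d_0$ and the constant $a$ must not depend on $n$. The positive-definiteness with determinant bounded away from zero in \ref{cond:AsyCovPositiveDefinite} handles the floor on eigenvalues, but the equicontinuity of $\Delta_n$ at $\theta_n^*$ requires carefully tracking that the dominating envelopes from \ref{cond:MomentCondition} are integrable uniformly in $t$ (hence uniformly in $n$ after averaging), and that Lipschitz continuity of $f_t$ supplies a uniform modulus for the density-dependent block. This is the step where the precise moment list in \ref{cond:MomentCondition} is genuinely needed.
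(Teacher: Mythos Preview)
Your proposal is correct and follows essentially the same route as the paper: a coordinatewise mean-value expansion of $\Psi_n^0$ around $\theta_n^\ast$, identification of the Jacobian with $\Lambda_n(\theta_n^\ast)$, and a perturbation argument using the uniform lower eigenvalue bound from Assumption~\ref{assu:AsymptoticTheory}\ref{cond:AsyCovPositiveDefinite}. The only cosmetic difference is that the paper makes the continuity estimate quantitative by applying a second mean-value expansion to each entry of $\Delta_n$, obtaining an explicit Lipschitz bound $\|\Delta_n(\tilde\theta_1,\dots,\tilde\theta_k) - \Lambda_n(\theta_n^\ast)\| \le c_1 \|\theta - \theta_n^\ast\|$ (and then choosing $d_0 < c_2/(2c_1)$), whereas you invoke a generic modulus of continuity; both arguments are valid and rely on the same moment conditions in \ref{cond:MomentCondition}.
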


\begin{proof}
	Let $\theta \in \Theta$ such that $||\theta - \theta_n^\ast || \le d_0$ for some (small) constant $d_0 > 0$ and define
	\begin{align}
	\Psi^0_{n,q}(\theta) &= \frac{1}{n} \sum_{t=m}^{T-1}  \mathbb{E} \left[ - \frac{\nabla_\beta g^q_t( \beta)}{\alpha g^e_t( \eta)} \big( F_t(g^q_t( \beta)) - \alpha \big) \right] \qquad \text{ and } \\
	\Psi^0_{n,e}(\theta) &=  \frac{1}{n} \sum_{t=m}^{T-1}  \mathbb{E} \left[ \frac{\nabla_\eta g^e_t( \eta)}{(g^e_t( \eta))^2} \left( g^e_t( \eta)  - g^q_t( \beta)+ \frac{1}{ \alpha}  (g^q_t( \beta)- Y_{t+1}) \mathds{1}_{\{Y_{t+1} \le g^q_t( \beta) \}} \right) \right],
	\end{align}
	such that $\Psi^0_{n}(\theta)^\top  = \left( 	\Psi^0_{n,q}(\theta)^\top, \Psi^0_{n,e} (\theta)^\top  \right)$.
	Henceforth, we use the following short notations
	\begin{align}
	G_t^{q}(\beta) &= \nabla_\beta g^q_t( \beta) \nabla_\eta g^q_t( \beta)^\top \\
	G_t^{qe}(\beta,\eta) &= \nabla_\beta g^q_t( \beta) \nabla_\eta g^e_t( \eta)^\top \\
	G_t^{eq}(\beta,\eta) &= \nabla_\eta g^e_t( \eta) \nabla_\beta g^q_t( \beta)^\top \\
	G_t^{e}(\eta) &= \nabla_\eta g^e_t( \eta) \nabla_\eta g^e_t( \eta)^\top,
	\end{align}
	$H^{q}_t( \beta)$ is the $k_\beta \times k_\beta$ Hessian matrix of $g^q_t( \beta)$ and equivalently, $H^{e}_t( \eta)$ is the $k_\eta \times k_\eta$ Hessian matrix of $g^e_t( \eta)$.
	
	In the following, we apply the mean-value theorem to the individual rows of $\Psi^0_{n}(\theta)$ instead of to the complete vector, as the mean-value theorem cannot be generalized directly to vector-valued functions.
	Then, by applying the mean-value theorem to the $j$-th row of of $\Psi^0_{n}(\theta)$ for all $j=1,\dots,k$, we get that
	\begin{align}
	\Psi^0_{n}(\theta) - \Psi^0_{n}(\theta^\ast_n) = \Delta_{n}(\tilde \theta_1, \dots, \tilde \theta_k) \cdot \big( \theta - \theta^\ast_n \big),
	\end{align}
	where 
	\begin{align}
	\Delta_{n}(\tilde \theta_1, \dots, \tilde \theta_k) 
	= \begin{pmatrix}
	\Delta_{n,qq} & \Delta_{n,qe}  \\
	\Delta_{n,eq}  & \Delta_{n,ee}
	\end{pmatrix}.
	\end{align}
	For all $j=1,\dots,k_\beta$, the $j$-th row of $\Delta_{n,qq}$ is given by
	\begin{align}
	\Delta_{n,qq,j}(\tilde \beta_j) =  \frac{1}{n} \sum_{t=m}^{T-1} \mathbb{E}   \left[ 
	\frac{H^q_{t,j} (\tilde \beta_j)}{\alpha g_t^e(\tilde \eta_j)} \big(F_t(g_t^q(\tilde \beta_j)) - \alpha \big)
	+ \frac{ G_t^q(\tilde \beta_j)}{ \alpha g_t^e(\tilde \eta_j)} {h_t}(g_t^q(\tilde \beta_j)) \right],
	\end{align}
	where $H^q_{t,j} (\tilde \beta_j)$ denotes the $j$-th row of $H^q_{t}(\tilde \beta_j)$,
	and the $j$-th row of $\Delta_{n,qe}$ is given by
	\begin{align}
	\Delta_{n,qe,j}(\tilde \theta_j) = \frac{1}{n} \sum_{t=m}^{T-1} \mathbb{E} \left[ 
	-\frac{ G_{t,j}^{qe}(\tilde \beta_j, \tilde \eta_j)}{\alpha g_t^e(\tilde \eta_j)^2} \big(F_t(g_t^q(\tilde \beta_j)) - \alpha \big) \right].
	\end{align}
	For all $j=k_\beta+1,\dots,k_\beta + k_\eta$, the $j$-th row of $\Delta_{n,eq}$ is given by
	\begin{align}
	\Delta_{n,eq,j}(\tilde \theta_j) =  \frac{1}{n} \sum_{t=m}^{T-1} \mathbb{E}   \left[ 
	\frac{G_{t,j}^{eq}(\tilde \beta_j, \tilde \eta_j)}{\alpha g^e_t( \tilde \eta_j)^2} \big( F_t(g_t^q(\tilde \beta_j)) - \alpha \big) \right]
	\end{align}
	and the $j$-th row of $\Delta_{n,ee}$ is given by
	\begin{align*}
	\Delta_{n,ee,j}(\tilde \theta_j) =  \frac{1}{n} \sum_{t=m}^{T-1} \mathbb{E}   \left[ 
	\frac{H^e_{t,j}(\tilde \eta_j)}{g^e_t(\tilde \eta_j)^2} \left( g^e_t( \tilde \eta_j)  - g^q_t( \tilde \beta_j)+ \frac{1}{ \alpha}  (g^q_t( \tilde \beta_j)- Y_{t+1}) \mathds{1}_{\{Y_{t+1} \le g^q_t( \tilde \beta_j) \}} \right) \right. \\
	\left. + \frac{G_{t,j}^{ee}(\tilde \eta_j)}{g^e_t(\tilde \eta_j)^2} 
	-2 \frac{G_{t,j}^{ee}(\tilde \eta_j)}{g^e_t(\tilde \eta_j)^3} \left( g^e_t( \tilde \eta_j)  - g^q_t( \tilde \beta_j)+ \frac{1}{ \alpha}  (g^q_t( \tilde \beta_j)- Y_{t+1}) \mathds{1}_{\{Y_{t+1} \le g^q_t( \tilde \beta_j) \}} \right) \right].
	\end{align*}
	In the following, we show that $\left| \left| \Delta_{n} \big(\tilde \theta_1, \dots, \tilde \theta_{k} \big) - \Lambda_{n}(\theta^\ast_n)  \right| \right| \le c_1 || \theta - \theta^\ast_n ||$ by considering the individual components again.
	For each $j,i = 1,\dots,n_\beta$, (corresponding to the upper-left quantile-specific part of the Hessian matrix)
	\begin{align*}
	&|| \Delta_{n,ji} \big(\tilde \theta_j \big) - \Lambda_{n,ji}(\theta^\ast_n) || \\
	= \, & \left| \frac{1}{n} \sum_{t=m}^{T-1} \mathbb{E} \left[ \frac{H^q_{t,ji} (\tilde \beta_j)}{\alpha g_t^e(\tilde \eta_j)} \big(F_t(g_t^q(\tilde \beta_j)) - \alpha \big) + \frac{ G_{t,ji}^q(\tilde \beta_j)}{\alpha g_t^e(\tilde \eta_j)} {h_t}(g_t^q(\tilde \beta_j)) \right] \right. \\
	&\quad - \, \left. \frac{1}{n} \sum_{t=m}^{T-1} \mathbb{E}   \left[ 
	\frac{H^q_{t,ji} ( \beta^\ast_n)}{\alpha g_t^e(\eta^\ast_n)} \big(F_t(g_t^q(\beta^\ast_n)) - \alpha \big)
	+ \frac{ G_{t,ji}^q( \beta^\ast_n)}{\alpha g_t^e(\eta^\ast_n)} {h_t}(g_t^q( \beta^\ast_n)) \right] \right| \\
	= \, & \left| \left| \frac{1}{n} \sum_{t=m}^{T-1} \mathbb{E} \left[ \frac{\nabla H^q_{t,ji} ( \bar \beta_j )}{\alpha g_t^e(\bar \eta_j)} \big(F_t(g_t^q(\bar \beta_j)) - \alpha \big)  
	- \nabla g_t^e(\bar \eta_j) \frac{1}{\alpha g_t^e(\bar \eta_j)^2} H^q_{t,ji}( \bar \beta_j ) \big(F_t(g_t^q(\bar \beta_j)) - \alpha \big)  \right. \right. \right. \\
	&\quad \qquad \qquad + \left. \left. \left. \nabla_\beta g_t^q(\bar \beta_j) \frac{H^q_{t,ji} (\bar \beta_j)}{\alpha g_t^e(\bar \eta_j)} {h_t}(g_t^q(\bar \beta_j)) \right] \right| \right| \cdot \left| \left| \tilde \theta_j -  \theta_n^\ast \right|\right|,
	\end{align*}
	for some $\bar \theta_j = \big( \bar \beta_j,  \bar \eta_j \big)$ on the line between $\tilde \theta_j$ and $\theta^{\ast}_n$.
	Furthermore, for all $j = 1,\dots,n_\beta$ and $i = n_\beta+1,\dots,n_\beta+n_\eta$ (corresponding to the upper-right quantile/ES-specific part of the Hessian matrix), it holds that
	\begin{align*}
	&|| \Delta_{n,ji} \big(\tilde \theta_j \big) - \Lambda_{n,ji}(\theta^\ast_n) || \\
	= \, & \left| \frac{1}{n} \sum_{t=m}^{T-1} \mathbb{E} \left[ 
	\frac{ G_{t,ji}^{qe}(\tilde \beta_j, \tilde \eta_j)}{\alpha g_t^e(\tilde \eta_j)^2} \big(F_t(g_t^q(\tilde \beta_j)) - \alpha \big) \right]
	- \frac{1}{n} \sum_{t=m}^{T-1} \mathbb{E} \left[  \frac{ G_{t,ji}^{qe}( \beta^\ast_n, \eta^\ast_n)}{\alpha g_t^e(\eta^\ast_n)^2} \big(F_t(g_t^q( \beta^\ast_n)) - \alpha \big)  \right] \right| \\
	= \, & \left| \left| \frac{1}{n} \sum_{t=m}^{T-1} \mathbb{E} \left[ \frac{ \nabla G_{t,ji}^{qe}(\bar \beta_j, \bar \eta_j)}{\alpha g_t^e(\bar \eta_j)^2} \big(F_t(g_t^q(\bar \beta_j)) - \alpha \big) 
	+ \nabla g_t^q(\bar \beta_j) \frac{ G_{t,ji}^{qe}(\bar \beta_j, \bar \eta_j)}{\alpha g_t^e(\bar \eta_j)^2} {h_t}(g_t^q(\bar \beta_j)) \right. \right. \right. \\
	&\quad \qquad \qquad -2 \left. \left. \left. \nabla g_t^e(\bar \eta_j) \frac{ G_{t,ji}^{qe}(\bar \beta_j, \bar \eta_j)}{\alpha g_t^e(\bar \eta_j)^3} \big(F_t(g_t^q(\bar \beta_j)) - \alpha \big) \right] \right| \right| \cdot \left| \left| \tilde \theta_j -  \theta_n^\ast \right|\right|,
	\end{align*}
	for some $\bar \theta_j = \big( \bar \beta_j,  \bar \eta_j \big)$ on the line between $\tilde \theta_j$ and $\theta^{\ast}_n$.
	This holds equivalently for the lower-left block of $\Delta_n$ and $\Lambda_n$.
	Eventually for the lower-right block, i.e.\ for each $j,i = n_\beta+1,\dots,n_\beta+n_\eta$, we get that
	\begin{align*}
	&\big| \Delta_{n,ji} \big(\tilde \theta_j \big) - \Lambda_{n,ji}(\theta^\ast_n) \big| \\
	= \, & \left| \frac{1}{n} \sum_{t=m}^{T-1} \mathbb{E} \left[ 
	\frac{H^e_{t,ji}(\tilde \eta_j)}{(g^e_t(\tilde \eta_j))^2} \left( g^e_t( \tilde \eta_j)  - g^q_t( \tilde \beta_j)+ \frac{1}{ \alpha}  (g^q_t( \tilde \beta_j)- Y_{t+1}) \mathds{1}_{\{Y_{t+1} \le g^q_t( \tilde \beta_j) \}} \right) \right. \right. \\
	& \qquad \left. + \frac{G_{t,ji}^{ee}(\tilde \eta_j)}{(g^e_t(\tilde \eta_j))^2} 
	-2 \frac{G_{t,ji}^{ee}(\tilde \eta_j)}{(g^e_t(\tilde \eta_j))^3} \left( g^e_t( \tilde \eta_j)  - g^q_t( \tilde \beta_j)+ \frac{1}{ \alpha}  (g^q_t( \tilde \beta_j)- Y_{t+1}) \mathds{1}_{\{Y_{t+1} \le g^q_t( \tilde \beta_j) \}} \right) \right] \\
	& - \frac{1}{n} \sum_{t=m}^{T-1} \mathbb{E} \left[ 
	\frac{H^e_{t,ji}(\eta^\ast_n)}{(g^e_t( \eta^\ast_n))^2} \left( g^e_t(  \eta^\ast_n)  - g^q_t(  \beta^\ast_n)+ \frac{1}{ \alpha}  (g^q_t(\beta^\ast_n)- Y_{t+1}) \mathds{1}_{\{Y_{t+1} \le g^q_t( \beta^\ast_n) \}} \right) \right.  \\
	& \qquad \left. \left. + \frac{G_{t,ji}^{ee}(\eta^\ast_n)}{(g^e_t( \eta^\ast_n))^2} 
	-2 \frac{G_{t,ji}^{ee}(\eta^\ast_n)}{(g^e_t( \eta^\ast_n))^3} \left( g^e_t( \eta^\ast_n)  - g^q_t(  \beta^\ast_n)+ \frac{1}{ \alpha}  (g^q_t(\beta^\ast_n)- Y_{t+1}) \mathds{1}_{\{Y_{t+1} \le g^q_t( \beta^\ast_n) \}} \right) \right] \right| \\
	= \, & \left| \left| \frac{1}{n} \sum_{t=m}^{T-1} \mathbb{E} \left[  
	\left\{ \frac{\nabla H^e_{t,ji}(\bar \eta_j)}{(g^e_t(\bar \eta_j))^2}  - 2 \nabla g^e_t(\bar \eta_j) \frac{ H^e_{t,ji}(\bar \eta_j)}{(g^e_t(\bar \eta_j))^3} - 2\frac{\nabla G^{ee}_{t,ji}(\bar \eta_j)}{(g^e_t(\bar \eta_j))^3} + 6 \nabla g^e_t(\bar \eta_j) \frac{ G^{ee}_{t,ji}(\bar \eta_j)}{(g^e_t(\bar \eta_j))^4}\right\} \times \right. \right. \right. \\
	&\qquad \qquad \quad \left\{  g^e_t( \bar \eta_j)  - g^q_t( \bar \beta_j)+ \frac{1}{ \alpha}  (g^q_t( \bar \beta_j)- Y_{t+1}) \mathds{1}_{\{Y_{t+1} \le g^q_t( \bar \beta_j) \}} \right\}  \\
	&\qquad \qquad + \frac{ \nabla G^{ee}_{t,ji}(\bar \eta_j)}{(g^e_t(\bar \eta_j))^2} - 2 \nabla g^e_t(\bar \eta_j) \frac{ G^{ee}_{t,ji}(\bar \eta_j)}{(g^e_t(\bar \eta_j))^3} \\
	&\qquad \qquad + \left. \left. \left. \left\{ \frac{ H^e_{t,ji}(\bar \eta_j)}{(g^e_t(\bar \eta_j))^2} - 2 \frac{ G^{ee}_{t,ji}(\bar \eta_j)}{(g^e_t(\bar \eta_j))^2} \right\} \cdot \left\{  \nabla g^e_t(\bar \eta_j) - \nabla g^q_t(\bar \beta_j) + \frac{1}{\alpha} \nabla g^q_t(\bar \beta_j) F_t(\nabla g^q_t(\bar \beta_j)) \right\} \right] \right| \right| \\
	&\qquad \qquad \cdot \left| \left| \tilde \theta_j -  \theta_n^\ast \right|\right|.
	\end{align*}	
	for some $\bar \theta_j = \big( \bar \beta_j,  \bar \eta_j \big)$ on the line between $\tilde \theta_j$ and $\theta^{\ast}_n$.
	As the respective moments are finite given the moment conditions in (h) in Assumption 2.7 and since $|| \tilde \theta_j - \theta_n^\ast || \le || \theta - \theta_n^\ast ||$ for all $j$, we have shown that for all $n$ sufficiently large enough, there exists a constant $c_1 > 0$ such that
	\begin{align}
	\left| \left| \Delta_{n} \big(\tilde \theta_1, \dots, \tilde \theta_{k} \big) - \Lambda_{n}(\theta^\ast_n)  \right| \right| \le c_1 || \theta - \theta^\ast_n ||.
	\end{align}
	Furthermore, as the matrix $\Lambda_n(\theta^\ast_n)$ has Eigenvalues bounded from below (for $n$ large enough) by assumption, there exists a constant $c_2 > 0$, such that
	\begin{align}
	\left| \left| \Lambda_n(\theta^\ast_n)  \cdot (\theta - \theta_n^\ast) \right| \right| \ge c_2 ||\theta - \theta_n^\ast ||.
	\end{align}
	Thus, we choose $d_0 > 0$ small enough such that $d_0 < \frac{c_2}{2 c_1}$.
	Then $|| \theta - \theta_n^\ast|| \le d_0 < \frac{c_2}{2 c_1}$ and thus, $2c_1 ||\theta - \theta_n^\ast||^2 \le c_2 ||\theta - \theta_n^\ast ||$.
	Consequently, $\left| \left|\big( \Delta_{n} \big(\tilde \theta_1, \dots, \tilde \theta_{k} \big)  - \Lambda_n(\theta^\ast_n)  \big) \cdot (\theta - \theta_n^\ast) \right| \right| \le c_1 ||\theta - \theta_n^\ast||^2 \le c_2/2 ||\theta - \theta_n^\ast||$ and thus 
	\begin{align}
	\begin{aligned}
	\big| \big| \Psi_{n}^0(\theta) \big| \big| 
	&= \big| \big| \Delta_{n} \big(\tilde \theta_1, \dots, \tilde \theta_{k} \big) \cdot (\theta - \theta_n^\ast)\big| \big| \\
	&= \left| \left| \Lambda_n(\theta^\ast_n)  \cdot (\theta - \theta_n^\ast) + \big( \Delta_{n} \big(\tilde \theta_1, \dots, \tilde \theta_{k} \big)  - \Lambda_n(\theta^\ast_n)  \big) \cdot (\theta - \theta_n^\ast )\right| \right| \\
	&\ge \Big| \left| \left| \Lambda_n(\theta^\ast_n)  \cdot (\theta - \theta_n^\ast) \right| \right| - \left| \left| \big( \Delta_{n} \big(\tilde \theta_1, \dots, \tilde \theta_{k} \big)  - \Lambda_n(\theta^\ast_n)  \big) \cdot (\theta - \theta_n^\ast ) \right| \right| \Big| \\
	&\ge \frac{c_2}{2} ||\theta - \theta_n^\ast||,
	\end{aligned}
	\end{align}
	by applying the mean value expansion and the inverse triangular inequality.
\end{proof}

\begin{lemma}
	\label{lemma:AsymptoticNormalityPsiESReg}
	Given the conditions in Assumption 2.7 it holds that
	\begin{align}
	\Sigma_n^{-1/2}(\theta_n^\ast) \,\sqrt{n} \, \Psi_n(\theta^\ast_n) \stackrel{d}{\to} \mathcal{N}(0,I_k).
	\end{align}
\end{lemma}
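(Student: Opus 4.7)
My plan is to apply a central limit theorem for heterogeneous $\alpha$-mixing arrays after reducing the multivariate statement to a scalar one via the Cram\'er-Wold device. Fix any $\lambda \in \mathbb{R}^k$ with $\|\lambda\|=1$ and set $\zeta_{n,t} = \lambda^\top \Sigma_n^{-1/2}(\theta_n^\ast) \, \psi\big(Y_{t+1}, g_t^q(\beta_n^\ast), g_t^e(\eta_n^\ast)\big)$. It suffices to show that $\frac{1}{\sqrt{n}}\sum_{t=m}^{T-1}\zeta_{n,t}\stackrel{d}{\to}\mathcal{N}(0,1)$; the full multivariate conclusion then follows from Cram\'er-Wold.

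I would first collect the three standard ingredients required for an $\alpha$-mixing CLT. Centering: the first-order condition for the pseudo-true parameter gives $\Psi_n^0(\theta_n^\ast)=0$, so $\frac{1}{\sqrt n}\sum\zeta_{n,t}$ is already centered and numerically equal to $\frac{1}{\sqrt n}\sum(\zeta_{n,t}-\mathbb{E}[\zeta_{n,t}])$. Mixing: since each $\zeta_{n,t}$ is a Borel-measurable function of $\mathcal{F}_{t+1}$-measurable quantities, the sequence inherits strong mixing of size $-r/(r-2)$ from $\{Z_t\}$ by Theorem 3.49 of \cite{White2001}. Moments: the uniform bound $\sup_{n,t}\mathbb{E}\bigl[|\zeta_{n,t}|^{r+\delta}\bigr]<\infty$ for some small $\delta>0$ follows by combining the boundedness of $\lambda^\top\Sigma_n^{-1/2}$ from condition \ref{cond:AsyCovPositiveDefinite}, the lower bound $|g_t^e(\eta_n^\ast)|\ge 1/K>0$ from condition \ref{cond:ESModelBounded}, and H\"older's inequality applied to the explicit components of $\psi$ in (\ref{eqn:DefPsiEncmp}) using the moment bounds collected in condition \ref{cond:MomentCondition}.

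For the variance normalization I would decompose $\psi_t = u_t + v_t$ with $v_t=\mathbb{E}_t[\psi_t]$ and $u_t=\psi_t-v_t$, so that $\{u_t\}$ is a martingale difference sequence with respect to $\{\mathcal{F}_{t+1}\}$. The MDS property gives $\mathbb{E}[u_tv_t^\top]=0$ and hence $\mathbb{E}[\psi_t\psi_t^\top]=\mathbb{E}[u_tu_t^\top]+\mathbb{E}[v_tv_t^\top]$. The MDS property of $\{u_t\}$ eliminates $\mathrm{Cov}(u_s,u_t)$ for $s\ne t$, while the uncorrelatedness of $\{v_t\}$ imposed in condition \ref{cond:UniqueMinimum} eliminates $\mathrm{Cov}(v_s,v_t)$ for $s\ne t$. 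The remaining cross-covariances $\mathrm{Cov}(u_s,v_t)$ with $s\ne t$ are not automatically zero under misspecification but are controlled by Davydov's covariance inequality together with the strong mixing size and the $(r+\delta)$-th moment bound from the previous step, yielding a summable bound in $|t-s|$. Consequently $\mathrm{Var}\bigl(\frac{1}{\sqrt n}\sum\zeta_{n,t}\bigr)\to 1$, and applying the CLT of Theorem 5.20 in \cite{White2001} to the centered scalar array delivers the scalar convergence.

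The main obstacle is the identification of $\Sigma_n$ (an average of \emph{uncentered} second moments) as the correct asymptotic variance under model misspecification, since in that regime $\mathbb{E}[\psi_t]$ does not vanish at individual $t$. The decomposition into the martingale difference part $u_t$ and the predictable-mean part $v_t$, together with condition \ref{cond:UniqueMinimum}, handles the within-sequence contributions cleanly; the only genuinely technical step is the mixing-based bound on the $\mathrm{Cov}(u_s,v_t)$ cross-terms, which is where condition \ref{cond:AlphaMixing} interacts non-trivially with the higher-order moment conditions in \ref{cond:MomentCondition}.
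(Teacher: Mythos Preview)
Your overall architecture matches the paper's proof exactly: Cram\'er--Wold reduction, mixing inheritance via Theorem~3.49 of \cite{White2001}, a moment bound on $\psi$ drawn from conditions \ref{cond:ESModelBounded} and \ref{cond:MomentCondition}, and then Theorem~5.20 of \cite{White2001}. The paper works with $u^\top\psi_t$ for $\|u\|=1$ rather than with $\lambda^\top\Sigma_n^{-1/2}\psi_t$, and it bounds the $r$-th (not $(r+\delta)$-th) absolute moment; these are cosmetic differences.

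The substantive divergence --- and the gap in your proposal --- is the variance identification. The paper does not decompose $\psi_t$ at all: it reads condition \ref{cond:UniqueMinimum} as giving that the sequence $\psi_t$ (evaluated at $\theta_n^\ast$) is uncorrelated and writes
$\operatorname{Var}\bigl(n^{-1/2}\sum_t\psi_t\bigr)=n^{-1}\sum_t\mathbb{E}[\psi_t\psi_t^\top]=\Sigma_n(\theta_n^\ast)$
in one line, after which only the eigenvalue lower bound from condition \ref{cond:AsyCovPositiveDefinite} is needed to feed Theorem~5.20. Your $u_t+v_t$ decomposition is more scrupulous in spirit, but the final step does not go through: Davydov's inequality gives $|\operatorname{Cov}(u_s,v_t)|\le C\,\alpha(|t-s|)^{(r-2)/r}$, and summability of this bound in $|t-s|$ yields $n^{-1}\sum_{s\neq t}\operatorname{Cov}(u_s,v_t)=O(1)$, not $o(1)$. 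Summable cross-covariances pin down a finite long-run variance, not the absence of the cross-term contribution, so the conclusion $\operatorname{Var}\bigl(n^{-1/2}\sum_t\zeta_{n,t}\bigr)\to 1$ does not follow from your argument. You also silently drop the term $-n^{-1}\sum_t\mathbb{E}[v_t]\mathbb{E}[v_t]^\top$ that appears because $\operatorname{Var}(v_t)\neq\mathbb{E}[v_tv_t^\top]$ under misspecification. To recover $\Sigma_n$ as the exact normalizer you have to invoke, as the paper does, uncorrelatedness of $\psi_t$ itself; if you grant only uncorrelatedness of $v_t=\mathbb{E}_t[\psi_t]$, the correct limiting variance is a long-run object and $\Sigma_n$ is not the right scaling.
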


\begin{proof}
	We show this multivariate result by applying the Cramér–Wold theorem, i.e.\ by showing that the conditions for the univariate CLT for strong mixing sequences given in Theorem 5.20 in \cite{White2001}, p.\ 130 hold for all linear combinations $u^\top  \psi \big( Y_{t+1}, g^q_t(\beta_n^\ast) , g^e_t(\eta_n^\ast)  \big)$ for all $u \in \mathbb{R}^k$ such that $||u|| = 1$.	
	By Theorem 3.49 in \cite{White2001} p.\ 50, we get that the sequences $\psi \big( Y_{t+1}, g^q_t(\beta_n^\ast) , g^e_t(\eta_n^\ast)  \big)$ and $u^\top \psi \big( Y_{t+1}, g^q_t(\beta_n^\ast) , g^e_t(\eta_n^\ast)  \big)$ are strong mixing of size $-r/(r-2)$ for some $r > 2$.	
	Furthermore, for all $t \in \mathbb{N}$, it holds that 
	\begin{align*}
	&\mathbb{E} \left[ \big| u^\top \psi \big( Y_{t+1}, g^q_t(\beta_n^\ast) , g^e_t(\eta_n^\ast)  \big) \big|^{r} \right] 
	\le \mathbb{E} \left[ \big|\big| \psi \big( Y_{t+1}, g^q_t(\beta_n^\ast) , g^e_t(\eta_n^\ast)  \big) \big| \big|^{r} \right] \\
	\le \, &4^{r-1} \left\{
	\max \left( \frac{1-\alpha}{\alpha},1\right)^r \mathbb{E} \left[   \left| \left| \frac{\nabla_\beta g^q_t(\beta_n^\ast)}{g^e_t(\eta_n^\ast)} \right| \right|^r \right]
	+ \mathbb{E} \left[   \left| \left| \frac{\nabla_\eta  g^e_t(\eta_n^\ast) g^e_t(\eta_n^\ast)}{(g^e_t(\eta_n^\ast))^2} \right| \right|^r \right] \right. \\
	&\qquad \qquad + \left. \left( 1 + \frac{1}{\alpha}\right)^r  \mathbb{E} \left[   \left| \left| \frac{\nabla_\eta g^e_t(\eta_n^\ast) g^q_t(\beta_n^\ast)}{(g^e_t(\eta_n^\ast))^2} \right| \right|^r \right]
	+ \mathbb{E} \left[   \left| \left| \frac{\nabla_\eta g^e_t(\eta_n^\ast) Y_{t+1}}{\alpha (g^e_t(\eta_n^\ast))^2} \right| \right|^r \right]
	\right\} \\
	\le \, &4^{r-1} \left\{
	\max \left( \frac{1-\alpha}{\alpha},1\right)^r \frac{1}{K^{r}} \mathbb{E} \left[   \left| \left| \nabla_\beta g^q_t(\beta_n^\ast) \right| \right|^r \right]
	+ \frac{1}{K^{r}} \mathbb{E} \left[ \left| \left| \nabla_\eta g^e_t(\eta_n^\ast) \right| \right|^r \right] \right. \\
	&\qquad \qquad + \left.  \frac{1}{K^{2r}} \left( 1 + \frac{1}{\alpha}\right)^r  \mathbb{E} \left[   \left| \left| \nabla_\eta g^e_t(\eta_n^\ast) g^q_t(\beta_n^\ast) \right| \right|^r \right]
	+ \frac{1}{\alpha K^{2r}} \mathbb{E} \left[   \left| \left| \nabla_\eta g^e_t(\eta_n^\ast) Y_{t+1} \right| \right|^r \right]
	\right\} < \infty,
	\end{align*}
	by applying Jensen's inequality and by the moment conditions (h) in Assumption 2.7, where $r > 2$ (from condition (a)).
	As the sequence $\psi \big( Y_{t+1}, g^q_t(\beta_n^\ast) , g^e_t(\eta_n^\ast)  \big)$ is uncorrelated by condition (c) in Assumption 2.7, we get that
	for all $n \ge 1$,
	\begin{align}
	\begin{aligned}
	&\operatorname{Var} \left( \frac{1}{\sqrt{n}} \sum_{t=m}^{T-1} \psi \big( Y_{t+1}, g^q_t(\beta_n^\ast) , g^e_t(\eta_n^\ast)  \big)  \right) \\
	= \, &\frac{1}{n} \sum_{t=m}^{T-1} \mathbb{E} \left[ \psi \big( Y_{t+1}, g^q_t(\beta_n^\ast), g^e_t(\eta_n^\ast) \big)  \cdot \psi \big( Y_{t+1}, g^q_t(\beta_n^\ast), g^e_t(\eta_n^\ast) \big)^\top \right] = \Sigma_n(\theta_n^\ast).
	\end{aligned}
	\end{align}
	As $\Sigma_n(\theta_n^\ast)$ is real and symmetric and positive definite, it can be diagonalized with a real orthogonal matrix $S$, i.e.\ $S^\top \Sigma_n(\theta_n^\ast) S = D_n$, where $D_n$ is a diagonal matrix containing the Eigenvalues of $\Sigma_n(\theta_n^\ast)$, denoted by  $\{ \lambda_{1,n},\dots,\lambda_{k,n}\}$.
	Consequently, for any $u \in \mathbb{R}^k$,
	\begin{align}
	\begin{aligned}
	\operatorname{Var} \left( \frac{1}{\sqrt{n}} \sum_{t=m}^{T-1} u^\top \psi \big( Y_{t+1}, g^q_t(\beta_n^\ast), g^e_t(\eta_n^\ast) \big) \right) &= u^\top \Sigma_n( \theta_n^\ast ) u 
	= u^\top S^\top D_n S u 
	=  v^\top D_n v \\
	&> \min_{i=1,\dots,k} \lambda_{i,n},
	\end{aligned}
	\end{align}
	where $v = S u$, i.e.\ $||v|| = 1$ as $S$ is orthogonal and where the Eigenvalues $\{ \lambda_{1,n},\dots,\lambda_{k,n}\}$ are bounded away from zero for $n$ sufficiently large.
	Thus, we can apply Theorem 5.20 in \cite{White2001} p.\ 130 for asymptotic normality of the sequences $u^\top\psi \big( Y_{t+1}, g^q_t(\beta_n^\ast), g^e_t(\eta_n^\ast) \big)$ for all $u \in \mathbb{R}^k$ such that $||u|| = 1$. Applying the Cramér-Wold theorem concludes the proof.
\end{proof}

\pagebreak
\section{Additional Tables}
\label{sec:Tables}

\begin{table}[ht]
	\footnotesize
	\centering
	\caption{Empirical Sizes of the Forecast Encompassing Tests.}
	\label{tab:Size1perc}
	\begin{tabularx}{0.95\linewidth}{ll @{\hspace{0.7cm}} RRRR l @{\hspace{0.5cm}} RRRR }
		\toprule
		& & \multicolumn{4}{c}{$\mathbb{H}_0^{(1)}$} & & \multicolumn{4}{c}{$\mathbb{H}_0^{(2)}$} \\
		\cmidrule(lr){3-6} \cmidrule(lr){8-11}
		& & Str ES & Aux ES & VaR ES & VaR & & Str ES & Aux ES & VaR ES & VaR \\ 
		\midrule
		\midrule
		$n$ & & \multicolumn{9}{c}{GARCH} \\
		\cmidrule(lr){3-11}
		$500$  &        &  3.05  &  3.00  &  8.40  & 10.10  &        &  2.80  &  2.80  &  8.20  & 10.10 \\
$1000$ &        &  1.45  &  1.75  &  5.90  &  7.80  &        &  2.20  &  1.95  &  7.70  &  9.30 \\
$2500$ &        &  1.85  &  1.80  &  5.85  &  7.30  &        &  1.85  &  1.75  &  5.45  &  6.45 \\
$5000$ &        &  1.25  &  1.25  &  3.90  &  5.05  &        &  0.80  &  0.80  &  4.10  &  5.15 \\

		\midrule
		\midrule
		$n$ & & \multicolumn{9}{c}{VaR/ES GAS} \\
		\cmidrule(lr){3-11}
		$500$  &        &  5.65  &  5.30  &  9.40  & 11.20  &        &  4.40  &  4.20  &  9.20  & 11.50 \\
$1000$ &        &  4.15  &  4.05  &  6.65  &  7.75  &        &  3.55  &  3.25  &  6.55  &  8.40 \\
$2500$ &        &  2.70  &  2.65  &  4.80  &  5.80  &        &  1.35  &  1.45  &  4.70  &  5.95 \\
$5000$ &        &  1.80  &  1.90  &  3.10  &  4.10  &        &  1.40  &  1.20  &  4.50  &  5.55 \\

		\midrule
		\midrule
		$n$ & & \multicolumn{9}{c}{GAS-$t$} \\
		\cmidrule(lr){3-11}
		$500$  &        & 5.45   & 5.50   & 7.70   & 8.00   &        & 4.90   & 5.30   & 7.75   & 9.15  \\
$1000$ &        & 4.15   & 4.45   & 6.05   & 6.75   &        & 2.00   & 2.25   & 4.75   & 6.00  \\
$2500$ &        & 2.00   & 1.90   & 3.10   & 3.40   &        & 1.25   & 1.35   & 3.10   & 3.90  \\
$5000$ &        & 1.70   & 1.80   & 3.95   & 4.05   &        & 1.00   & 1.05   & 2.15   & 2.70  \\

		\midrule
		\midrule
		$n$ & & \multicolumn{9}{c}{ES-CAViaR} \\
		\cmidrule(lr){3-11}
		$500$  &        & 2.05   & 1.30   & 4.30   & 6.00   &        & 2.35   & 1.45   & 5.05   & 6.50  \\
$1000$ &        & 1.85   & 1.25   & 3.55   & 5.55   &        & 1.65   & 1.25   & 3.10   & 4.85  \\
$2500$ &        & 1.00   & 1.15   & 2.30   & 3.10   &        & 1.00   & 0.90   & 2.05   & 3.00  \\
$5000$ &        & 1.15   & 0.85   & 1.55   & 2.15   &        & 1.10   & 1.15   & 1.15   & 1.85  \\

		\bottomrule 
		\addlinespace
		\multicolumn{11}{p{.93\linewidth}}{\textit{Notes:} This table presents the empirical sizes (in $\%$) of our three forecast encompassing tests for the ES together with the VaR encompassing test of \cite{GiacominiKomunjer2005} for a nominal size of $1\%$.
			The results are shown for the four DGPs described in Section 3.1 and Appendix \ref{sec:AdditionalDGPs} in the horizontal panels, for both tested hypotheses in the vertical panels and for different sample sizes.}
	\end{tabularx}
\end{table}

\begin{table}[ht]
	\footnotesize
	\centering
	\caption{Empirical Sizes of the Forecast Encompassing Tests.}
	\label{tab:Size5perc}
	\begin{tabularx}{0.95\linewidth}{ll @{\hspace{0.7cm}} RRRR l @{\hspace{0.5cm}} RRRR }
		\toprule
		& & \multicolumn{4}{c}{$\mathbb{H}_0^{(1)}$} & & \multicolumn{4}{c}{$\mathbb{H}_0^{(2)}$} \\
		\cmidrule(lr){3-6} \cmidrule(lr){8-11}
		& & Str ES & Aux ES & VaR ES & VaR & & Str ES & Aux ES & VaR ES & VaR \\ 
		\midrule
		\midrule
		$n$ & & \multicolumn{9}{c}{GARCH} \\
		\cmidrule(lr){3-11}
		$500$  &        &  9.20  &  9.30  & 13.90  & 17.50  &        &  8.75  &  9.30  & 13.45  & 17.55 \\
$1000$ &        &  6.90  &  6.45  & 11.40  & 14.40  &        &  6.90  &  6.35  & 12.65  & 17.10 \\
$2500$ &        &  6.35  &  6.40  & 11.10  & 13.55  &        &  5.90  &  5.75  &  9.85  & 12.05 \\
$5000$ &        &  5.65  &  5.25  &  8.65  &  9.75  &        &  5.00  &  5.05  &  9.00  & 10.65 \\

		\midrule
		\midrule
		$n$ & & \multicolumn{9}{c}{VaR/ES GAS} \\
		\cmidrule(lr){3-11}
		$500$  &        & 17.75  & 18.65  & 16.00  & 20.40  &        & 13.35  & 13.20  & 17.35  & 20.60 \\
$1000$ &        & 13.75  & 13.30  & 13.10  & 16.50  &        & 11.00  & 11.05  & 12.65  & 16.55 \\
$2500$ &        &  9.65  &  9.70  & 10.05  & 12.20  &        &  6.85  &  7.10  &  9.90  & 12.95 \\
$5000$ &        &  7.80  &  7.10  &  8.25  &  9.80  &        &  5.45  &  5.70  &  8.65  & 12.05 \\

		\midrule
		\midrule
		$n$ & & \multicolumn{9}{c}{GAS-$t$} \\
		\cmidrule(lr){3-11}
		$500$  &        & 14.50  & 14.30  & 14.40  & 14.50  &        & 11.80  & 11.50  & 13.90  & 16.20 \\
$1000$ &        & 11.80  & 11.75  & 12.30  & 13.75  &        &  7.90  &  8.10  &  9.60  & 11.00 \\
$2500$ &        &  7.00  &  6.85  &  9.85  &  9.75  &        &  6.05  &  6.05  &  7.25  &  9.40 \\
$5000$ &        &  7.05  &  7.05  &  9.50  &  8.85  &        &  5.30  &  5.35  &  6.65  &  7.15 \\

		\midrule
		\midrule
		$n$ & & \multicolumn{9}{c}{ES-CAViaR} \\
		\cmidrule(lr){3-11}
		$500$  &        &  6.75  &  5.95  &  9.15  & 12.90  &        &  7.20  &  5.75  &  9.70  & 13.80 \\
$1000$ &        &  7.00  &  6.15  &  8.40  & 11.05  &        &  6.35  &  5.30  &  7.85  & 10.65 \\
$2500$ &        &  5.10  &  4.45  &  5.60  &  8.05  &        &  5.05  &  4.40  &  6.20  &  8.70 \\
$5000$ &        &  5.40  &  4.80  &  5.20  &  7.15  &        &  5.25  &  5.10  &  5.00  &  6.85 \\

		\bottomrule 
		\addlinespace
		\multicolumn{11}{p{.93\linewidth}}{\textit{Notes:} This table presents the empirical sizes (in $\%$) of our three forecast encompassing tests for the ES together with the VaR encompassing test of \cite{GiacominiKomunjer2005} for a nominal size of $5\%$.
			The results are shown for the four DGPs described in Section 3.1 and Appendix \ref{sec:AdditionalDGPs} in the horizontal panels, for both tested hypotheses in the vertical panels and for different sample sizes.}
	\end{tabularx}
\end{table}

\begin{table}[ht]
	\footnotesize
	\centering
	\caption{Empirical Sizes for Two Additional DGPs}
	\label{tab:SizeExtensionDGPs}
	\begin{tabularx}{0.95\linewidth}{ll @{\hspace{0.7cm}} RRRR l @{\hspace{0.5cm}} RRRR }
		\toprule
		& & \multicolumn{4}{c}{$\mathbb{H}_0^{(1)}$} & & \multicolumn{4}{c}{$\mathbb{H}_0^{(2)}$} \\
		\cmidrule(lr){3-6} \cmidrule(lr){8-11}
		& & Str ES & Aux ES & VaR ES & VaR & & Str ES & Aux ES & VaR ES & VaR \\ 
		\midrule
		\midrule
		$n$ & & \multicolumn{9}{c}{GAS-$t$} \\
		\cmidrule(lr){3-11}
		$500$  &        & 21.95  & 21.75  & 20.25  & 19.50  &        & 18.50  & 18.10  & 18.20  & 21.75 \\
$1000$ &        & 18.75  & 18.35  & 18.75  & 19.35  &        & 14.25  & 13.95  & 13.80  & 17.45 \\
$2500$ &        & 12.40  & 12.05  & 15.45  & 14.90  &        & 11.65  & 11.75  & 12.55  & 15.85 \\
$5000$ &        & 12.50  & 12.35  & 15.25  & 14.50  &        &  9.60  &  9.20  & 11.50  & 12.90 \\

		\midrule
		\midrule
		$n$ & & \multicolumn{9}{c}{ES-CAViaR} \\
		\cmidrule(lr){3-11}
		$500$  &        & 12.95  & 11.80  & 13.55  & 19.00  &        & 13.05  & 11.55  & 15.05  & 19.40 \\
$1000$ &        & 12.30  & 11.70  & 12.70  & 17.25  &        & 11.40  & 10.60  & 11.95  & 16.50 \\
$2500$ &        & 10.35  &  9.45  &  9.65  & 13.75  &        & 10.85  &  9.55  & 10.20  & 13.10 \\
$5000$ &        &  9.65  &  9.65  & 10.65  & 12.65  &        & 10.35  &  9.75  & 10.20  & 11.70 \\

		\bottomrule 
		\addlinespace
		\multicolumn{11}{p{.93\linewidth}}{\textit{Notes:} This table presents the empirical sizes (in $\%$) of our three forecast encompassing tests for the ES together with the VaR encompassing test of \cite{GiacominiKomunjer2005} for a nominal size of $10\%$.
			The results are shown for the two additional DGPs described in Appendix \ref{sec:AdditionalDGPs} in the horizontal panels, for both tested hypotheses in the vertical panels and for different sample sizes.}
	\end{tabularx}
\end{table}

\begin{table}[ht]
	\footnotesize
	\centering
	\caption{Empirical Test Sizes for Different Loss Functions.}
	\label{tab:SizeExtensionLoss}
	\begin{tabularx}{0.82\linewidth}{ll @{\hspace{0.7cm}} RRR l @{\hspace{0.5cm}} RRR }
		\toprule
		& & \multicolumn{3}{c}{$\mathbb{H}_0^{(1)}$} & & \multicolumn{3}{c}{$\mathbb{H}_0^{(2)}$} \\
		\cmidrule(lr){3-5} \cmidrule(lr){7-9}
		& & Str ES & Aux ES & VaR ES & & Str ES & Aux ES & VaR ES \\ 
		\midrule
		\midrule
		$n$ & & \multicolumn{7}{c}{GARCH DGP \quad and \quad $\phi(z) = 1/\sqrt{-z}$} \\
		\cmidrule(lr){3-9}
		$500$  &        & 13.80  & 14.30  & 19.40  &        & 14.60  & 14.90  & 19.10 \\
$1000$ &        & 13.40  & 13.00  & 17.90  &        & 13.00  & 12.70  & 17.50 \\
$2500$ &        & 10.90  & 10.60  & 15.70  &        & 11.20  & 11.00  & 13.20 \\
$5000$ &        & 10.30  & 10.40  & 14.50  &        &  9.30  &  9.70  & 14.00 \\

		\midrule
		\midrule
		$n$ & & \multicolumn{7}{c}{GARCH DGP \quad and  \quad $\phi(z) = -1/z$} \\
		\cmidrule(lr){3-9}
		$500$  &        & 14.80  & 14.60  & 19.30  &        & 14.50  & 14.50  & 18.30 \\
$1000$ &        & 12.40  & 11.60  & 16.00  &        & 14.10  & 14.10  & 18.40 \\
$2500$ &        & 12.20  & 11.60  & 16.10  &        & 10.70  & 10.30  & 14.50 \\
$5000$ &        & 11.00  & 10.80  & 15.30  &        &  9.90  & 10.30  & 13.60 \\

		\midrule
		\midrule
		$n$ & & \multicolumn{7}{c}{VaR/ES GAS DGP \quad and \quad $\phi(z) = 1/\sqrt{-z}$} \\
		\cmidrule(lr){3-9}
		$500$  &        & 27.60  & 27.30  & 24.30  &        & 20.70  & 20.20  & 23.20 \\
$1000$ &        & 22.10  & 21.60  & 19.10  &        & 17.60  & 17.10  & 18.90 \\
$2500$ &        & 15.00  & 16.40  & 14.90  &        & 14.60  & 14.80  & 17.90 \\
$5000$ &        & 13.30  & 13.00  & 14.70  &        & 12.00  & 12.30  & 14.20 \\

		\midrule
		\midrule
		$n$ & & \multicolumn{7}{c}{VaR/ES GAS DGP \quad and \quad $\phi(z) = -1/z$} \\
		\cmidrule(lr){3-9}
		$500$  &        & 30.90  & 31.00  & 26.70  &        & 21.60  & 21.90  & 21.90 \\
$1000$ &        & 22.90  & 22.70  & 22.80  &        & 17.70  & 17.60  & 18.10 \\
$2500$ &        & 15.00  & 14.80  & 16.30  &        & 13.90  & 13.80  & 15.90 \\
$5000$ &        & 11.00  & 11.80  & 12.50  &        & 11.70  & 11.30  & 13.70 \\

		\bottomrule 
		\addlinespace
		\multicolumn{9}{p{.8\linewidth}}{\textit{Notes:} This table presents the empirical sizes (in $\%$) of our three forecast encompassing tests for the ES for a nominal size of $10\%$ for two additional strictly consistent loss functions and for the two DGPs described in Section 3.1.
			The first and third horizontal panels consider the choices $\mathfrak{g}(z) = 0$ and $\phi(z) = 1/\sqrt{-z}$ in the loss function in (3.7) while the second and fourth panel consider the choices $\mathfrak{g}(z) = 0$ and $\phi(z) = -1/{z}$.}
	\end{tabularx}
\end{table}

\begin{table}[ht]
	\footnotesize
	\centering
	\caption{Empirical Sizes for Different Link Functions.}
	\label{tab:SizeExtensionLinkFunctions}
	\begin{tabularx}{0.82\linewidth}{ll @{\hspace{0.7cm}} RRR l @{\hspace{0.5cm}} RRR }
		\toprule
		& & \multicolumn{3}{c}{$\mathbb{H}_0^{(1)}$} & & \multicolumn{3}{c}{$\mathbb{H}_0^{(2)}$} \\
		\cmidrule(lr){3-5} \cmidrule(lr){7-9}
		& & Str ES & Aux ES & VaR ES & & Str ES & Aux ES & VaR ES \\ 
		\midrule
		\midrule
		$n$ & & \multicolumn{7}{c}{Affine Link: GARCH DGP} \\
		\cmidrule(lr){3-9}
		$500$  &        & 12.93  & 11.52  & 13.83  &        & 14.21  & 12.31  & 14.01 \\
$1000$ &        & 14.20  & 12.70  & 14.00  &        & 11.80  &  9.70  & 13.70 \\
$2500$ &        & 13.20  & 11.40  & 13.40  &        & 11.70  &  9.90  & 12.80 \\
$5000$ &        & 11.60  &  9.90  & 11.80  &        & 12.20  &  9.60  & 15.60 \\

		\midrule
		\midrule
		$n$ & & \multicolumn{7}{c}{Affine Link: VaR/ES GAS DGP} \\
		\cmidrule(lr){3-9}
		$500$  &        & 23.52  & 22.72  & 15.82  &        & 14.70  & 13.70  & 13.50 \\
$1000$ &        & 21.30  & 18.20  & 14.60  &        & 10.90  & 10.40  & 13.00 \\
$2500$ &        & 16.20  & 14.30  & 12.80  &        & 10.50  &  9.80  & 12.80 \\
$5000$ &        & 13.80  & 11.00  & 10.70  &        &  9.90  &  9.90  & 12.00 \\

		\midrule
		\midrule
		$n$ & & \multicolumn{7}{c}{Affine Link: GAS-$t$ DGP} \\
		\cmidrule(lr){3-9}
		$500$  &        & 18.63  & 16.72  & 15.21  &        & 12.80  & 10.50  & 11.40 \\
$1000$ &        & 12.90  & 10.30  & 12.00  &        & 11.20  &  9.00  &  8.80 \\
$2500$ &        & 12.80  & 11.20  & 12.20  &        & 10.50  &  8.90  &  9.20 \\
$5000$ &        & 15.10  & 12.90  & 12.70  &        & 12.80  &  9.70  & 10.60 \\

		\midrule
		\midrule
		$n$ & & \multicolumn{7}{c}{Affine Link: ES-CAViaR DGP} \\
		\cmidrule(lr){3-9}
		$500$  &        & 11.88  & 10.39  & 11.56  &        & 11.33  & 10.40  & 11.54 \\
$1000$ &        & 12.45  & 12.75  & 14.26  &        & 12.42  & 10.71  & 11.01 \\
$2500$ &        &  9.80  &  9.40  & 10.30  &        & 12.10  & 11.90  & 12.00 \\
$5000$ &        &  8.50  &  8.60  &  9.80  &        & 10.40  & 10.30  & 10.80 \\

		\midrule
		\midrule
		\\
		$n$ & & \multicolumn{7}{c}{Nonlinear Link: Nonlinear DGP} \\
		\cmidrule(lr){3-9}
		$500$  &        &  8.12  &  8.82  & 13.53  &        &  8.82  &  8.72  & 16.03 \\
$1000$ &        &  8.70  &  9.10  & 14.80  &        &  8.90  &  9.10  & 13.60 \\
$2500$ &        & 10.40  & 10.10  & 11.90  &        &  9.50  & 10.10  & 14.30 \\
$5000$ &        & 11.30  & 11.30  & 17.20  &        & 11.10  & 10.80  & 12.50 \\

		\bottomrule 
		\addlinespace
		\multicolumn{9}{p{.8\linewidth}}{\textit{Notes:} This table presents the empirical sizes (in $\%$) of our three forecast encompassing tests for the ES for a nominal size of $10\%$ for the affine and the nonlinear link functions described in Section 3.3.3.
			The upper four horizontal panels consider the affine link functions for the four DGPs described in Sections 3.1 and in Appendix \ref{sec:AdditionalDGPs}.
			The lowest panel presents results for the nonlinear link functions based on the nonlinear GARCH DGP described in (3.10).}
	\end{tabularx}
\end{table}

\begin{landscape}
	\vspace*{\fill}
	\begin{table}[!htb]
		\caption{Correlation Matrices of the VaR and ES Forecasts in the Empirical Application.}
		\label{tab:CorrelationForecasts_2000}
		\centering
		\scriptsize
		\vspace{-0.2cm}
		\begin{tabular}{ll rrrrrrrr l rrrrrrrr}
			\toprule
			\addlinespace
			\multicolumn{19}{c}{Panel A: IBM Stock} \\
			\cmidrule(lr){1-19} 
			& & \multicolumn{8}{c}{Quantile Forecasts} & & \multicolumn{8}{c}{ES Forecasts} \\
			\cmidrule(lr){3-10} \cmidrule(lr){12-19}
			& & HS  &  RM  &  GJR  & GAS & G1F & G2F & ASES & SAVES  &   & HS & RM & GJR & GAS & G1F & G2F & ASES & SAVES \\
			\cmidrule(lr){3-10} \cmidrule(lr){12-19}
			HS &   & 1.00 & 0.58 & 0.54 & 0.65 & 0.31 & 0.41 & 0.40 & 0.46 &   & 1.00 & 0.58 & 0.54 & 0.65 & 0.31 & 0.41 & 0.40 & 0.46\\
RM &   &    & 1.00 & 0.96 & 0.95 & 0.78 & 0.86 & 0.84 & 0.89 &   &    & 1.00 & 0.96 & 0.95 & 0.78 & 0.86 & 0.84 & 0.89\\
GJR &   &    &    & 1.00 & 0.92 & 0.83 & 0.91 & 0.85 & 0.85 &   &    &    & 1.00 & 0.92 & 0.83 & 0.91 & 0.85 & 0.85\\
GAS &   &    &    &    & 1.00 & 0.70 & 0.82 & 0.81 & 0.87 &   &    &    &    & 1.00 & 0.70 & 0.82 & 0.81 & 0.87\\
G1F &   &    &    &    &    & 1.00 & 0.88 & 0.88 & 0.82 &   &    &    &    &    & 1.00 & 0.88 & 0.88 & 0.82\\
G2F &   &    &    &    &    &    & 1.00 & 0.84 & 0.82 &   &    &    &    &    &    & 1.00 & 0.84 & 0.82\\
ASES &   &    &    &    &    &    &    & 1.00 & 0.96 &   &    &    &    &    &    &    & 1.00 & 0.96\\
SAVES &   &    &    &    &    &    &    &    & 1.00 &   &    &    &    &    &    &    &    & 1.00\\

			\midrule
			\midrule
			\addlinespace
			\multicolumn{19}{c}{Panel B: S$\&$P 500 Index} \\
			\cmidrule(lr){1-19} 
			& & \multicolumn{8}{c}{Quantile Forecasts} & & \multicolumn{8}{c}{ES Forecasts} \\
			\cmidrule(lr){3-10} \cmidrule(lr){12-19}
			HS &   & 1.00 & 0.66 & 0.58 & 0.67 & 0.55 & 0.55 & 0.48 & 0.57 &   & 1.00 & 0.66 & 0.58 & 0.67 & 0.55 & 0.55 & 0.48 & 0.57\\
RM &   &    & 1.00 & 0.98 & 1.00 & 0.92 & 0.91 & 0.90 & 0.97 &   &    & 1.00 & 0.98 & 1.00 & 0.92 & 0.91 & 0.90 & 0.97\\
GJR &   &    &    & 1.00 & 0.97 & 0.95 & 0.94 & 0.96 & 0.97 &   &    &    & 1.00 & 0.97 & 0.95 & 0.94 & 0.96 & 0.97\\
GAS &   &    &    &    & 1.00 & 0.90 & 0.90 & 0.90 & 0.97 &   &    &    &    & 1.00 & 0.90 & 0.90 & 0.90 & 0.97\\
G1F &   &    &    &    &    & 1.00 & 0.96 & 0.89 & 0.90 &   &    &    &    &    & 1.00 & 0.96 & 0.89 & 0.90\\
G2F &   &    &    &    &    &    & 1.00 & 0.89 & 0.90 &   &    &    &    &    &    & 1.00 & 0.89 & 0.90\\
ASES &   &    &    &    &    &    &    & 1.00 & 0.94 &   &    &    &    &    &    &    & 1.00 & 0.94\\
SAVES &   &    &    &    &    &    &    &    & 1.00 &   &    &    &    &    &    &    &    & 1.00\\

			\midrule
			\midrule
			\addlinespace
			\multicolumn{19}{c}{Panel C: DAX 30 Index} \\
			\cmidrule(lr){1-19} 
			& & \multicolumn{8}{c}{Quantile Forecasts} & & \multicolumn{8}{c}{ES Forecasts} \\
			\cmidrule(lr){3-10} \cmidrule(lr){12-19}
			HS &   & 1.00 & 0.68 & 0.53 & 0.62 & 0.67 & 0.60 & 0.57 & 0.62 &   & 1.00 & 0.68 & 0.53 & 0.62 & 0.67 & 0.60 & 0.57 & 0.62\\
RM &   &    & 1.00 & 0.92 & 0.98 & 0.93 & 0.94 & 0.95 & 0.98 &   &    & 1.00 & 0.92 & 0.98 & 0.93 & 0.94 & 0.95 & 0.98\\
GJR &   &    &    & 1.00 & 0.96 & 0.91 & 0.95 & 0.99 & 0.95 &   &    &    & 1.00 & 0.96 & 0.91 & 0.95 & 0.99 & 0.95\\
GAS &   &    &    &    & 1.00 & 0.93 & 0.95 & 0.97 & 0.99 &   &    &    &    & 1.00 & 0.93 & 0.95 & 0.97 & 0.99\\
G1F &   &    &    &    &    & 1.00 & 0.96 & 0.90 & 0.91 &   &    &    &    &    & 1.00 & 0.96 & 0.90 & 0.91\\
G2F &   &    &    &    &    &    & 1.00 & 0.94 & 0.93 &   &    &    &    &    &    & 1.00 & 0.94 & 0.93\\
ASES &   &    &    &    &    &    &    & 1.00 & 0.98 &   &    &    &    &    &    &    & 1.00 & 0.98\\
SAVES &   &    &    &    &    &    &    &    & 1.00 &   &    &    &    &    &    &    &    & 1.00\\

			\bottomrule
			\addlinespace
			\multicolumn{19}{p{0.84\linewidth}}{\textit{Notes:} This table shows the correlations of the respective quantile and ES forecasts obtained from the eight forecasting models described in Section 4 for the IBM stock, the S\&P 500 and the DAX 30 indices estimated on an in-sample window of $m=2000$ days.
				The models are abbreviated as follows: Historical simulation (HS), RiskMetrics (RM), GAS-t model (GAS), GAS one factor (G1F) and GAS two factor (G2F) model, dynamic AS-ES-CAViaR (ASES) and dynamic SAV-ES-CAViaR (SAVES).}
		\end{tabular}
	\end{table}
	\vspace*{\fill}
\end{landscape}

\begin{landscape}
	\vspace*{\fill}
	\begin{table}[!htb]
		\caption{Forecast Encompassing Test $p$-values for the IBM Stock.}
		\label{tab:IBM_pval_2000}
		\centering
		\scriptsize
		\begin{tabular}{llllllllll l lllllllll}
			\toprule
			\\
			& & \multicolumn{8}{c}{Joint VaR ES encompassing test} & & \multicolumn{8}{c}{VaR encompassing test} \\
			\cmidrule(rr){3-10} \cmidrule(rr){12-19}
			& & HS  &  RM  &  GJR  & GAS & G1F & G2F & ASES & SAVES  &   & HS & RM & GJR & GAS & G1F & G2F & ASES & SAVES \\
			\cmidrule(rr){3-10} \cmidrule(rr){12-19}
			HS &   &   & 0.000* & 0.000* & 0.000* & 0.000* & 0.000  & 0.000  & 0.000  &   &   & 0.000* & 0.000* & 0.000* & 0.000* & 0.000  & 0.000  & 0.000 \\
RM &   & 0.035* &   & 0.057* & 0.000* & 0.002* & 0.001  & 0.000  & 0.000  &   & 0.079* &   & 0.111  & 0.000* & 0.001* & 0.001  & 0.000  & 0.000 \\
GJR &   & 0.006* & 0.002* &   & 0.000* & 0.000* & 0.000  & 0.000  & 0.000  &   & 0.002* & 0.002  &   & 0.004* & 0.000* & 0.000* & 0.000  & 0.000 \\
GAS &   & 0.000* & 0.000* & 0.000* &   & 0.000* & 0.000  & 0.000* & 0.000  &   & 0.000* & 0.000* & 0.000* &   & 0.000* & 0.000  & 0.000* & 0.000 \\
G1F &   & 0.000* & 0.000* & 0.000* & 0.000* &   & 0.000* & 0.000* & 0.000  &   & 0.000* & 0.000* & 0.000* & 0.000* &   & 0.000* & 0.000* & 0.000 \\
G2F &   & 0.574  & 0.397  & 0.200  & 0.475  & 0.091* &   & 0.000  & 0.000  &   & 0.827  & 0.175  & 0.087* & 0.338  & 0.017* &   & 0.000  & 0.000 \\
ASES &   & 0.577  & 0.648  & 0.621  & 0.024* & 0.006* & 0.419  &   & 0.341  &   & 0.809  & 0.368  & 0.737  & 0.009* & 0.004* & 0.302  &   & 0.115 \\
SAVES &   & 0.393  & 0.352  & 0.265  & 0.750  & 0.111  & 0.173  & 0.730  &   &   & 0.335  & 0.920  & 0.930  & 0.766  & 0.172  & 0.522  & 0.564  &  \\

			\midrule
			\midrule
			\\
			& & \multicolumn{8}{c}{Auxiliary ES encompassing test} & & \multicolumn{8}{c}{Strict ES encompassing test} \\
			\cmidrule(rr){3-10} \cmidrule(rr){12-19}
			& & HS  &  RM  &  GJR  & GAS & G1F & G2F & ASES & SAVES  &   & HS & RM & GJR & GAS & G1F & G2F & ASES & SAVES \\
			\cmidrule(rr){3-10} \cmidrule(rr){12-19}
			HS &   &   & 0.000* & 0.000* & 0.000* & 0.001  & 0.000  & 0.000  & 0.000  &   &   & 0.000* & 0.000* & 0.000* & 0.000  & 0.000  & 0.000  & 0.000 \\
RM &   & 0.016* &   & 0.040* & 0.002* & 0.055  & 0.033  & 0.000  & 0.000  &   & 0.013* &   & 0.043* & 0.003* & 0.053  & 0.057  & 0.000  & 0.000 \\
GJR &   & 0.023* & 0.011* &   & 0.000* & 0.052  & 0.019  & 0.000  & 0.000  &   & 0.023* & 0.011* &   & 0.000* & 0.048  & 0.014  & 0.000  & 0.000 \\
GAS &   & 0.002* & 0.020* & 0.007* &   & 0.029* & 0.018  & 0.001  & 0.001  &   & 0.001* & 0.022* & 0.007* &   & 0.034* & 0.034  & 0.000  & 0.001 \\
G1F &   & 0.298  & 0.260  & 0.388  & 0.076* &   & 0.242  & 0.000  & 0.001  &   & 0.307  & 0.244  & 0.391  & 0.073* &   & 0.192  & 0.000  & 0.001 \\
G2F &   & 0.265  & 0.710  & 0.788  & 0.276  & 0.825  &   & 0.002  & 0.002  &   & 0.250  & 0.729  & 0.756  & 0.282  & 0.793  &   & 0.006  & 0.005 \\
ASES &   & 0.497  & 0.713  & 0.528  & 0.890  & 0.446  & 0.472  &   & 0.824  &   & 0.485  & 0.589  & 0.408  & 0.866  & 0.338  & 0.580  &   & 0.729 \\
SAVES &   & 0.185  & 0.183  & 0.136  & 0.570  & 0.254  & 0.212  & 0.473  &   &   & 0.255  & 0.170  & 0.125  & 0.563  & 0.222  & 0.310  & 0.454  &  \\

			\bottomrule
			\addlinespace
			\multicolumn{19}{p{.99\linewidth}}{\textit{Notes:} This table reports the p-values of the three different ES encompassing tests introduced in Section 2 and of the VaR encompassing test of \cite{GiacominiKomunjer2005} applied to returns from the IBM stock and an in-sample period of $m=2000$ days.
				The p-value in the $i$-th row and $j$-th column of the respective matrices corresponds to testing $\mathbb{H}_0$: the forecasts from model $i$ encompass the forecasts from model $j$.
				The symbol $^\ast$ denotes model pairs where both encompassing tests (test whether model $i$ encompasses model $j$ and vice versa) are significant at the $10\%$ level. 
				The models are abbreviated as follows: Historical simulation (HS), RiskMetrics (RM), GAS-t model (GAS), GAS one factor (G1F) and GAS two factor (G2F) model, dynamic AS-ES-CAViaR (ASES) and dynamic SAV-ES-CAViaR (SAVES). }
		\end{tabular}
	\end{table}
	\vspace*{\fill}
\end{landscape}

\begin{landscape}
	\vspace*{\fill}
	\begin{table}[!htb]
		\caption{Forecast Encompassing Test $p$-values for the S\&P 500 Index.}
		\label{tab:SP500_pval_2000}
		\centering
		\scriptsize
		\begin{tabular}{llllllllll l lllllllll}
			\toprule
			\\
			& & \multicolumn{8}{c}{Joint VaR ES encompassing test} & & \multicolumn{8}{c}{VaR encompassing test} \\
			\cmidrule(rr){3-10} \cmidrule(rr){12-19}
			& & HS  &  RM  &  GJR  & GAS & G1F & G2F & ASES & SAVES  &   & HS & RM & GJR & GAS & G1F & G2F & ASES & SAVES \\
			\cmidrule(rr){3-10} \cmidrule(rr){12-19}
			HS &   &   & 0.000  & 0.000  & 0.000  & 0.000* & 0.000  & 0.000* & 0.000  &   &   & 0.000  & 0.000  & 0.000  & 0.000  & 0.000  & 0.000* & 0.000 \\
RM &   & 0.389  &   & 0.000  & 0.067  & 0.000* & 0.001  & 0.000  & 0.099  &   & 0.809  &   & 0.000  & 0.206  & 0.000* & 0.001  & 0.000  & 0.070*\\
GJR &   & 0.255  & 0.281  &   & 0.682  & 0.193  & 0.035* & 0.000* & 0.860  &   & 0.505  & 0.412  &   & 0.540  & 0.177  & 0.005* & 0.000* & 0.736 \\
GAS &   & 0.128  & 0.226  & 0.000  &   & 0.000* & 0.010  & 0.000  & 0.047  &   & 0.106  & 0.134  & 0.001  &   & 0.000  & 0.003* & 0.000* & 0.013 \\
G1F &   & 0.057* & 0.092* & 0.009  & 0.056* &   & 0.039* & 0.000* & 0.027* &   & 0.170  & 0.082* & 0.003  & 0.121  &   & 0.107  & 0.001* & 0.125 \\
G2F &   & 0.309  & 0.406  & 0.000* & 0.156  & 0.096* &   & 0.005* & 0.119  &   & 0.127  & 0.166  & 0.000* & 0.056* & 0.051  &   & 0.007* & 0.094*\\
ASES &   & 0.011* & 0.479  & 0.002* & 0.421  & 0.006* & 0.048* &   & 0.662  &   & 0.008* & 0.144  & 0.000* & 0.055* & 0.003* & 0.007* &   & 0.395 \\
SAVES &   & 0.556  & 0.131  & 0.007  & 0.467  & 0.000* & 0.002  & 0.000  &   &   & 0.473  & 0.055* & 0.002  & 0.154  & 0.000  & 0.000* & 0.000  &  \\

			\midrule
			\midrule
			\\
			& & \multicolumn{8}{c}{Auxiliary ES encompassing test} & & \multicolumn{8}{c}{Strict ES encompassing test} \\
			\cmidrule(rr){3-10} \cmidrule(rr){12-19}
			& & HS  &  RM  &  GJR  & GAS & G1F & G2F & ASES & SAVES  &   & HS & RM & GJR & GAS & G1F & G2F & ASES & SAVES \\
			\cmidrule(rr){3-10} \cmidrule(rr){12-19}
			HS &   &   & 0.000  & 0.000  & 0.000  & 0.000* & 0.000  & 0.000  & 0.000  &   &   & 0.000  & 0.000  & 0.000  & 0.000* & 0.000  & 0.000  & 0.000 \\
RM &   & 0.131  &   & 0.000  & 0.033  & 0.003* & 0.020  & 0.002  & 0.072  &   & 0.119  &   & 0.000  & 0.015  & 0.002* & 0.019  & 0.002  & 0.059 \\
GJR &   & 0.544  & 0.110  &   & 0.476  & 0.252  & 0.575  & 0.376  & 0.853  &   & 0.580  & 0.108  &   & 0.432  & 0.239  & 0.621  & 0.364  & 0.900 \\
GAS &   & 0.222  & 0.181  & 0.003  &   & 0.015* & 0.100  & 0.020  & 0.794  &   & 0.220  & 0.158  & 0.003  &   & 0.017* & 0.133  & 0.019  & 0.811 \\
G1F &   & 0.039* & 0.028* & 0.008  & 0.015* &   & 0.015  & 0.000* & 0.007* &   & 0.031* & 0.035* & 0.009  & 0.019* &   & 0.016  & 0.001* & 0.008*\\
G2F &   & 0.632  & 0.516  & 0.123  & 0.242  & 0.131  &   & 0.043  & 0.293  &   & 0.501  & 0.429  & 0.101  & 0.172  & 0.157  &   & 0.029  & 0.258 \\
ASES &   & 0.779  & 0.674  & 0.454  & 0.500  & 0.056* & 0.410  &   & 0.817  &   & 0.778  & 0.651  & 0.464  & 0.497  & 0.053* & 0.515  &   & 0.802 \\
SAVES &   & 0.819  & 0.936  & 0.034  & 0.486  & 0.009* & 0.089  & 0.024  &   &   & 0.821  & 0.945  & 0.039  & 0.465  & 0.013* & 0.088  & 0.021  &  \\

			\bottomrule
			\addlinespace
			\multicolumn{19}{p{.99\linewidth}}{\textit{Notes:} This table reports the p-values of the three different ES encompassing tests introduced in Section 2 and of the VaR encompassing test of \cite{GiacominiKomunjer2005} applied to returns from the S\&P 500 index and an in-sample period of $m=2000$ days.
				The p-value in the $i$-th row and $j$-th column of the respective matrices corresponds to testing $\mathbb{H}_0$: the forecasts from model $i$ encompass the forecasts from model $j$.
				The symbol $^\ast$ denotes model pairs where both encompassing tests (test whether model $i$ encompasses model $j$ and vice versa) are significant at the $10\%$ level. 
				The models are abbreviated as follows: Historical simulation (HS), RiskMetrics (RM), GAS-t model (GAS), GAS one factor (G1F) and GAS two factor (G2F) model, dynamic AS-ES-CAViaR (ASES) and dynamic SAV-ES-CAViaR (SAVES). }
		\end{tabular}
	\end{table}
	\vspace*{\fill}
\end{landscape}

\begin{landscape}
	\vspace*{\fill}
	\begin{table}[!htb]
		\caption{Forecast Encompassing Test $p$-values for the DAX 30 Index.}
		\label{tab:DAX_pval_2000}
		\centering
		\scriptsize
		\begin{tabular}{llllllllll l lllllllll}
			\toprule
			\\
			& & \multicolumn{8}{c}{Joint VaR ES encompassing test} & & \multicolumn{8}{c}{VaR encompassing test} \\
			\cmidrule(rr){3-10} \cmidrule(rr){12-19}
			& & HS  &  RM  &  GJR  & GAS & G1F & G2F & ASES & SAVES  &   & HS & RM & GJR & GAS & G1F & G2F & ASES & SAVES \\
			\cmidrule(rr){3-10} \cmidrule(rr){12-19}
			HS &   &   & 0.000  & 0.000  & 0.000  & 0.000  & 0.000* & 0.000  & 0.000  &   &   & 0.000  & 0.000  & 0.000  & 0.000  & 0.000* & 0.000  & 0.000 \\
RM &   & 0.587  &   & 0.000  & 0.000* & 0.160  & 0.125  & 0.000  & 0.004  &   & 0.252  &   & 0.000* & 0.000* & 0.020* & 0.030* & 0.000  & 0.025 \\
GJR &   & 0.375  & 0.122  &   & 0.175  & 0.455  & 0.775  & 0.000  & 0.085* &   & 0.279  & 0.046* &   & 0.073* & 0.456  & 0.917  & 0.000  & 0.144 \\
GAS &   & 0.936  & 0.096* & 0.049  &   & 0.832  & 0.647  & 0.005  & 0.252  &   & 0.889  & 0.042* & 0.080* &   & 0.692  & 0.304  & 0.001  & 0.854 \\
G1F &   & 0.732  & 0.011  & 0.000  & 0.000  &   & 0.037* & 0.000  & 0.000  &   & 0.590  & 0.036* & 0.000  & 0.000  &   & 0.013* & 0.000  & 0.000*\\
G2F &   & 0.012* & 0.001  & 0.000  & 0.000  & 0.020* &   & 0.000  & 0.000* &   & 0.022* & 0.004* & 0.000  & 0.001  & 0.097* &   & 0.000  & 0.000*\\
ASES &   & 0.869  & 0.905  & 0.537  & 0.867  & 0.965  & 0.997  &   & 0.079* &   & 0.848  & 0.964  & 0.793  & 0.986  & 0.927  & 0.964  &   & 0.543 \\
SAVES &   & 0.673  & 0.476  & 0.047* & 0.239  & 0.273  & 0.068* & 0.000* &   &   & 0.742  & 0.760  & 0.022  & 0.072  & 0.028* & 0.029* & 0.000  &  \\

			\midrule
			\midrule
			\\
			& & \multicolumn{8}{c}{Auxiliary ES encompassing test} & & \multicolumn{8}{c}{Strict ES encompassing test} \\
			\cmidrule(rr){3-10} \cmidrule(rr){12-19}
			& & HS  &  RM  &  GJR  & GAS & G1F & G2F & ASES & SAVES  &   & HS & RM & GJR & GAS & G1F & G2F & ASES & SAVES \\
			\cmidrule(rr){3-10} \cmidrule(rr){12-19}
			HS &   &   & 0.000  & 0.000  & 0.000  & 0.000  & 0.000  & 0.000  & 0.000  &   &   & 0.000  & 0.000  & 0.000  & 0.000  & 0.000* & 0.000  & 0.000 \\
RM &   & 0.518  &   & 0.152  & 0.043  & 0.271  & 0.229  & 0.034  & 0.001  &   & 0.459  &   & 0.161  & 0.036  & 0.247  & 0.278  & 0.027  & 0.001 \\
GJR &   & 0.188  & 0.093  &   & 0.052  & 0.170  & 0.421  & 0.011  & 0.025  &   & 0.148  & 0.070  &   & 0.057  & 0.156  & 0.465  & 0.005  & 0.007 \\
GAS &   & 0.994  & 0.796  & 0.838  &   & 0.892  & 0.951  & 0.333  & 0.101  &   & 0.967  & 0.808  & 0.813  &   & 0.880  & 0.945  & 0.339  & 0.041 \\
G1F &   & 0.929  & 0.048  & 0.047  & 0.001  &   & 0.290  & 0.000  & 0.000  &   & 0.923  & 0.059  & 0.051  & 0.001  &   & 0.359  & 0.000  & 0.000 \\
G2F &   & 0.131  & 0.014  & 0.026  & 0.001  & 0.035  &   & 0.000  & 0.000  &   & 0.087* & 0.012  & 0.033  & 0.001  & 0.021  &   & 0.000  & 0.000 \\
ASES &   & 0.625  & 0.746  & 0.445  & 0.736  & 0.837  & 0.982  &   & 0.268  &   & 0.600  & 0.743  & 0.303  & 0.695  & 0.820  & 0.960  &   & 0.208 \\
SAVES &   & 0.809  & 0.662  & 0.898  & 0.800  & 0.520  & 0.929  & 0.830  &   &   & 0.766  & 0.680  & 0.991  & 0.817  & 0.612  & 0.846  & 0.694  &  \\

			\bottomrule
			\addlinespace
			\multicolumn{19}{p{.99\linewidth}}{\textit{Notes:} This table reports the p-values of the three different ES encompassing tests introduced in Section 2 and of the VaR encompassing test of \cite{GiacominiKomunjer2005} applied to returns from the DAX 30 index and an in-sample period of $m=2000$ days.
				The p-value in the $i$-th row and $j$-th column of the respective matrices corresponds to testing $\mathbb{H}_0$: the forecasts from model $i$ encompass the forecasts from model $j$.
				The symbol $^\ast$ denotes model pairs where both encompassing tests (test whether model $i$ encompasses model $j$ and vice versa) are significant at the $10\%$ level. 
				The models are abbreviated as follows: Historical simulation (HS), RiskMetrics (RM), GAS-t model (GAS), GAS one factor (G1F) and GAS two factor (G2F) model, dynamic AS-ES-CAViaR (ASES) and dynamic SAV-ES-CAViaR (SAVES). }
		\end{tabular}
	\end{table}
	\vspace*{\fill}
\end{landscape}

\begin{table}[p]
	\caption{Average Losses for the VaR and ES Forecasts for the IBM Stock.}
	\label{tab:score_IBM_2000}
	\centering
	\footnotesize
	\begin{tabularx}{\linewidth}{lc @{\hspace{0.4cm}} cc @{\hspace{0.4cm}}  CCCCCCCC }
		\toprule
		\\
		& & & & \multicolumn{8}{c}{Panel A: Forecast Combination Losses \& Joint VaR ES Encompassing Weights}  \\
		\cmidrule(lr){1-3} \cmidrule(lr){5-12} 
		Model & & Loss & & HS & RM & GJR & GAS & G1F & G2F & ASES & SAVES  \\  
		\cmidrule(lr){1-3} \cmidrule(lr){5-12} 
		\addlinespace
		HS &   & 1.544 &   &     & 1.369 & 1.381 & 1.358 & 1.380 & 1.375 & 1.326 & 1.324\\
RM &   & 1.504 &   &     &     & 1.370 & 1.365 & 1.368 & 1.370 & 1.325 & 1.321\\
GJR &   & 1.476 &   &     &     &     & 1.363 & 1.375 & 1.376 & 1.324 & 1.320\\
GAS &   & 1.420 &   &     &     &     &     & 1.360 & 1.363 & 1.325 & 1.324\\
G1F &   & 1.403 &   &     &     &     &     &     & 1.375 & 1.323 & 1.318\\
G2F &   & 1.382 &   &     &     &     &     &     &     & 1.322 & 1.316\\
ASES &   & 1.329 &   &     &     &     &     &     &     &     & 1.322\\
SAVES &   & 1.325 &   &     &     &     &     &     &     &     &    \\

		\midrule
		\midrule
		\\
		& & & & \multicolumn{8}{c}{Panel B: Forecast Combination Losses \& Strict ES Encompassing Weights}  \\
		\cmidrule(lr){1-3} \cmidrule(lr){5-12} 
		Model & & Loss	& & HS & RM & GJR & GAS & G1F & G2F & ASES & SAVES  \\  
		\cmidrule(lr){1-3} \cmidrule(lr){5-12} 
		\addlinespace
		HS &   & 1.544 &   &     & 1.369 & 1.378 & 1.355 & 1.380 & 1.367 & 1.334 & 1.323\\
RM &   & 1.504 &   &     &     & 1.370 & 1.364 & 1.369 & 1.372 & 1.324 & 1.320\\
GJR &   & 1.476 &   &     &     &     & 1.361 & 1.375 & 1.376 & 1.323 & 1.320\\
GAS &   & 1.420 &   &     &     &     &     & 1.359 & 1.364 & 1.326 & 1.323\\
G1F &   & 1.403 &   &     &     &     &     &     & 1.370 & 1.325 & 1.317\\
G2F &   & 1.382 &   &     &     &     &     &     &     & 1.323 & 1.317\\
ASES &   & 1.329 &   &     &     &     &     &     &     &     & 1.320\\
SAVES &   & 1.325 &   &     &     &     &     &     &     &     &    \\

		\midrule
		\midrule
		\\
		& & & & \multicolumn{8}{c}{Panel C: Forecast Combination Losses \& VaR Encompassing Weights}  \\
		\cmidrule(lr){1-3} \cmidrule(lr){5-12} 
		Model & & Loss	& & HS & RM & GJR & GAS & G1F & G2F & ASES & SAVES  \\  
		\cmidrule(lr){1-3} \cmidrule(lr){5-12} 
		\addlinespace
		HS &   & 1.120 &   &     & 0.993 & 1.002 & 0.988 & 0.998 & 0.997 & 0.971 & 0.969\\
RM &   & 1.008 &   &     &     & 0.998 & 0.994 & 0.988 & 0.989 & 0.970 & 0.970\\
GJR &   & 1.010 &   &     &     &     & 0.996 & 0.992 & 0.994 & 0.971 & 0.970\\
GAS &   & 1.008 &   &     &     &     &     & 0.985 & 0.988 & 0.969 & 0.970\\
G1F &   & 1.022 &   &     &     &     &     &     & 0.993 & 0.971 & 0.969\\
G2F &   & 0.998 &   &     &     &     &     &     &     & 0.971 & 0.969\\
ASES &   & 0.976 &   &     &     &     &     &     &     &     & 0.968\\
SAVES &   & 0.970 &   &     &     &     &     &     &     &     &    \\

		\bottomrule
		\addlinespace
		\multicolumn{12}{p{.98\linewidth}} {\textit{Notes:} This table shows average out-of-sample forecast losses for the eight stand-alone models described in Section 4 and the respective linear forecast combinations with estimated weights for the IBM stock.
			The  column labeled "Loss" reports the average losses for the individual forecasting models and the remaining eight columns report the average losses of the forecast combinations with the respective estimated combination parameters.
			In Panel A, the estimated weights are obtained from the underlying regression of the "joint VaR and ES encompassing test" (or equivalently from the "auxiliary ES encompassing test"), in Panel B from the "strict ES encompassing test" and in Panel C from the "VaR encompassing test". The values in Panel C are multiplied by 10.
			The losses given in Panel A and B refer to the joint VaR and ES loss function given in (2.8) whereas in Panel C, the values are obtained by using the quantile-specific piecewise linear loss function.
		}
	\end{tabularx}
\end{table}

\begin{table}[p]
	\caption{Average Losses for the VaR and ES Forecasts for the S\&P 500 Index.}
	\label{tab:score_SP500_2000}
	\centering
	\footnotesize
	\begin{tabularx}{\linewidth}{lc @{\hspace{0.4cm}} cc @{\hspace{0.4cm}}  CCCCCCCC }
		\toprule
		\\
		& & & & \multicolumn{8}{c}{Panel A: Forecast Combination Losses \& Joint VaR ES Encompassing Weights}  \\
		\cmidrule(lr){1-3} \cmidrule(lr){5-12} 
		Model & & Loss & & HS & RM & GJR & GAS & G1F & G2F & ASES & SAVES  \\  
		\cmidrule(lr){1-3} \cmidrule(lr){5-12} 
		\addlinespace
		HS &   & 1.267 &   &     & 1.031 & 0.985 & 1.020 & 0.978 & 0.995 & 0.948 & 1.009\\
RM &   & 1.157 &   &     &     & 0.976 & 1.014 & 0.982 & 0.996 & 0.960 & 1.011\\
GJR &   & 1.070 &   &     &     &     & 0.980 & 0.976 & 0.981 & 0.962 & 0.986\\
GAS &   & 1.090 &   &     &     &     &     & 0.981 & 0.993 & 0.961 & 1.011\\
G1F &   & 1.001 &   &     &     &     &     &     & 0.981 & 0.955 & 0.977\\
G2F &   & 1.009 &   &     &     &     &     &     &     & 0.961 & 0.989\\
ASES &   & 0.960 &   &     &     &     &     &     &     &     & 0.960\\
SAVES &   & 1.013 &   &     &     &     &     &     &     &     &    \\

		\midrule
		\midrule
		\\
		& & & & \multicolumn{8}{c}{Panel B: Forecast Combination Losses \& Strict ES Encompassing Weights}  \\
		\cmidrule(lr){1-3} \cmidrule(lr){5-12} 
		Model & & Loss	& & HS & RM & GJR & GAS & G1F & G2F & ASES & SAVES  \\  
		\cmidrule(lr){1-3} \cmidrule(lr){5-12} 
		\addlinespace
		HS &   & 1.267 &   &     & 1.031 & 0.986 & 1.015 & 0.979 & 0.998 & 0.952 & 1.011\\
RM &   & 1.157 &   &     &     & 0.976 & 1.003 & 0.982 & 0.996 & 0.960 & 1.011\\
GJR &   & 1.070 &   &     &     &     & 0.980 & 0.976 & 0.981 & 0.962 & 0.985\\
GAS &   & 1.090 &   &     &     &     &     & 0.981 & 0.992 & 0.961 & 1.009\\
G1F &   & 1.001 &   &     &     &     &     &     & 0.981 & 0.955 & 0.977\\
G2F &   & 1.009 &   &     &     &     &     &     &     & 0.960 & 0.990\\
ASES &   & 0.960 &   &     &     &     &     &     &     &     & 0.960\\
SAVES &   & 1.013 &   &     &     &     &     &     &     &     &    \\

		\midrule
		\midrule
		\\
		& & & & \multicolumn{8}{c}{Panel C: Forecast Combination Losses \& VaR Encompassing Weights}  \\
		\cmidrule(lr){1-3} \cmidrule(lr){5-12} 
		Model & & Loss	& & HS & RM & GJR & GAS & G1F & G2F & ASES & SAVES  \\  
		\cmidrule(lr){1-3} \cmidrule(lr){5-12} 
		\addlinespace
		HS &   & 9.426 &   &     & 7.644 & 7.407 & 7.592 & 7.420 & 7.464 & 7.307 & 7.581\\
RM &   & 7.876 &   &     &     & 7.382 & 7.591 & 7.414 & 7.442 & 7.368 & 7.566\\
GJR &   & 7.630 &   &     &     &     & 7.399 & 7.381 & 7.369 & 7.357 & 7.411\\
GAS &   & 7.876 &   &     &     &     &     & 7.400 & 7.427 & 7.364 & 7.557\\
G1F &   & 7.463 &   &     &     &     &     &     & 7.404 & 7.305 & 7.391\\
G2F &   & 7.539 &   &     &     &     &     &     &     & 7.319 & 7.436\\
ASES &   & 7.414 &   &     &     &     &     &     &     &     & 7.389\\
SAVES &   & 7.608 &   &     &     &     &     &     &     &     &    \\

		\bottomrule
		\addlinespace
		\multicolumn{12}{p{.98\linewidth}} {\textit{Notes:} This table shows average out-of-sample forecast losses for the eight stand-alone models described in Section 4 and the respective linear forecast combinations with estimated weights for the S\&P 500 index.
			The  column labeled "Loss" reports the average losses for the individual forecasting models and the remaining eight columns report the average losses of the forecast combinations with the respective estimated combination parameters.
			In Panel A, the estimated weights are obtained from the underlying regression of the "joint VaR and ES encompassing test" (or equivalently from the "auxiliary ES encompassing test"), in Panel B from the "strict ES encompassing test" and in Panel C from the "VaR encompassing test". The values in Panel C are multiplied by 100.
			The losses given in Panel A and B refer to the joint VaR and ES loss function given in (2.8) whereas in Panel C, the values are obtained by using the quantile-specific piecewise linear loss function.
		}
	\end{tabularx}
\end{table}

\begin{table}[p]
	\caption{Average Losses for the VaR and ES Forecasts for the DAX 30 Index.}
	\label{tab:score_DAX_2000}
	\centering
	\footnotesize
	\begin{tabularx}{\linewidth}{lc @{\hspace{0.4cm}} cc @{\hspace{0.4cm}}  CCCCCCCC }
		\toprule
		\\
		& & & & \multicolumn{8}{c}{Panel A: Forecast Combination Losses \& Joint VaR ES Encompassing Weights}  \\
		\cmidrule(lr){1-3} \cmidrule(lr){5-12} 
		Model & & Loss & & HS & RM & GJR & GAS & G1F & G2F & ASES & SAVES  \\  
		\cmidrule(lr){1-3} \cmidrule(lr){5-12} 
		\addlinespace
		HS &   & 1.331 &   &     & 1.140 & 1.117 & 1.125 & 1.150 & 1.150 & 1.107 & 1.121\\
RM &   & 1.202 &   &     &     & 1.116 & 1.124 & 1.138 & 1.136 & 1.108 & 1.121\\
GJR &   & 1.159 &   &     &     &     & 1.116 & 1.121 & 1.124 & 1.106 & 1.111\\
GAS &   & 1.166 &   &     &     &     &     & 1.124 & 1.124 & 1.108 & 1.120\\
G1F &   & 1.152 &   &     &     &     &     &     & 1.151 & 1.108 & 1.120\\
G2F &   & 1.173 &   &     &     &     &     &     &     & 1.108 & 1.118\\
ASES &   & 1.109 &   &     &     &     &     &     &     &     & 1.105\\
SAVES &   & 1.121 &   &     &     &     &     &     &     &     &    \\

		\midrule
		\midrule
		\\
		& & & & \multicolumn{8}{c}{Panel B: Forecast Combination Losses \& Strict ES Encompassing Weights}  \\
		\cmidrule(lr){1-3} \cmidrule(lr){5-12} 
		Model & & Loss	& & HS & RM & GJR & GAS & G1F & G2F & ASES & SAVES  \\  
		\cmidrule(lr){1-3} \cmidrule(lr){5-12} 
		\addlinespace
		HS &   & 1.331 &   &     & 1.138 & 1.118 & 1.125 & 1.151 & 1.148 & 1.110 & 1.112\\
RM &   & 1.202 &   &     &     & 1.116 & 1.124 & 1.138 & 1.131 & 1.110 & 1.111\\
GJR &   & 1.159 &   &     &     &     & 1.116 & 1.121 & 1.124 & 1.108 & 1.104\\
GAS &   & 1.166 &   &     &     &     &     & 1.124 & 1.123 & 1.110 & 1.110\\
G1F &   & 1.152 &   &     &     &     &     &     & 1.151 & 1.110 & 1.111\\
G2F &   & 1.173 &   &     &     &     &     &     &     & 1.109 & 1.106\\
ASES &   & 1.109 &   &     &     &     &     &     &     &     & 1.106\\
SAVES &   & 1.121 &   &     &     &     &     &     &     &     &    \\

		\midrule
		\midrule
		\\
		& & & & \multicolumn{8}{c}{Panel C: Forecast Combination Losses \& VaR Encompassing Weights}  \\
		\cmidrule(lr){1-3} \cmidrule(lr){5-12} 
		Model & & Loss	& & HS & RM & GJR & GAS & G1F & G2F & ASES & SAVES  \\  
		\cmidrule(lr){1-3} \cmidrule(lr){5-12} 
		\addlinespace
		HS &   & 9.853 &   &     & 8.362 & 8.142 & 8.237 & 8.416 & 8.370 & 8.097 & 8.276\\
RM &   & 8.622 &   &     &     & 8.134 & 8.215 & 8.330 & 8.291 & 8.097 & 8.277\\
GJR &   & 8.290 &   &     &     &     & 8.144 & 8.167 & 8.174 & 8.098 & 8.140\\
GAS &   & 8.447 &   &     &     &     &     & 8.235 & 8.225 & 8.097 & 8.235\\
G1F &   & 8.431 &   &     &     &     &     &     & 8.392 & 8.097 & 8.246\\
G2F &   & 8.561 &   &     &     &     &     &     &     & 8.097 & 8.217\\
ASES &   & 8.105 &   &     &     &     &     &     &     &     & 8.084\\
SAVES &   & 8.283 &   &     &     &     &     &     &     &     &    \\

		\bottomrule
		\addlinespace
		\multicolumn{12}{p{.98\linewidth}} {\textit{Notes:} This table shows average out-of-sample forecast losses for the eight stand-alone models described in Section 4 and the respective linear forecast combinations with estimated weights for the DAX 30 index.
			The  column labeled "Loss" reports the average losses for the individual forecasting models and the remaining eight columns report the average losses of the forecast combinations with the respective estimated combination parameters.
			In Panel A, the estimated weights are obtained from the underlying regression of the "joint VaR and ES encompassing test" (or equivalently from the "auxiliary ES encompassing test"), in Panel B from the "strict ES encompassing test" and in Panel C from the "VaR encompassing test". The values in Panel C are multiplied by 100.
			The losses given in Panel A and B refer to the joint VaR and ES loss function given in (2.8) whereas in Panel C, the values are obtained by using the quantile-specific piecewise linear loss function.
		}
	\end{tabularx}
\end{table}

\section{Additional Figures}
\label{sec:Figures}

\begin{figure}[hptb]
	\includegraphics[width=\linewidth]{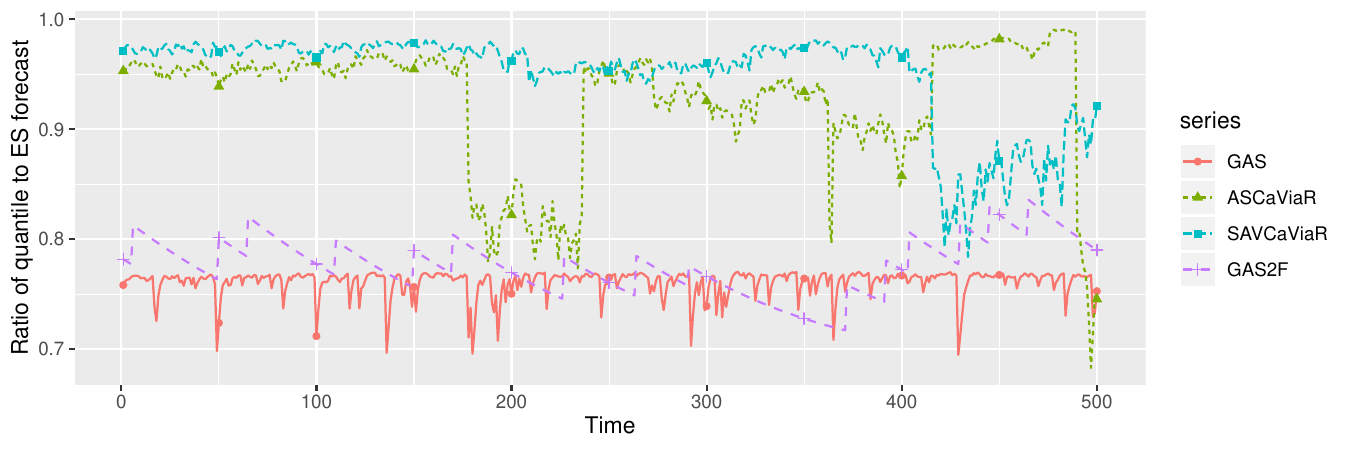}
	\caption{This figure shows the ratio of the true VaR and ES forecasts $\hat{q}_t/\hat{e}_t$ for simulated paths of the GAS-$t$ model of \cite{Creal2013}, the two factor GAS model for the VaR and ES introduced in \cite{Patton2019}, the AS-ES-CAViaR model and the SAV-ES-CAViaR model proposed by \cite{Taylor2019}.}		
	\label{fig:VaRESDegreeMisspec}
\end{figure}

\end{document}